\DeclareRobustCommand*{\Dashv}{%
  \Relbar\joinrel\mathrel{|}%
}
\newtheorem{observation}{Observation}
  {\gdef\scalefactor{#1}\begin{center}\proofSkipAmount \leavevmode}%
  {\scalebox{\scalefactor}{\DisplayProof}\proofSkipAmount \end{center} }
\newcommand{\myeq}[2]{\mathrel{\stackrel{\makebox[0pt]{\mbox{\normalfont\tiny #2}}}{#1}}}
\newcommand{\tuple}[1]{\langle#1\rangle}
\newcommand{\ttt}{\texttt}
\newcommand{\mtt}{\mathtt}
\newcommand{\mrm}{\mathsf}
\newcommand{\Sem}[1]{\llbracket{#1}\rrbracket}
\newcommand{\lst}{\mathscr{l}}
\newcommand{\mrk}{\mathscr{m}}
\newcommand{\blue}{\mathtt{blue}}
\newcommand{\none}{\mathtt{none}}
\newcommand{\red}{\mathtt{red}}
\newcommand{\green}{\mathtt{green}}
\newcommand{\any}{\mathtt{any}}
\newcommand{\dashed}{\mathtt{dashed}}
\newcommand{\grey}{\mathtt{grey}}
\newcommand{\Sat}{\vDash\,}
\newcommand{\Sata}{\vDash^\alpha\,}
\newcommand{\sou}{\mrm{s}}
\newcommand{\tar}{\mrm{t}}
\newcommand{\mV}{\mrm{m_V}}
\newcommand{\mE}{\mrm{m_E}}
\newcommand{\lV}{\mrm{l_V}}
\newcommand{\lE}{\mrm{l_E}}
\newcommand{\RG}{\rho_g(G)}
\newcommand{\RH}{\rho_{g^*}(H)}
\newcommand{\E}[1]{\exists_{\mrm{#1}}}
\newcommand{\A}[1]{\forall_{\mrm{#1}}}
\newcommand{\SLP}{\small{\text{SLP}}}
\newcommand{\SUCCESS}{\small{\text{SUCCESS}}}
\newcommand{\FAIL}{\small{\text{FAIL}}}
\newcommand{\Def}[3]%
	{$~$\begin{definition}[#1]\label{#2}\normalfont
    #3\hfill$\square$
	\end{definition}}%
\newcommand{\Prop}[3]%
	{$~$\begin{proposition}[#1]\label{#2}\normalfont
    #3
	\end{proposition}}%
\newcommand{\Theo}[3]%
	{$~$\begin{theorem}[#1]\label{#2}\normalfont
    #3
	\end{theorem}}%
\newcommand{\Ex}[3]%
	{$~$\begin{example}[#1]\label{#2}\normalfont
    #3
	\end{example}}%
\newcommand{\Lemma}[2]%
	{\begin{lemma}\label{#1}\normalfont
    #2
	\end{lemma}}%
\newcommand{\Col}[2]%
	{$~$\\\begin{corollary}\label{#1}\normalfont
    #2
	\end{corollary}}%
\newcommand{\Obsv}[2]%
	{$~$\\\begin{observation}\label{#1}\normalfont
    #2
	\end{observation}}%
\newcommand{\pproof}[1]%
	{$~$\begin{proof}\normalfont #1 \qedhere
	\end{proof}}%
\newcommand{\Remark}[1]%
	{\begin{remark}\normalfont #1
	\end{remark}}%
\newcommand{\lkr}{\langle L\leftarrow K\rightarrow R,~\Gamma\rangle}
\begin{document}

\title{Verifying Graph Programs with\\First-Order Logic (Extended Version)}

\author{Gia S. Wulandari\thanks{Supported by Indonesia Endowment Fund for Education (LPDP)}\inst{1,2}
\and
Detlef Plump\inst{1}
}

\authorrunning{G.S. Wulandari, D. Plump}

\institute{Department of Computer Science, University of York, UK
\and School of Computing, Telkom University, Indonesia
}

\maketitle
\pagestyle{plain}
\begin{abstract}
We consider Hoare-style verification for the graph programming language GP\,2. In previous work, graph properties were specified by so-called E-conditions which extend nested graph conditions. However, this type of assertions is not easy to comprehend by programmers that are used to formal specifications in standard first-order logic. In this paper, we present an approach to verify GP\,2 programs with a standard first-order logic. We show how to construct a strongest liberal postcondition with respect to a rule schema and a precondition. We then extend this construction to obtain strongest liberal postconditions for arbitrary loop-free programs. Compared with previous work, this allows to reason about a vastly generalised class of graph programs. In particular, many programs with nested loops can be verified with the new calculus.
\end{abstract}

\section{Introduction}
\label{sec:introduction}
Various Hoare-style proof systems for the graph programming language GP\,2 have been developed by Poskitt and Plump, see for example \cite{PoskittP12,Poskitt13}. These calculi use so-called E-conditions as assertions which extend nested graph conditions \cite{Pennemann09} with support for expressions. However, a drawback of E-conditions and nested graph conditions is that they are not easy to understand by average programmers who are typically used to write formal specifications in standard first-order logic. To give a simple example, the following  E-condition expresses that every node is labelled by an integer: $\forall$(\begin{tikzpicture}[scale=0.5, transform shape, minimum size=.1cm,baseline,thick]
   \node[circle, draw, label=left:\scriptsize 1] (a) at (0, 0) {$\mtt{a}$};
\end{tikzpicture}$\,,\,\exists$(\begin{tikzpicture}[scale=0.6, transform shape, minimum size=.1cm,baseline,thick]
   \node[circle, draw, label=left
   :\scriptsize 1] (a) at (0, 0) {$\mtt{a}$};
\end{tikzpicture}$\,\mid\,\mtt{int(a)}))$ $\land\,\forall$(\begin{tikzpicture}[scale=0.5, transform shape, minimum size=.1cm,baseline,thick]
   \node[circle, draw, label=left:\scriptsize 1, fill=red!50] (a) at (0, 0) {$\mtt{a}$};
\end{tikzpicture}$\,,\,\exists$(\begin{tikzpicture}[scale=0.6, transform shape, minimum size=.1cm,baseline,thick]
   \node[circle, draw, label=left
   :\scriptsize 1, fill=red!50] (a) at (0, 0) {$\mtt{a}$};
\end{tikzpicture}$\,\mid\,\mtt{int(a)}))$ $\land\,\forall$(\begin{tikzpicture}[scale=0.5, transform shape, minimum size=.1cm,baseline,thick]
   \node[circle, draw, label=left:\scriptsize 1, fill=green!50] (a) at (0, 0) {$\mtt{a}$};
\end{tikzpicture}$\,,\,\exists$(\begin{tikzpicture}[scale=0.6, transform shape, minimum size=.1cm,baseline,thick]
   \node[circle, draw, label=left
   :\scriptsize 1, fill=green!50] (a) at (0, 0) {$\mtt{a}$};
\end{tikzpicture}$\,\mid\,\mtt{int(a)}))$ $\land\,\forall$(\begin{tikzpicture}[scale=0.5, transform shape, minimum size=.1cm,baseline,thick]
   \node[circle, draw, label=left:\scriptsize 1, fill=blue!40] (a) at (0, 0) {$\mtt{a}$};
\end{tikzpicture}$\,,\,\exists$(\begin{tikzpicture}[scale=0.6, transform shape, minimum size=.1cm,baseline,thick]
   \node[circle, draw, label=left
   :\scriptsize 1, fill=blue!40] (a) at (0, 0) {$\mtt{a}$};
\end{tikzpicture}$\,\mid\,\mtt{int(a)}))$ $\land\,\forall$(\begin{tikzpicture}[scale=0.5, transform shape, minimum size=.1cm,baseline,thick]
   \node[circle, draw, label=left:\scriptsize 1, fill=gray!50] (a) at (0, 0) {$\mtt{a}$};
\end{tikzpicture}$\,,\,\exists$(\begin{tikzpicture}[scale=0.6, transform shape, minimum size=.1cm,baseline,thick]
   \node[circle, draw, label=left
   :\scriptsize 1, fill=gray!50] (a) at (0, 0) {$\mtt{a}$};
\end{tikzpicture}$\,\mid\,\mtt{int(a)}))$. Having to write two quantifiers that refer to the \emph{same} object appears unnatural from the perspective of standard predicate logic where a single universal quantifier would suffice. In the logic we introduce in this paper, the above condition is simply written as $\mrm{\forall_Vx(int(\lV(x)))}$. Both E-conditions and first-order formulas tend to get lengthy in examples, but our concern with nested graph conditions is that they require a non-standard interpretation. We believe that programmers cannot be expected to think in terms of morphisms and commuting diagrams, but should be allowed to work with a type of logic that they are familiar with.

In this paper we use assertions which are conventional first-order formulas enriched with GP\,2 expressions. We believe that these assertions are easier to comprehend by programmers than E-conditions and also offer the prospect of reusing the large range of tools available for first-order logic.

To use our assertions in Hoare-style verification, we show how to construct a strongest liberal postcondition Slp($c,r$) for a given rule schema $r$ and a precondition $c$. Based on this construction, we are able to construct a strongest liberal postcondition for any loop-free graph programs and preconditions. In addition, we are able to give syntactic conditions on host graphs which for any loop-free program express successful execution resp.\ the existence of a failing execution. With these results we obtain a verification calculus that can handle considerably more programs than the calculi in \cite{PoskittP12,Poskitt13}. In particular, many programs with nested loops can now be formally verified, which has been impossible so far.

Nevertheless, our proof calculus is not relatively complete because first-order logic is not powerful enough to express all necessary assertions. Therefore we present a semantic version of the calculus which turns out to be relatively complete. 

The remainder of this paper is structured as follows. A brief review of the graph programming language GP\,2 can be found in Section 2. In Section \ref{sec:FOL}, we introduce first-order formulas for GP\,2 programs. In Section \ref{sec:SLP}, we outline the construction of a strongest liberal postcondition for a given rule schema and first-order formula. Section \ref{sec:proofrules} presents the proof rules of a semantic and a syntactic verification calculus, and identifies the class of programs that can be verified with the syntactic calculus.
In Section \ref{sec:ex}, we demonstrate how to verify a graph program for computing a 2-colouring of an input graph. In Section \ref{sec:completeness}, we discuss the soundness and completeness of our proof calculi. Section \ref{sec:related_work} contains a comparison of our approach with other approaches in the literature. Finally, we conclude and give some topics for future work in Section \ref{sec:conclusion}.

\section{Graph programming language GP\,2}
\label{sec:GP2}
GP\,2 is a graph programming language using graph transformation systems with the double-pushout approach, which was introduced in \cite{Plump09}. In this section, we briefly introduce graph transformation systems in GP\,2. For more detail documentation of GP\,2, we refer readers to \cite{Bak15a}.

\subsection{GP\,2 graphs}

A graph is a flexible structure in representing objects and relations between them. Objects are usually represented by nodes, while edges represent relations between them. Additional information about the objects and the relations are usually written as a label of the nodes and edges. Also, sometimes rooted nodes are used to distinguish some nodes with others.

\begin{definition}[Label Alphabet]\label{def:label}
\emph{A label alphabet} $\mathcal{C}=\langle \mathcal{C}_V,\mathcal{C}_E\rangle$ is a pair comprising a set $\mathcal{C}_V$ of node labels and a set $\mathcal{C}_E$ of edge labels. \qed
\end{definition}

\Def{Graph over label alphabet; class of graphs}{def:graphs}{
\emph{A graph over label alphabet $\mathcal{C}$} is a system $G = \langle V_G, E_G, s_G, t_G, l_G, m_G, p_G\rangle$ comprising a finite set $V_G$ of nodes, a finite set $E_G$ of edges, source and target functions $s_G, t_G : E_G \rightarrow V_G$, a partial node labelling function $l_G : V_G \rightarrow \mathcal{C}_V$, an edge labelling function $m_G : E_G \rightarrow \mathcal{C}_E$, and a partial rootedness function $p_G : V_G \rightarrow \{0,1\}$. A \emph{totally labelled graph} is a graph where its node labelling and rootedness functions are total. We then denote by $\mathcal{G(C_\perp)}$ the set of all graphs over $\mathcal{C}$, and $\mathcal{G(C)}$ the set of all totally labelled graphs over $\mathcal{C}$.
}

Graphically, in this paper, we represent a node with a circle, an edge with an arrow where its tail and head represent the source and target, respectively. The label of a node is written inside the node, while the label of an edge is written next to the arrow. The rootedness of a node $v$ is represented by the line of the circle representing $v$, that is, standard circle for an unrooted node ($p(v)=0$) and bold circle for a rooted node ($p(v)=1$). To represent a node $v$ with undefined rootedness  ($p(v)=\perp$), we also use a standard circle. We use the same representation because nodes with undefined rootedness only exist in the interface of GP\,2 rules, and the interface contains only this kind of nodes so that no ambiguity will arise even when we use the same representation.

There are two kinds of graphs in GP\,2, that are host graphs and rule graphs. A label in a host graph is a pair of list and mark, while a label in a rule graph is a pair of expression and mark. Input and output of graph programs are host graphs, while graphs in GP\,2 rules are rule graphs.

\Def{GP\,2 labels}{def:GP2label}{
A set of node marks, denoted by $\mathbb{M}_V$, is the set $\{\none,$ $\red, \blue, \green, \grey\}$, while a set of edge marks, denoted by $\mathbb{M}_E$, is the set $\{\none,$ $\red, \blue, \green, \dashed\}$. A set of lists, denoted by $\mathbb{L}$, consists of all (list of) integers and strings that can be derived from the following abstract syntax:
\begin{center}\small
\begin{tabular}{lcl}
      $\mathbb{L}$ & ::= & $\mtt{empty}$ $\mid$ GraphExp $\mid$ $\mathbb{L}$ `:' $\mathbb{L}$ \\
      GraphExp & ::= & [`-'] Digit \{Digit\} $\mid$ GraphStr\\
      GraphStr & ::= & `~``~' \{Character\} ' " ' $\mid$ GraphStr `.' GraphStr
    \end{tabular}    
\end{center}
\noindent where Character is the set of all printable characters except `"' (i.e. ASCII characters 32, 33, and 35-126), while Digit is the digit set $\{0,\ldots,9\}$. 

\emph{A GP\,2 node label} is a pair $\tuple{\lst^V, \mrk^V}\in\mathbb{L}\times\mathbb{M}_V$, and a \emph{GP\,2 edge label} is a pair $\tuple{\lst^E, \mrk^E}\in\mathbb{L}\times\mathbb{M}_E$. We then denote the set of GP\,2 labels as $\mathcal{L}=\tuple{\mathcal{L}_V,\mathcal{L}_E}$.
}

The colon operator `:' is used to concatenate atomic expressions while the dot operator `.' is used to concatenate strings. The empty list is signified by the keyword $\mtt{empty}$, where it is displayed as a blank label graphically.

Basically, in a host graph, a list consists of (list of) integers and strings which are typed according to hierarchical type system as below:
\begin{center}
    \begin{tikzpicture}[remember picture,
        inner/.style={inner sep=0pt},
        outer/.style={inner sep=2pt}, scale=0.9
        ]
    \node[outer] (A1) at (0,0) {
        \begin{tikzpicture}[scale=1, transform shape]
		    \node[inner] (Aa) at (4.5,-0.5) {\footnotesize{$\mtt{char}$}};	
		    \node[inner] (Ab) at (3,-0.5) {\footnotesize{$\mtt{string}$}};	
	    	\node[inner] (Ac) at (3,0.5) {\footnotesize{$\mtt{int}$}};	
		    \node[inner] (Ad) at (1.5,0) {\footnotesize{$\mtt{atom}$}};	
		    \node[inner] (Ae) at (0,0) {\footnotesize{$\mtt{list}$}};	
    		\draw[white] (Aa) to node[black] {$\supseteq$} (Ab);
    		\draw[white] (Ab) to node[black,rotate=-45] {$\supseteq$} (Ad);
    		\draw[white] (Ac) to node[black,rotate=45] {$\supseteq$} (Ad);
    		\draw[white] (Ad) to node[black] {$\supseteq$} (Ae);
		\end{tikzpicture}};
	\end{tikzpicture}
\end{center}
where the domain for $\mtt{list, atom, int, string,}$ and $\mtt{char}$ is $\mathbb{Z}~\cup$ Char$^*$)$^*, \mathbb{Z}~\cup$ Char$^*, \mathbb{Z},$ \{Char\}$^*$, and Char respectively.

\Def{Labels of rules in GP\,2}{def:RSlabel}{
Let $\mathbb{E}$ be the set of all expressions that can be derived from the syntactic class List in the following grammar:
\begin{center}\small
\begin{tabular}{lcl}
     $\mathbb{E}$ & ::= & List\\
      List & ::= & $\mtt{empty}$ $\mid$ Atom $\mid$ List \lq:' List $\mid$ ListVar \\
      Atom & ::= & Integer $\mid$ String $\mid$ AtomVar \\
      Integer & ::= & [\lq-']~Digit~\{Digit\}~$\mid$~\lq('Integer\lq)'  $\mid$ IntVar \\
       & & $\mid$ Integer (\lq+' $\mid$ \lq-' $\mid$ \lq*' $\mid$ \lq/') Integer \\
       & & $\mid$ ($\mtt{indeg}$ $\mid$ $\mtt{outdeg}$) \lq('NodeId\lq)' \\
       & & $\mid$ $\mtt{length}$ \lq('AtomVar $\mid$ StringVar $\mid$ ListVar\lq)' \\
      String & ::= & Char $\mid$ String `.' String $\mid$ StringVar\\
      Char & ::= & ` `` '\{Character\}` " ' $\mid$ CharVar\\
    \end{tabular}
\end{center}
\noindent where ListVar, AtomVar, IntVar, StringVar, and CharVar represent variables of type $\mtt{list, atom,}$ $\mtt{int, string,}$ and $\mtt{char}$ respectively. Also, NodeId represents node identifiers.

\emph{Label alphabet} for left and right-hand graphs of a GP\,2 rule, denoted by $\mathcal{S}$, contains all pairs node label $\tuple{\lst^V,\mrk^V}\in\mathbb{E}\times(\mathbb{M}_V\cup\{\any\})$ and edge label $\tuple{\lst^E,\mrk^E}\in\mathbb{E}\times(\mathbb{M}_E\cup\{\any\})$}

\Def{GP\,2 host graphs and rule graphs}{def:GP2graphs}{
A \emph{host graph} $G$ is a graph over $\mathcal{L}$, and a \emph{rule graph} $H$ is a graph over $\mathcal{S}$. A host graph (or rule graph) $G$ has a node labelling function $l^V_G=\tuple{\lst^V_G,\mrk^V_G}$ such that for every node $v\in V_G$, $\lst^V_G(v)$ is defined if and only if $\mrk^E_G(v)$ is defined. Similarly, for every edge $e\in E_G$, $\lst^E_G(e)$ is defined iff $\mrk^E_G(e)$ is defined.
}

If we consider the grammars of Definition \ref{def:label} and Definition \ref{def:RSlabel}, it is obvious that $\mathbb{L}$ is part of expressions that can be derived in the latter grammar. Hence, $\mathcal{L}\subset\mathcal{S}$, which means we can consider host graphs as special cases of rule graphs. From here, we may refer `rule graphs' simply as `graphs', which also means host graphs are included.

Syntactically, a graph in GP\,2 is written based on the following syntax:
\begin{center}\small
\begin{tabular}{lcl}
      Graph & ::= & [Position] `$\mid$' {Nodes} `$\mid$' {Edges}\\
      Nodes & ::= & `(' NodeId [`(R)'] `,' Label [ `,' Position ] `)'\\
      Edges & ::= & `(' EdgeId [`(B)']`,' NodeId `,' NodeId `,' Label `)'\\
    \end{tabular}
\end{center}

\noindent where Position is a set of floating-point cartesian coordinates to store layout information for graphical editors, NodeId and EdgeId are sets of node and edge identifiers, and Label is set of labels as defined in Definition \ref{def:label} and Definition \ref{def:RSlabel}. Also, (R) in Nodes is used for rooted nodes while (B) in Edges is used for bidirectional edges. Bidirectional edges may exist in rule graphs but not in host graphs.

The marks \ttt{red}, \ttt{green}, \ttt{blue} and \ttt{grey} are graphically represented by the obvious colours while \ttt{dashed} is represented by a dashed line. The wildcard mark $\mtt{any}$ is represented by the colour magenta. 

Node labels are undefined only in the interface graphs of rule schemata. This allows rules to relabel nodes. Similarly, the root function is undefined only for the nodes of interface graphs. The purpose of root nodes is to speed up the matching of rule schemata \cite{Bak15a,PlumpB12}.

\Ex{A graph}{ex:graph}{Let $G$ be a graph with $V_G=\{1,2,3\}, E_G=\{e1,e2\}, s_G=\{e1\mapsto 1, e2\mapsto 1\}, t_G=\{e1\mapsto 2, e2\mapsto 3\}, l_G=\{1\mapsto \tuple{a,\none}, 2\mapsto \tuple{b,\red}, 3\mapsto \tuple{a+2,\none}\}, m_G=\{e1\mapsto \tuple{d,\none}, e2\mapsto \tuple{e,\dashed}\},$ and $p_G=\{1\mapsto 0, 2\mapsto 1, 3\mapsto 0\}$. Graphically, $G$ can be seen as the following graph:
\begin{center}
    \begin{tikzpicture}[remember picture,
  inner/.style={circle,draw,minimum size=18pt},
  outer/.style={inner sep=2pt}, scale=0.9
  ]
  \node[outer] (A) at (0,0) {
  \begin{tikzpicture}[scale=0.8, transform shape]
		\node[inner, label=below:\tiny 1] (Aa) at (0,0) {$\mtt{a}$};	
		\node[inner, label=below:\tiny 2, fill=red, ultra thick] (Ab) at (1.5,0) {$\mtt{b}$};	
        \node[inner, label=below:\tiny 3] (Ac) at (-1.5,0) {$\mtt{a+2}$};	
		\draw[-latex] (Aa) to node[above] {$\mtt{d}$} (Ab);
		\draw[-latex, dashed] (Aa) to node[above] {$\mtt{e}$} (Ac);
		\end{tikzpicture}};
	\end{tikzpicture}
\end{center}
Syntactically in GP\,2, $G$ is written as follows:
\[
\small{\mtt{~\mid~(1, a)~(2(R), b\#red)~(3, a+2) \mid~(e1, 1, 2, d)~(e2, 1, 3, e\#dashed)}}\]
}        

To show a relation between graphs, which are what we do in graph transformations, we use graph morphism. In GP\,2, in addition to graph morphism, we also have graph premorphisms which is similar to graph morphisms but not considering node and edge labels.

\Def{Graph morphisms}{def:morphisms}{
Given two graphs $G$ and $H$. A graph morphism $g:G\rightarrow H$ is a pair of mapping $g=\langle g_V:V_G\rightarrow V_H, g_E:E_G\rightarrow E_H\rangle$ such that for all nodes and edges in $G$, sources, targets, labels, marks, and rootedness are preserved. That is: $g_V\circ s_G = s_H\circ g_E$, $g_V\circ t_G = t_H\circ g_E$, $l_H(g_V(x)) = l_G(x)$, $m_H(g_E(y)) = m_G(y)$ for all $x\in V_G$ such that $l_G(x)\neq\perp$ and all $y\in E_G$ such that $m_G(y)\neq\perp$. Also, for all $v\in V_G,$ such that $p_G(v)\neq\perp$ $p_H(g_V(v))=p_G(v)$. A graph morphism $g$ is injective (surjective) if both $g_V$ and $g_E$ are injective (surjective). A graph morphism $g : G \rightarrow H$ is an \emph{isomorphism} if $g$ is both injective and surjective, also satisfies $l_H(g_V(v)) = \perp$ for all nodes $v$ with $l_G(v) = \perp$ and $v\in r_G$ iff $g(v)\in r_H$ for all $v\in V_G$. Furthermore. we call a morphism $g$ as an \emph{inclusion} if $g(x)=x$ for all $x$ in $G$.
}

\Def{Premorphisms}{def:premorphism}{
Given a rule graph $L$ and a host graph $G$. A premorphism $g:L\rightarrow G$ consists of two injective functions $g_V:V_L\rightarrow V_G$ and $g_E:E_L\rightarrow E_G$ that preserve sources, targets, and rootedness.
}

\subsection{Conditional rule schemata}
Like traditional rules in graph transformation that use double-pushout approach, rules in GP\,2 (called rule schemata) consists of a left-hand graph, an interface graph, and a right-hand graph. In addition, GP\,2 also allows a condition for the left-hand graph. When a condition exists, the rule is called a conditional rule schema.



\Def{Rule schemata}{def:RS}{
A \emph{rule schema} $r=\langle L\leftarrow K\rightarrow R\rangle$ comprises totally labelled rule graphs $L$ and $R$, a graph $K$ containing only unlabelled nodes with undefined rootedness, also inclusions $K\rightarrow L$ and $K\rightarrow R$. All list expressions in $L$ are simple (i.e. no arithmetic operators, contains at most one occurrence of a list variable, and each occurrence of a string sub-expression contains at most one occurrence of a string variable). Moreover, all variables in $R$ must also occur in $L$, and every node and edge in $R$ whose mark is $\mtt{any}$ has a preserved counterpart item in $L$. An \emph{unrestricted rule schema} is a rule schema without restriction on expressions and marks in its left and right-hand graph.
}

\Remark{
Note that the left and right-hand graph of a rule schema can be rule graphs or host graphs since a host graph is a special case of rule graphs. In GP\,2, we only consider rule schemata (with restrictions). In this paper, we use unrestricted rule schemata to be able to express the properties of the inverse of a rule schema.}  

In GP\,2, a condition can be added to a rule schema. This condition expresses properties that must be satisfied by a match of the rule schema. The variables occur in a rule schema condition must also occur in the left-hand graph of the rule schema.

\Def{Conditional rule schemata}{def:CRS}{
A conditional rule schema is a pair $\langle r,\Gamma\rangle$ with $r$ a rule schema and $\Gamma$ a condition that can be derived from Condition in the grammar below:
\begin{center}\small
    \begin{tabular}{lcl}
        Condition & ::= & ($\mtt{int \mid char \mid string \mid atom}$) `('Var`)' \\
        && $\mid$ List (`$\mtt{=}$' $\mid$ `\ttt{!=}') List\\
        && $\mid$ Integer (`\ttt{>}' $\mid$ `\ttt{>=}' $\mid$ `\ttt{<}' $\mid$ `\ttt{<=}') Integer\\
        && $\mid$ $\mtt{edge}$ `(' NodeId `,' NodeId [`,' List [Mark]] `)'\\
        && $\mid$ $\mtt{not}$ Condition\\
        && $\mid$ Condition ($\mtt{and}$ $\mid$ $\mtt{or}$) Condition\\
        && $\mid$ `(' Condition `)'\\
        Var & ::= & ListVar $\mid$ AtomVar $\mid$ IntVar $\mid$ StringVar $\mid$ CharVar\\
        Mark & ::= & $\mtt{red \mid green \mid blue \mid dashed \mid any}$
    \end{tabular}
\end{center}
such that all variables that occur in $\Gamma$ also occur in the left-hand graph of $r$.
}

Left-hand graph of a rule schema consists of a rule graph, while a morphism is a mapping function from a host graph. To obtain a host graph from a rule graph, we can assign constants for variables in the rule graph. For this, here we define assignment for labels.

A conditional rule schema $\tuple{L\gets K\to R,\, \Gamma}$ is applied to a host graph $G$ in stages: (1) evaluate the expressions in $L$ and $R$ with respect to a premorphism $g\colon L \to G$ and a label assignment $\alpha$, obtaining an instantiated rule $\tuple{L^{g,\alpha}\gets K\to R^{g,\alpha}}$; (2) check that $g\colon L^{g,\alpha} \to G$ is label preserving and that the evaluation of $\Gamma$ with respect to $g$ and $\alpha$ returns true; (3) construct two natural pushouts based on the instantiated rule and $g$.

\Def{Label assignment}{def:assign}{
Consider a rule graph $L$ and the set $X$ of all variables occurring in $L$. For each $x\in X$, let dom$(x)$ denotes the domain of $x$ associated with the type of $x$. A \emph{label assignment} for $L$ is a triple $\alpha=\tuple{\alpha_\mathbb{L},\, \mu_V,\, \mu_E}$ where $\alpha_\mathbb{L}\colon X\rightarrow\mathbb{L}$ is a function such that for each $x\in X$, $\alpha_\mathbb{L}(x)\in$\,dom$(x)$, and $\mu_V\colon V_L\to \mathbb{M}_V\backslash\{\mtt{none}\}$ and $\mu_E\colon E_L\to \mathbb{M}_E\backslash\{\mtt{none}\}$ are partial functions assigning a mark to each node and edge marked with \ttt{any}.
}

For a conditional rule schema $\lkr$ with the set $X$ of all list variables in $L$, set $Y$ (or $Z$) of all nodes (or edges) in $L$ whose mark is $\any$, and label assignment $\alpha_L$, we denote by $L^\alpha$ the graph $L$ after the replacement of every $x\in X$ with $\alpha_\mathbb{L}(x)$, every $\mrk^V_L(i)$ for $i\in Y$ with $\mu_{V}(i)$, and every $\mrk^E_L(i)$ for $i\in Z$ with $\mu_{E}(i)$. Then for an injective graph morphism $g:L^\alpha\rightarrow G$ for some host graph $G$, we denote by $\Gamma^{g,\alpha}$ the condition that is obtained from $\Gamma$ by substituting $\alpha_\mathbb{L}(x)$ for every variable $x$, $g(v)$ for every $v\in V_L$, and $g(e)$ for every $e\in E_L$.

The satisfaction of $\Gamma^{g,\alpha}$ in $G$ is required for the application of a conditional rule schema. In addition, the application also depends on the dangling condition, which is a condition that asserts the production of a graph after node removal. 

\Def{Dangling condition; match}{def:dang}{
Let $r=L\leftarrow K\rightarrow R$ be a rule schema with host graphs $L$ and $R$. Let also $G$ be a host graph, and $g:L\rightarrow G$ be an injective morphism. The \emph{dangling condition} is a condition where no edge in $G-g(L)$ is incident to any node in $g(L-K)$. When the dangling condition is satisfied by $g$, we say that $g$ is a \emph{match} for $r$.}

Since a rule schema has an unlabelled graph as its interface, a natural pushout, i.e. a pushout that is also a pullback, is required in a rule schema application. This approach is introduced in \cite{HabelPlump02c} for unrooted graph programming. The approach is the modified for rooted programming in \cite{Bak15a,Campbell19}.

\Def{Direct derivation; comatch}{def:dder}{
A \emph{direct derivation} from a host graph $G$ to a host graph $H$ via a rule $r = \tuple{L\leftarrow K\rightarrow R}$ consists of a natural double-pushout as in \figurename~\ref{fig:dder}, where $g:L\rightarrow G$ and $g^*:R\rightarrow H$ are injective morphisms. If there exists such direct derivation, we write $G \Rightarrow_{r,g} H$, and we say that $g^*$ is a \emph{comatch} for $r$.  
\begin{figure}
    \centering
    \begin{tikzpicture}[scale=0.7, transform shape]
	\node (2) at (0.75, 0) {(1)};
	\node (a) at (1.5, 0.75) {$K$};
	\node (b) at (0, 0.75) {$L$};
	\node (c) at (3, 0.75) {$R$};
	\node (d) at (1.5, -0.75) {$D$};
	\node (1) at (2.25, 0) {(2)};
	\node (e) at (3, -0.75) {$H$};
	\node (f) at (0, -0.75) {$G$};
	\draw[->] (a) to node {} (b);
	\draw[->] (a) to node {} (c);
	\draw[->] (a) to node {} (d);
	\draw[->] (d) to node {} (e);
	\draw[->] (d) to node {} (f);
	\draw[->] (c) to node [right]{$g^*$} (e);
	\draw[->] (b) to node [left]{$g$} (f);
    \end{tikzpicture}
    \caption{A direct derivation for a rule $r=\tuple{L\leftarrow K\rightarrow R}$}
    \label{fig:dder}
\end{figure}
}

Note that we require natural double-pushout in direct derivation. We use a natural pushout to have a unique pushout complement up to isomorphism in relabelling graph transformation\cite{HabelPlump02c,HristakievP16}. In \cite{Bak15a}, a graph morphism preserves rooted nodes while here we require a morphism to preserve unrooted nodes as well. We require the preservation of unrooted nodes to prevent a non-natural pushout as can be seen in \figurename~\ref{fig:NPOPO} \cite{Campbell19}. In addition, we need a natural double-pushout because we want to have invertible direct derivations.

\begin{figure}
    \centering
    \begin{tikzpicture}[scale=0.9, transform shape]
	\node (1) at (0.5, 0) {\tiny{(NPO)}};
	\node (2) at (1.5, 0) {\tiny{\cancel{(NPO)}}};
	\node (K) at (1, 0.5) {
	\begin{tikzpicture}[scale=0.4, transform shape]
		\node[circle,draw,minimum size=18pt] (Aa) at (0,0) {};	
		\end{tikzpicture}
	};
	\node (L) at (0, 0.5) {\begin{tikzpicture}[scale=0.4, transform shape]
		\node[circle,draw,minimum size=18pt] (Aa) at (0,0) {};	
		\end{tikzpicture}};
	\node (R) at (2, 0.5) {\begin{tikzpicture}[scale=0.4, transform shape]
		\node[circle,draw,minimum size=18pt, ultra thick] (Aa) at (0,0) {};	
		\end{tikzpicture}};
	\node (D) at (1, -0.5) {\begin{tikzpicture}[scale=0.4, transform shape]
		\node[circle,draw,minimum size=18pt,ultra thick] (Aa) at (0,0) {};	
		\end{tikzpicture}};
	\node (H) at (2, -0.5) {\begin{tikzpicture}[scale=0.4, transform shape]
		\node[circle,draw,minimum size=18pt,ultra thick] (Aa) at (0,0) {};	
		\end{tikzpicture}};
	\node (G) at (0, -0.5) {\begin{tikzpicture}[scale=0.4, transform shape]
		\node[circle,draw,minimum size=18pt,ultra thick] (Aa) at (0,0) {};	
		\end{tikzpicture}};
	\draw[->] (K) to node {} (L);
	\draw[->] (K) to node {} (R);
	\draw[->] (K) to node {} (D);
	\draw[->] (D) to node {} (H);
	\draw[->] (D) to node {} (G);
	\draw[->] (R) to node {} (H);
	\draw[->] (L) to node {} (G);
    \end{tikzpicture}
    \caption{Non-natural double-pushout}
    \label{fig:NPOPO}
\end{figure}

The natural double-pushout construction such that we have natural double-pushout is described in \cite{Bak15a,Campbell19}, that are:
 \vspace{-\topsep}\begin{enumerate}
    \item To obtain $D$, remove all nodes and edges in $g(L−K)$ from $G$. For all $v\in V_K$ with $l_K(v)=\perp$, define $l_D(g_V(v))=\perp$. Also, define $p_D(g_V(v))=\perp$ for all $v\in V_K$ where $p_K(v)=\perp$.
    \item Add all nodes and edges, with their labels and rootedness, from $R − K$ to D. For $e\in E_R −E_K$, $s_H(e) = s_R(e)$ if $s_R(E) \in V_R −V_K$, otherwise $s_H(e) = g_V (s_R(e))$. Targets are defined analogously.
    \item For all $v \in V_K$ with $l_K (v) = \perp$, define $l_H (g_V (v)) = l_R(v)$. Also, for the injective morphism $R\rightarrow H$ and $v\in V_K$ where $p_K(v)=\perp,$ define $p_H(g^*_V(v)) = p_R(v)$. The resulting graph is $H$.
\end{enumerate}

\noindent Direct derivations transform a host graph via a rule whose the left and right-hand graph are totally labelled host graphs. However, a conditional rule schema contains a condition, and its left or right-hand graph may not be a host graph. Hence, we need some additional requirements for the application of a conditional rule schema on a host graph.

\begin{definition}[Conditional rule schema application]\label{def:ruleapp}\normalfont
Given a conditional rule schema $r=\lkr$, and host graphs $G, H$. $G$ directly derives $r$, denoted by $G\Rightarrow_{r,g} H$ (or $G \Rightarrow_r H$), if there exists a premorphism $g: L \rightarrow G$ and a label assignment $\alpha_L$ such that:
\vspace{-\topsep}\begin{enumerate}
    \item[(i)] $g:L^\alpha\rightarrow G$ is an injective morphism, 
    \item[(ii)] $\Gamma^{g,\alpha}$ is true,
    \item[(iii)] $G \Rightarrow_{r^{g,\alpha},g} H$.\qed
\end{enumerate}\end{definition}

A rule schema $r$ (without condition) can be considered as a conditional rule schema $\tuple{r,\mtt{true}}$, which means in its application, the point (ii) in the definition above is a valid statement for every unconditional rule schema $r$.

Syntactically, a conditional rule schema in GP\,2 is written as follows:
\begin{center}\small
\begin{tabular}{lll}
RuleDecl & ::= & RuleId `(' [ VarList \{`:' VarList\} ] `;' `)'\\
&& Graphs Interface [\ttt{where} Condition]\\
VarList & ::= & Variable \{`,' Variable\} `:' Type\\
Graphs & ::= & `[' Graph `]' `$=>$' `[' Graph `]'\\
Interface & ::= & \ttt{interface} `=' `\{' [NodeId \{`,' NodeId\}]`\}'\\
Type & ::= & $\mtt{int~\mid~char~\mid~string~\mid~atom~\mid~list}$
\end{tabular}
\end{center}

\noindent where Condition is the set of GP\,2 rule conditions as defined in Definition \ref{def:CRS} and Variable represents variables of all types. Graph represent rule graphs, where bidirectional edges may exist. Bidirectional edges and $\any$-marks are allowed in the right-hand graph if there exist preserved counterpart item in the left-hand graph. 

A rule schema with bidirectional edges can be considered as a set of rules with all possible direction of the edges. For example, a rule schema with one bidirectional edge between node $u$ and $v$ can be considered as two rule schemata, where one rule schema has an edge from $u$ to $v$ while the other has an edge from $v$ to $u$.

\subsection{Syntax and operational semantics of graph programs}
A GP\,2 graph program consists of a list of three declaration types: rule declaration, main procedure declaration, and other procedure declaration. A main declaration is where the program starts from so that there is only one main declaration allowed in the program, and it consists of a sequence of commands. For more details on the abstract syntax of GP 2 programs, see \figurename$~$\ref{fig:GP2syntax}, where RuleId and ProcId are identifiers that start with lower case and upper case respectively.

\begin{figure}[!h]
\centering\small
\begin{tabular}{lll}
Prog & ::= & Decl \{Decl\}\\
Decl & ::= & MainDecl $\mid$ ProcDecl $\mid$ RuleDecl\\
MainDecl & ::= & $\mtt{Main}$ `=' ComSeq\\
ProcDecl & ::= & ProcId `=' Comseq\\
ComSeq & ::= & Com \{`;' Com\}\\
Com & ::= & RuleSetCall $\mid$ ProcCall\\
&& $\mid~\mtt{if}$ ComSeq $\mtt{ then }$ ComSeq [$\mtt{else }$ ComSeq]\\
&& $\mid~\mtt{try }$ ComSeq [$\mtt{ then }$ ComSeq] [$\mtt{else }$ ComSeq]\\
&& $\mid~$ComSeq `!'\\
&& $\mid~$ComSeq $\mtt{ or }$ ComSeq\\
&& $\mid~$`(' ComSeq `)'\\
&& $\mid~\mtt{break}~\mid~\mtt{skip}~\mid~\mtt{fail}$\\
RuleSetCall & ::= & RuleId $\mid$ `\{' [RuleId \{ `,' RuleId\}] `\}'\\
ProcCall & ::= & ProcId
\end{tabular}
\caption{Abstract syntax of GP 2 programs}
\label{fig:GP2syntax}
\end{figure}

Other than executing a set of rule schemata, a program can also execute some commands sequentially by using `;'. There also exist \texttt{if} and \texttt{try} as branching commands, where the program will execute command after \texttt{then} when the condition is satisfied or \texttt{else} if the condition is not satisfied. However, as we can see in the syntax of GP 2 in \figurename~\ref{fig:GP2syntax}, we have command sequence as the condition of branching commands instead of a Boolean expression. Here, we say that the condition is satisfied when the execution of command in the condition terminates with a result graph (that is, it neither diverges nor fails) and it is not satisfied if the execution yields failure. 

The difference between \texttt{if} and \texttt{try} lies in the host graph that is used after the evaluation of conditions. For \texttt{if}, the program will use the host graph that is used before the examination of the condition. Otherwise for \texttt{try}, if the condition is satisfied, then the program will execute the graph obtained from applying the condition or the previous graph if the condition is not satisfied. Other than branching commands, there is also a loop command `!' (read as ``as long as possible"). It executes the loop-body as long as the command does not yield failure. Like a loop in other programming languages, a !-construct can result in non-termination of a program.

Configurations in GP 2 represents a program state of program execution in any stage. Configurations are given by (\texttt{ComSeq}$\times\mathcal{G}(\mathcal{L})$)$~\cup~\mathcal{G}(\mathcal{L})~\cup~$(\texttt{fail})), where $\mathcal{G}(\mathbb{L})$ consists of all host graphs. This means that a configuration consists either of unfinished computations, represented by command sequence together with current graph; only a graph, which means all commands have been executed; or the special element \texttt{fail} that represents a failure state. A small step transition relation $\rightarrow$ on configuration is inductively defined by inference rules shown in \figurename~\ref{fig:infRule-core} and \figurename~\ref{fig:infRule-derived} where $\mathcal{R}$ is a rule set call; $C,P,P'$, and $Q$ are command sequences; and $G$ and $H$ are host graphs.

\begin{figure}[!h]
\centering
\scalebox{0.85}{
\begin{tabular}{lll}
~[Call$_1$]$\displaystyle\frac{G\Rightarrow_RH}{\langle R,G\rangle\rightarrow H}$&$~~~~~~~$&[Call$_2$]$\displaystyle\frac{G\nRightarrow_R}{\langle R,G\rangle\rightarrow \texttt{fail}}$\\~\\
~[Seq$_1$]$\displaystyle\frac{\langle P,G\rangle\rightarrow\langle P',H\rangle}{\langle P;Q,G\rangle\rightarrow\langle P';Q,H\rangle}$&&[Seq$_2$]$\displaystyle\frac{\langle P,G\rangle\rightarrow H}{\langle P;Q,G\rangle\rightarrow\langle Q,H\rangle}$\\~\\
~[Seq$_3$]$\displaystyle\frac{\langle P,G\rangle\rightarrow\texttt{fail}}{\langle P;Q,G\rangle\rightarrow\texttt{fail}}$&&[Break]$\displaystyle\frac{}{\langle \texttt{break};P,G\rangle\rightarrow\langle \texttt{break},G\rangle}$\\~\\
~[If$_1$]$\displaystyle\frac{\langle C,G\rangle\rightarrow ^+ H}{\langle \texttt{if }C\texttt{ then }P\texttt{ else }Q,G\rangle\rightarrow\langle P,G\rangle}$&&[If$_2$]$\displaystyle\frac{\langle C,G\rangle\rightarrow ^+ \texttt{fail}}{\langle \texttt{if }C\texttt{ then }P\texttt{ else }Q,G\rangle\rightarrow\langle Q,G\rangle}$\\~\\
~[Try$_1$]$\displaystyle\frac{\langle C,G\rangle\rightarrow ^+ H}{\langle \texttt{try }C\texttt{ then }P\texttt{ else }Q,G\rangle\rightarrow\langle P,H\rangle}$&&[Try$_2$]$\displaystyle\frac{\langle C,G\rangle\rightarrow ^+ \texttt{fail}}{\langle \texttt{try }C\texttt{ then }P\texttt{ else }Q,G\rangle\rightarrow\langle Q,G\rangle}$\\~\\
~[Loop$_1$]$\displaystyle\frac{\langle P,G\rangle\rightarrow ^+H}{\langle P!,G\rangle\rightarrow\langle P!,H\rangle}$&&[Loop$_2$]$\displaystyle\frac{\langle P,G\rangle\rightarrow ^+\texttt{fail}}{\langle P!,G\rangle\rightarrow H}$\\~\\
~[Loop$_3$]$\displaystyle\frac{\langle P,G\rangle\rightarrow ^*\langle\texttt{break}, H\rangle}{\langle P!,G\rangle\rightarrow H}$&&
\end{tabular}}
\caption{Inference rules for core commands \cite{Plump12a}}
\label{fig:infRule-core}
\end{figure}

\begin{figure}[!h]
\centering
\scalebox{0.85}{
\def\arraystretch{2}\tabcolsep=1.5pt
\begin{tabular}{lll}
~[Or$_1$] $\langle P\texttt{ or }Q,G\rangle\rightarrow\langle P,G\rangle$&~~~~~&[Or$_2$] $\langle P\texttt{ or }Q,G\rangle\rightarrow\langle Q,G\rangle$\\
~[Skip$_1$] $\langle \texttt{skip},G\rangle\rightarrow G$&&[Fail] $\langle \texttt{fail},G\rangle\rightarrow\texttt{fail}$\\
\multicolumn{3}{l}{~[If$_3$] $\langle\texttt{if }C\texttt{ then }P,G\rangle\rightarrow\langle\texttt{if }C\texttt{ then }P\texttt{ else skip},G\rangle$}\\
\multicolumn{3}{l}{~[Try$_3$] $\langle\texttt{try }C\texttt{ then }P,G\rangle\rightarrow\langle\texttt{try }C\texttt{ then }P\texttt{ else skip},G\rangle$}\\
\multicolumn{3}{l}{~[Try$_4$] $\langle\texttt{try }C\texttt{ else }Q,G\rangle\rightarrow\langle\texttt{try }C\texttt{ then skip else }Q,G\rangle$}\\
\multicolumn{3}{l}{~[Try$_4$] $\langle\texttt{try }C,G\rangle\rightarrow\langle\texttt{try }C\texttt{ then skip else skip},G\rangle$}

\end{tabular}
}
\caption{Inference rules for derived commands \cite{Plump12a}}
\label{fig:infRule-derived}
\end{figure}

The semantics of programs is given by the semantic function $\llbracket\_\rrbracket$ that maps an input graph $G$ to the set of all possible results of executing a program $P$ on $G$. The application of $\llbracket P\rrbracket$ to $G$ is written $\llbracket P\rrbracket G$. The result set may contain proper results in the form of graphs or the special values \textit{fail} and $\perp$. The value \texttt{fail} indicates a failed program run while $\perp$ indicates a run that does not terminate or gets stuck. Program $P$ can diverge from $G$ if there is an infinite sequence $\langle P,G\rangle\rightarrow\langle P_1, G_1\rangle\rightarrow \langle P_2, G_2\rangle\rightarrow\ldots$. Also, $P$ can get stuck from $G$ if there is a terminal configuration $\langle Q,H\rangle$ such that $\langle P,G\rangle\rightarrow^*\langle Q,H\rangle$.

\Def{Semantic function \cite{Plump12a}}{def:semanticfunc}{
The semantic function $\llbracket\_\rrbracket$: ComSeq $\rightarrow(\mathcal{G}(\mathbb{L})\rightarrow2^{\mathcal{G}(\mathbb{L})\cup\{fail,\bot\}})$ is defined by
\[\footnotesize{\llbracket P\rrbracket G=\{X\in (\mathcal{G}(\mathbb{L})\cup\{\mrm{fail}\})|\langle P,G\rangle\rightarrow^+X\}\cup\{\bot\mid P \text{ can diverge or get stuck from } G\}.}\] 
}

A program $C$ can get stuck only in two situations, that is either $P$ contains a command \texttt{if $A$ then $P$ else $Q$} or \texttt{try $A$ then $P$ else $Q$} such that $A$ can diverge from a host graph $G$, or 
$P$ contains a loop $B!$ whose body $B$ can diverge from a host graph $G$. The evaluation of such commands gets stuck because none of the inference rules for if-then-else, try-then-else or looping is applicable. Getting stuck always signals some form of divergence.

We sometimes need to prove that a property holds for all graph programs. For this, we use structural induction on graph programs by having a general version of graph programs. That is, ignoring the context condition of the command $\mtt{break}$ such that it can appear outside a loop. However, when $\mtt{break}$ occur outside the context condition, we treat it as a $\mtt{skip}$.

\begin{definition}[Structural induction on graph programs]\label{def:gpinduction}\normalfont
Proving that a property \textit{Prop} holds for all graph programs by induction, is done by:\\
\begin{tabular}{lrp{11.3cm}}
    \multicolumn{3}{l}{Base case.}\\
    ~~~~~& \multicolumn{2}{p{11.7cm}}{
    Show that \textit{Prop} holds for $\mathcal{R}=\{r_1,\ldots,r_n\}$, where $n\geq 0$}\\
    \multicolumn{3}{l}{Induction case.}\\
    ~~~~~& \multicolumn{2}{p{11.7cm}}{
    Assuming \textit{Prop} holds for graph programs $C, P,$ and $Q$, show that \textit{Prop} also holds for:}\\
    & 1. & $P;Q$,\\
    & 2. & $\mtt{if\,}C\mtt{\,then\,}P\mtt{\,else\,}Q$,\\
    & 3. & $\mtt{try\,}C\mtt{\,then\,}P\mtt{\,else\,}Q$, and\\
    & 4. & $P!$.\qed
\end{tabular}
\end{definition}

The commands $\mtt{fail}$ and $\mtt{skip}$ can be considered (respectively) as a call of the ruleset $\mathcal{R}=\{\}$ and a call of the rule schema where the left and right-hand graphs are the empty graphs. Also, the command $P \mtt{~or~} Q$ can be replaced with the program $\mtt{if\, (Delete!;\,\{nothing, add\};\, zero)\, then\,} P\mtt{\, else\,} Q$ where $\mtt{Delete}$ is a set of rule schemata that deletes nodes and edges, including loops. $\mtt{nothing}$ is the rule schema where the left and right-hand graphs are the empty graphs, $\mtt{add}$ is the rule schema where the left-hand graph is the empty graph and the right- hand graph is a single 0-labelled unmarked and unrooted node, and $\mtt{zero}$ is a rule schema that matches with a 0-labelled unmarked and unrooted node.

As mentioned before, the execution of a graph program may yield a proper graph, failure, or diverge/get stuck. The latter only may happen when a loop exists in the program. In some cases, we may want to not considering the possibility of diverging or getting stuck such that we only consider loop-free graph programs. To show that a property holds for a loop-free program, we also introduce structural induction on loop-free programs.

\begin{definition}[Structural induction on loop-free programs]\label{def:noloopinduction}\normalfont
Proving that a property \textit{Prop} holds for all loop-free programs by induction, is done by:\\
\begin{tabular}{lrp{11.3cm}}
    \multicolumn{3}{l}{Base case.}\\
    ~~~~~& \multicolumn{2}{p{11.7cm}}{
    Show that \textit{Prop} holds for $\mathcal{R}=\{r_1,\ldots,r_n\}$, where $n\geq 0$}\\
    \multicolumn{3}{l}{Induction case.}\\
    ~~~~~& \multicolumn{2}{p{11.7cm}}{
    Assuming \textit{Prop} holds for loop-free programs $C, P,$ and $Q$, show that \textit{Prop} also holds for:}\\
    & 1. & $P\mtt{~or~}Q$,\\
    & 2. & $P;Q$,\\
    & 3. & $\mtt{if\,}C\mtt{\,then\,}P\mtt{\,else\,}Q$, and\\
    & 4. & $\mtt{try\,}C\mtt{\,then\,}P\mtt{\,else\,}Q$.\qed
\end{tabular}
\end{definition}

\section{First-Order Formulas for Graph Programs}
\label{sec:FOL}

In this section, we define first-order formulas which are able to express properties of GP\,2 graphs. Also, we define structural induction on the first-order formulas and replacement graphs which later can be used to show satisfaction of a first-order formula in a morphism.

\subsection{Syntax}

Our first-order (FO) formulas have logical connectives, variables, constants, also auxiliary, predicate, and function symbols.

\Def{Alphabet of a first-order formula}{def:FOLabc}{
The alphabet of a first-order formula consists of the following sets of symbols:
 \vspace{-\topsep}\begin{enumerate}
    \item Logical connectives: $\wedge$ (and), $\vee$ (or), $\neg$ (not), $\mrm{true}$, $\mrm{false}$, equality symbols $=,\neq,>,\geq,<,\leq,$ and quantifiers $\E{V},\E{E},\E{L}$ for nodes, edges, and labels respectively.
    \item Variables: a countably infinite set of lowercase letters.
     \item Predicate symbols: $\mrm{int, char, string, atom, edge, root}$.
     \item Function symbols: $\sou$ (source), $\tar$ (target), $\lV$ (node label), $\lE$ (edge label), $\mV$ (node mark), $\mE$ (edge mark), $\mrm{indeg}$, $\mrm{outdeg}$, $\mrm{length}$, integer operators $+,-,*,/$, label operator $:$, and string operator $.$ (concatenation).
    \item Constants: all elements in $\mathbb{L}$, $\mrm{empty}$, $\mrm{none, red, green, blue,}$ $\mrm{green, dashed, grey,}$\\and $\mrm{ any}$.
   
\end{enumerate}
}

Here we differentiate variables in seven kinds, which are first-order variables (single variables) for nodes, edges, and labels where labels are typed as in GP\,2. Table \ref{tab:domkind} shows the seven kinds of variables and their domains in a graph $G$. Note that we assume that node, edge, and list variables are pairwise distinct, while list, atom, integer, string, and character variables have hierarchy based on their domain.

\begin{table}[]
    \centering
    \caption{Kinds of variables and their domain on a graph $G$}
    \begin{tabular}{|c|c|} 
    \hline
        \textbf{kind of variables} & \textbf{domain} \\\hline
        NodeVar & $V_G$ \\
        EdgeVar & $E_G$ \\
        ListVar & $(\mathbb{Z}\cup(\text{Char})^*)^*$ \\
        AtomVar & $\mathbb{Z}\cup\text{Char}^*$ \\
        IntVar & $\mathbb{Z}$ \\
        StringVar & $\text{Char}^*$ \\
        CharVar & Char \\
        \hline
    \end{tabular}
    \label{tab:domkind}
\end{table}

The syntax of FO formulas is given by the grammar of Figure \ref{fig:mso}. In the syntax, NodeVar and EdgeVar represent disjoint sets of first-order node and edge variables, respectively. We use ListVar, AtomVar, IntVar, StringVar, and CharVar for sets of first-order label variables of type $\mrm{list, atom, int, string}$, and $\mrm{char}$ respectively. The nonterminals Character and Digit in the syntax represent the fixed character set of GP\,2, and the digit set $\{0,\ldots,9\}$ respectively.

\begin{figure}
    \centering\footnotesize
    \begin{tabular}{lcl}
      Formula & ::= & $\mrm{true}~\mid~\mrm{false}~\mid$ Cond $\mid$ Equal\\
        && $\mid$ Formula~(`$\mrm{\wedge}$' $\mid$ `$\mrm{\vee}$')~Formula $\mid$ `$\neg$'Formula $\mid$ `('Formula`)'\\
        && $\mid$ `$\exists_\mathtt{V}$' (NodeVar)~`('Formula`)' \\
        && $\mid `\exists_\mathtt{E}$'~(EdgeVar)~`('Formula`)'       \\
        && $\mid$ `$\exists_\mathtt{L}$' (ListVar)~`('Formula`)' \\
      Number & ::= & Digit~\{Digit\}\\
      Cond & ::= & ($\mrm{int \mid char \mid string \mid atom}$) `('Var`)' \\
        && $\mid$ Lst (`$\mrm{=}$' $\mid$ `$\mrm{\neq}$') Lst $\mid$ Int (`$\mrm{>}$' $\mid$ `$\mrm{>=}$' $\mid$ `$\mrm{<}$' $\mid$ `$\mrm{<=}$') Int\\
        && $\mid$ $\mrm{edge}$ `(' Node `,' Node [`,' Lst] [`,' EMark] `)' $\mid$ $\mrm{root}$ `(' Node `)'\\
      Var & ::= & ListVar $\mid$ AtomVar $\mid$ IntVar $\mid$ StringVar $\mid$ CharVar\\
      Lst & ::= & $\mrm{empty}$ $\mid$ Atm $\mid$ Lst \lq:' Lst $\mid$ ListVar $\mid$ $\mrm{l_V}$ `('Node`)' $\mid$ $\mrm{l_E}$ `('EdgeVar`)'  \\
      Atm & ::= & Int $\mid$ String $\mid$ AtomVar \\
      Int & ::= & [\lq-']~Number~$\mid$~\lq('Int\lq)'  $\mid$ IntVar $\mid$ Int (\lq+' $\mid$ \lq-' $\mid$ \lq*' $\mid$ \lq/') Int \\
        && $\mid$ ($\mrm{indeg}$ $\mid$ $\mrm{outdeg}$) \lq('Node\lq)' $\mid$ $\mrm{length}$ \lq('AtomVar $\mid$ StringVar $\mid$ ListVar\lq)' \\
      String & ::= & ` `` ' {Character} ` " ' $\mid$ CharVar $\mid$ StringVar $\mid$ String `.' String \\
      Node & ::= & NodeVar $\mid$ ($\mrm{s}~\mid \mrm{t}$) `(' EdgeVar`)'\\
      EMark & ::= & $\mrm{none~\mid~red~\mid~green~\mid~blue~\mid~dashed~\mid~any}$\\
      VMark & ::= & $\mrm{none~\mid~red~\mid~blue~\mid~green~\mid~grey~\mid~any}$  \\
      Equal & ::= & Node ('$\mrm{=}$' $\mid$ `$\mrm{\neq}$') Node $\mid$ EdgeVar ('$\mrm{=}$' $\mid$ `$\mrm{\neq}$') EdgeVar\\
        && $\mid$ Lst ('$\mrm{=}$' $\mid$ `$\mrm{\neq}$') Lst $\mid$ $\mrm{m_V}$`('Node`)' ('$\mrm{=}$' $\mid$ `$\mrm{\neq}$') VMark \\
        && $\mid$ $\mrm{m_E}$`('EdgeVar`)' ('$\mrm{=}$' $\mid$ `$\mrm{\neq}$') EMark
    \end{tabular}
    \caption{Syntax of first-order formulas}
    \label{fig:mso}
\end{figure}

The quantifiers $\E{V}, \E{E},$ and $\E{L}$ in the grammar are reserved for variables of nodes, edges, and labels respectively. The function symbols $\mrm{indeg, outdeg}$ and $\mrm{length}$ work similar with functions with the same names in GP\,2 rule schema conditions. In addition, we also have unary functions $\mrm{s, t, l_V, l_E, m_V,}$ and ${m_{E}}$. These functions return the mapping result of the argument based on their functions as defined in Definition \ref{def:graphs}. For example, the function $\mrm{s}$ takes an edge variable in the argument and returns the node that is the source of the edge represented by the variable in a host graph. The predicate $\mrm{edge}$ expresses the existence of an edge between two nodes. The predicates $\mrm{int, char, string, atom}$ are typing predicates to specify the type of the variable in their argument. We have the predicate $\mrm{root}$ to express rootedness of a node. For brevity, we sometimes write $\mrm{c\implies d}$ for $\mrm{\neg c\vee d}$, $\mrm{c\iff d}$ for $\mrm{(\neg c\vee d)\wedge(c\vee\neg d)}$, $\mrm{\forall_Vx(c)}$ for $\mrm{\neg\E{V}x(\neg c)}$ and $\mrm{\E{V}x_1,\ldots,x_n(c)}$ for $\mrm{\E{V}x_1(\E{V}x_2(...\E{V}x_n(c)\ldots))}$ (also for edge and label quantifiers). Also, we define 'term' as the set of variables, constants, and functions in first-order formulas.

\subsection{Structural induction on first-order formulas}
To prove properties related to our first-order formulas, we classify first-order formulas into eight cases, based on their forms. To prove that some properties hold for these cases, we define structural induction on first-order formulas. Three cases are defined as base cases since they are formed from terms while the others are defined as inductive cases. As mentioned before, terms can exist as a variable, a constant, or a function (or an operators). For terms, we also define a structural induction on terms with variables and constants are its base cases.

\begin{definition}[Structural induction on terms]\label{def:inductionterms}\normalfont
Given a property \textit{Prop}. Proving that \textit{Prop} holds for all terms by \textit{structural induction on terms} is done by:
\begin{itemize}[nosep]
    \item Base case.\\
    Show that \textit{Prop} holds for all nodes, edges, and lists that may be represented by variables and constants.
    \item Inductive case.\\
    Assuming that \textit{Prop} holds for lists represented by terms $\mrm{x_1, x_2}$, integers represented by terms $\mrm{i_1, i_2}$, strings represented by terms $\mrm{s_1, s_2}$, a node represented by term $\mrm{v}$, and an edge represented by term $\mrm{e}$, show that \textit{Prop} also holds for:
     \vspace{-\topsep}\begin{enumerate}
        \item integers represented by $\mrm{length(x_1),}$ and $\mrm{i_1\oplus i_2}$ for $\oplus\in\{_,-,*,/\}$
        \item lists represented by $\mrm{\lE(e_1)}$ and  $\mrm{\lV(v_1)}$
        \item marks represented by $\mrm{\mE(e_1)}$ and $\mrm{\mV(v_1)}$
        \item strings represented by $\mrm{s_1.s_2}$\qed
    \end{enumerate}
\end{itemize}
\end{definition}

For simplicity, we do not consider a FO formula in the form $(c)$ as it is equivalent to FO formula $c$. We also do not include the predicate $\mrm{edge}$ because we can express it as $\mrm{\E{E}z(s(z)=x\wedge t(z)=y)}$. The optional arguments list and mark of the predicate $\mrm{edge}$ can be conjunct inside the quantifier, e.g. the predicate $\mrm{edge(x,y,5,\mrm{none})}$ can be expressed as
$\mrm{\E{E}z(s(z)=x\wedge t(z)=y\wedge l_E(z)=5\wedge m_E{z}}$\\$\mrm{=none})$.

\begin{definition}[Structural induction on first-order formulas]\label{def:inductionFO}\normalfont
Given a property \textit{Prop}. Proving that \textit{Prop} holds for all FO formulas by \textit{structural induction on FO formulas} is done by:
\begin{itemize}[nosep]
    \item Base case.\\
    Show that \textit{Prop} holds for:
     \vspace{-\topsep}\begin{enumerate}
        \item the formulas $\mrm{true}$ and $\mrm{false}$
        \item predicates $\mrm{int(z), char(z), string(z), atom(z)}$ for a list variable $z$, and $\mrm{root(y)}$ for a term $y$ representing a node
        \item Boolean operations $\mrm{x_1=x_2}$ and $\mrm{x_1\neq x_2}$ where both $x_1, x_2$ are terms representing nodes, edges, or lists, also $\mrm{y_1\ominus y_2},$ for terms $y_1, y_2$ representing integers and $\ominus\in\{=,\neq,<,\leq,>,\geq\}$
    \end{enumerate}
    \item Inductive case.\\
    Assuming that \textit{Prop} holds for FO formulas $c_1, c_2$, show that \textit{Prop} also holds for FO formulas $c_1\wedge c_2$, $c_1\vee c_2$, $\neg c_1$, $\E{V}x(c_1)$, $\E{E}x(c_1)$, and $\E{L}x(c_1)$.
\end{itemize}
\end{definition}

\subsection{Satisfaction of a first-order formula}
The satisfaction of a FO formula $c$ in a host graph $G$ relies on assignments. A formula assignment of $c$ on $G$ is defined in Definition \ref{def:assignment}. Informally, a formula assignment is a function that maps free variables to constants in their own domain. When we have an assignment for a FO formula on a graph, we can check the satisfaction of the FO formula. The satisfaction of a FO formula on a graph is then defined in Definition \ref{def:satisfaction}.

\Def{Formula assignment}{def:assignment}{
Let $c$ be a FO formula, $X$ and $Y$ be the set of free node and edge variables in $c$ respectively, and $Z$ be the set of free list variables in $c$. For a free variable $x$, dom$(x)$ denotes the domain of variable's kind associated with $x$ as in \tablename~\ref{tab:domkind}. A \emph{formula assignment} of $c$ on a host graph $G$ is a tuple $\alpha=\tuple{\alpha_G,\alpha_\mathbb{L}}$  of functions $\alpha_G=\tuple{\alpha_V:X\rightarrow V_G, \alpha_E:Y\rightarrow E_G}$ and  $\alpha_\mathbb{L}=Z\rightarrow\mathbb{L}$ such that for each free variable $x$, $\alpha(x)\in$ dom$(x)$. We then denote by $c^{\alpha}$ the FO formula $c$ after replacement of each term $\mrm{y}$ to $\mrm{y}^{\alpha}$, where $\mrm{y}^{\alpha}$ is defined inductively:
 \vspace{-\topsep}\begin{enumerate}
    \item If $\mrm{y}$ is a free variable, $\mrm{y}^{\alpha}=\alpha(\mrm{x})$;
    \item If $\mrm{y}$ is a constant, $\mrm{y}^{\alpha}=y$;
    \item If $y=\mrm{length(x)}$ for some variable $\mrm{x}$, $y^{\alpha}$ returns the number of characters in $\mrm{x}^{\alpha}$ if $\mrm{x}$ is a string variable, 1 if $\mrm{x}$ is an integer variable, or the number of atoms in $\mrm{x}^{\alpha}$ if $\mrm{x}$ is a list variable;
    \item If $y$ is $\mrm{s}(x), \mrm{t}(x), \mrm{l_E}(x), \mrm{m_E}(x), \mrm{l_V}(x), \mrm{m_V}(x), \mrm{indeg}(x)$, or $\mrm{outdeg}(x)$ for some term $x$, $y^{\alpha}$ is $s_G(x^{\alpha})$, $t_G(x^{\alpha})$, $\lst^E_G(x^{\alpha}), \mrk^E_G(x^{\alpha}), \lst^V_G(x^{\alpha}), \mrk^V_G(x^{\alpha}),$ indegree of $x^{\alpha}$ in $G$ , or outdegree of $x^{\alpha}$ in $G$, respectively;
    \item If $y=x_1\oplus x_2$ for $\oplus\in\{+,-,*,/\}$ and terms $x_1, x_2$ represented integers, $y^{\alpha}=x_1^{\alpha}\oplus_\mathbb{Z} x_2^{\alpha}$;
    \item If $y=x_1.x_2$ for some terms $x_1, x_2$ represented strings, $y^{\alpha}$ is string concatenation $x_1^{\alpha}$ and $x_2^{\alpha}$;
    \item If $y=x_1:x_2$ for some terms $x_1, x_2$ represented lists, $y^{\alpha}$ is list concatenation $x_1^{\alpha}$ and $x_2^{\alpha}$;
\end{enumerate}
}

\begin{remark}

\end{remark}

\Def{Satisfaction}{def:satisfaction}{
Given a graph $G$ and a first-order formula $c$. $G$ satisfies $c$, written $G\Sat c$, if there exists an assignment $\alpha$ such that $c^{\alpha}$ is true in $G$ (denotes by $G\models^\alpha c$), that is, for each Boolean sub-expression $b^{\alpha}$ of $c^{\alpha}$, the value of $b^{\alpha}$ in $\mathbb{B}$ is inductively defined:
 \vspace{-\topsep}\begin{enumerate}
    \item If $b^{\alpha}=\mrm{true}$ (or $b=\mrm{false}$), then $b^{\alpha}$ is true (or false);
    \item If $b^{\alpha}=\mrm{int(x), char(x), string(x), atom(x)}$, or $\mrm{root(x)}$, $b^{\alpha}$ is true if only if $\mrm{x}^{\alpha}\in\mathbb{Z}, \mrm{x}^{\alpha}\in\text{Char}, \mrm{x}^{\alpha}\in\text{Char}^*, \mrm{x}^{\alpha}\in\mathbb{Z}\cup\text{Char}^*,$ or $p_G(\mrm{x}^{\alpha})=1$ respectively.
    \item If $b^{\alpha}$ has the form $t_1\otimes t_2$ where $\otimes\in\{\mrm{>,>=,<,<=}\}$ and $t_1,t_2\in\mathbb{Z}$, $b^{\alpha}$ is true if and only if $t_1\otimes_\mathbb{Z} t_2$ where $\otimes_\mathbb{Z}$ is the integer relation on $\mathbb{Z}$ represented by $\otimes$. Then if $b^{\alpha}$ has the form $t_1\ominus t_2$ where $\ominus\in\{=,\neq\}$ and $t_1,t_2\in V_G\cup E_G\cup\mathbb{L}\cup\mathbb{M}\{\mrm{any}\}$, $b^{\alpha}$ is true if and only if $t_1\ominus_\mathbb{B} t_2$ where $\ominus_\mathbb{B}$ is the Boolean relation represented by $\ominus$. Then for $t_1=\mrm{any}$, $b^{\alpha}$ is true if and only if $\mtt{blue}\ominus_\mathbb{B} t_2~\vee~\mtt{red}\ominus_\mathbb{B} t_2~\vee~\mtt{green}\ominus_\mathbb{B} t_2~\vee~\mtt{grey}\ominus_\mathbb{B} t_2~\vee~\mtt{dashed}\ominus_\mathbb{B} t_2$ is true (and analogously for $t_2=\mtt{any}$).
    \item If $b^{\alpha}$ has the form $b_1\oslash b_2$ where $\oslash\in\{\vee,\wedge\}$ and $b_1,b_2$ are Boolean expressions, $b^{\alpha}$ is true if and only if $b_1\oslash_\mathbb{B} b_2$ where $\oslash_\mathbb{B}$ is the Boolean operation on $\mathbb{B}$ represented by $\oslash$. 
    \item If the form of $b^{\alpha}$ is $\neg b_1$ where $b_1$ is a Boolean expression, $b^{\alpha}$ is true if and only if $b_1$ is false.
    \item If $b^{\alpha}$ has the form $\E{V}e_1(e_2)$ where $e_1$ is a first-order node variable and $e_2$ is a Boolean expression, $b^\alpha$ is true if and only if there exists $v\in V_G$ such that when we add $e_1\mapsto v$ to assignment $\alpha$, $e_2$ is true.
    \item If $b^{\alpha}$ has the form $\E{E}e_1(e_2)$ where $e_1$ is a first-order edge variable and $e_2$ is a Boolean expression, $b^{\alpha}$ is true if and only if there exists $e\in E_G$ such that when we add $e_1\mapsto e$ to assignment $\alpha$, $e_2$ is true.
    \item If $b^{\alpha}$ is in the form $\E{L}e_1(e_2)$ where $e_1$ is a first-order list variable and $e_2$ is a Boolean expression, $b^{\alpha}$ is true if and only if there exists $l\in \mathbb{L}$ such that when we add $e_1\mapsto l$  to assignment ${\alpha}$, $e_2$ is true.
\end{enumerate}
}

\subsection{First-order formulas in rule schema application}
FO formulas we define in this Section does not have a node or edge constant because we want to be able to check the satisfaction of a FO formula on any graph. However, in a rule schema application, we sometimes need to express the properties of the images of the match or comatch, which is dependent on the left-hand graph or right-hand graph. To be able to express properties of the images of a match or comatch, we need to allow some node and edge constants in FO formulas. Hence, we define a condition over a graph.

\Def{Conditions over a graph}{def:condover}{
Given a graph $G$. A \emph{condition over $G$} is obtained from a first-order formula by substituting node (or edge) identifiers in $G$ for free node (or edge) variables in the first-order formula.}

\begin{example}
Let $G$ and $H$ be graphs where $V_G=\{1,2\}$ and $V_H=\{1\}$.
 \vspace{-\topsep}\begin{enumerate}
    \item $c_1=\mrm{\E{E}x(s(x)=1)}$ is a condition over $G$, also over $H$
    \item $c_2=\mrm{\A{V}x(\mrm{edge}(x,1)\wedge\mrm{indeg}(x)=2)}$ is a condition over $G$, but not over $H$
\end{enumerate}
\end{example}

Checking if a graph satisfies a condition over a graph is similar with checking satisfaction of a FO formula in a graph. However, for a condition $c$ over a graph, the satisfaction of $c$ in a graph $G$ can be defined if only if $c$ is a condition over $G$.

With a condition over a graph, we can express properties of left and right-hand graphs with explicitly mentioning node/edge identifiers in the graphs. In graph program verification, we need to express the properties of the initial and output graph with respect to a given rule schema. In \cite{Poskitt13,HP09}, they express them by showing the satisfaction of a condition on a morphism. Here, we define a replacement graph $H$ of a host graph $G$ with respect to an injective morphism $g$, where $H$ is isomorphic to $G$ and there exists an inclusion from the domain of $g$ to $H$.

\Def{Replacement graph}{def:rho}{
Given an injective (pre)morphism $g:L\rightarrow G$ where $V_G\cap V_L=\{v_1,\ldots,v_n\}$ and $E_G\cap E_L=\{e_1,\ldots,e_m\}$. Let also $U=\{u_1,\ldots,u_n\}$ be a set of identifiers not in $V_L$ and $V_G$, and $W=\{w_1,\ldots,w_n\}$ be a set of identifiers not in $E_L$ and $E_G$. \emph{Graph replacement} $\rho_g(G)$ is is obtained from $G$ by renaming every item $g(i)$ to $i$ for $i\in V_G$ and $i\in E_G$, every $v_i$ to $u_i$ for $i=1,\ldots,n$, and every $e_i$ to $w_i$ for $i=1,\ldots,m$, such that $V_{\RG}=V_G-g(V_L)\cup V_L\cup U$ and $E_{\RG}=E_G-g(E_L)\cup E_L\cup W$.}

From the definition above, it is obvious that a host graph and its replacement graph are isomorphic. For a host graph $G$, a host graph $L$, and a morphism $g:L\rightarrow G$, it is also obvious that there exists an inclusion $f:L\rightarrow\RG$, because $g$ preserves identifiers, sources, targets, and labels of $L$.

\begin{example}
Given $g$, a morphism from $L$ to $G$ as follows:\\
\begin{minipage}{0.5\textwidth}
\begin{center}
\begin{tikzpicture}[scale=0.7, 
inner/.style={circle,draw,inner sep=2.5pt}]
	\node[label=below:\footnotesize{$L$}] (A) at (0,0) {
		\begin{tikzpicture}[scale=0.7, transform shape]
			\node[inner, label=below:$1$] (Aa) at (0,0) {$a$};
			\node[inner, label=below:$2$] (Ab) at (1.5,0) {$a$};
			\draw[-latex] (Aa) to [out=210,in=150, looseness=8] node[left] {3} (Aa);
			\draw[-latex] (Ab) to node[above] {$4$} (Aa);			
		\end{tikzpicture}};
	\node[] (B) at (2.25,0) {
		\begin{tikzpicture}[scale=0.7, transform shape]
			\node[] (Ba) at (0,0.25) {};	
			\node[] (Bb) at (1,0.25) {};
			\draw[->] (Ba) to node[above] {$g$} (Bb);
		\end{tikzpicture}};
	\node[label=below:\footnotesize{$G$}] (C) at (4.25,0) {
		\begin{tikzpicture}[scale=0.7, transform shape]
			\node[inner, label=above:$v_1$] (Ca) at (0,0.5) {$b$};
			\node[inner, label=below:$v_2$] (Cb) at (1.5,0) {$a$};
			\node[inner, label=below:$v_3$] (Cc) at (0,-0.5) {$b$};
			\draw[-latex] (Ca) to [out=210,in=150, looseness=8] node[left] {$e_1$} (Ca);
			\draw[-latex] (Cb) to node[above] {$e_2$} (Ca);
			\draw[-latex] (Cb) to node[below] {$e_3$} (Cc);
			\draw[-latex] (Cc) to [out=210,in=150, looseness=8] node[left] {$e_4$} (Cc);
		\end{tikzpicture}};
\end{tikzpicture}

\end{center}

\end{minipage}
\begin{minipage}{0.5\textwidth}\footnotesize
\[g=\left\langle g_V:\left\{\begin{array}{l}
    		1\mapsto v_3 \\
    		2\mapsto v_2
        \end{array}, 
     		g_E:\right\{ \begin{array}{l}
    		3\mapsto e_4 \\
    		4\mapsto e_3
        \end{array} \right\rangle\]
\end{minipage}

\noindent Then, $\rho_g(G)$ is the graph\begin{center}

    \begin{tikzpicture}[scale=0.7, 
inner/.style={circle,draw,inner sep=2.5pt}]
	\node[] (C) at (0,0) {
		\begin{tikzpicture}[scale=0.7, transform shape]
			\node[inner, label=above:$v_1$] (Ca) at (0,0.5) {$b$};
			\node[inner, label=below:$2$] (Cb) at (1.5,0) {$a$};
			\node[inner, label=below:$1$] (Cc) at (0,-0.5) {$a$};
			\draw[-latex] (Ca) to [out=210,in=150, looseness=8] node[left] {$e_1$} (Ca);
			\draw[-latex] (Cb) to node[above] {$e_2$} (Ca);
			\draw[-latex] (Cb) to node[below] {$4$} (Cc);
			\draw[-latex] (Cc) to [out=210,in=150, looseness=8] node[left] {$3$} (Cc);
		\end{tikzpicture}};
\end{tikzpicture}    
\end{center}
\end{example}

Now we have defined a condition over a graph to express properties of a host graph w.r.t the left-hand graph or right-hand graph. Let $\tuple{L\leftarrow K\rightarrow R}$ be a rule schema, and $ac_L$, $ac_R$ denote a condition over a rule graph $L$ and $R$ respectively. To associate $ac_L$ and $ac_R$ with the rule schema, we define a \textit{generalised rule schema}. Unlike a rule schema, a generalised rule schema consists of an unrestricted rule schema that allows both left and right application condition.

\Def{Generalised rule schema}{def:genrule}{
Given an unrestricted rule schema $r=\langle L\leftarrow K\rightarrow R\rangle$.
\emph{A generalised rule} is a tuple $w=\langle r,ac_L,ac_R\rangle$ where $ac_L$ is a condition over $L$ and $ac_R$ is a condition over $R$. We call $ac_L$ the left application condition and $ac_R$ the right application condition. The inverse of $w$, written $w^{-1}$, is then defined as the tuple $\langle r^{-1},ac_R,ac_L\rangle$ where $r^{-1}=\langle R\leftarrow K\rightarrow L\rangle$.
}

The application of a generalised rule schema is essentially the same as the application of a rule schema. But here, we also check the satisfaction of both $ac_L$ and $ac_R$ in the replacement of input graph $G$ and final graph $H$ by match and comatch respectively.

\Def{Application of generalised rule schema}{def:generalisedrPO}{
Given a generalised rule schema $w=\langle r, ac_L, ac_R\rangle$ with an unrestricted rule schema $r=\langle L\leftarrow K\rightarrow R\rangle$. There exists a direct derivation from $G$ to $H$ by $w$, written $G\Rightarrow_{w,g,g^*} H$ (or $G\Rightarrow_wH$) iff there exists premorphisms $g:L\rightarrow G$ and $g^*:R\rightarrow H$ and label assignments $\alpha$ for $L$ and $\beta$ for $R$ where $\beta(i)=\alpha(i)$ for every variable $i$ in $L$ such that $i$ is in $R$ and for every node/edge $i$ where $\mrk_L(i)=\mrk_R(i)=\mtt{any}$, such that:
 \vspace{-\topsep}\begin{enumerate}
\item[(i)] $g:L^\alpha\rightarrow G$ is an injective morphism
\item[(ii)] $g^*:R^\beta\rightarrow H$ is an injective morphism
\item[(iii)] $\rho_g(G)\Sat ac_L^\alpha$,
\item[(iv)] $\rho_{g^*}(H)\Sat ac_R^\beta$,
\item[(v)] $G \Rightarrow_{r^{\alpha,g},g} H$,
\end{enumerate}
where $G \Rightarrow_{r^{\alpha},g} H$ denotes the existence of natural pushouts (1) and (2) as in the diagram of \figurename~\ref{fig:ddergenrule}.}

\begin{figure}
    \centering
    \begin{tikzpicture}[scale=0.9, transform shape]
	\node (2) at (1, 0) {(1)};
	\node (a) at (2, 0.75) {$K$};
	\node (b) at (0, 0.75) {$L^\alpha$};
	\node (c) at (4, 0.75) {$R^\beta$};
	\node (d) at (2, -0.75) {$D$};
	\node (1) at (3, 0) {(2)};
	\node (e) at (4, -0.75) {$H$};
	\node (f) at (0, -0.75) {$G$};
	\node (g) at (-0.85, -0.75) {$\RG\cong$};
	\node (i) at (-2.1, -0.75) {$ac_L^\alpha\Dashv$};
	\node (h) at (4.95, -0.75) {$\cong\RH$};
	\node (j) at (6.25, -0.75) {$\Sat ac_R^\beta$};
	\draw[->] (b) to node[left] {$g$} (f);
	\draw[->] (b) [bend right=40] to node[left] {incl} (g);
	\draw[->] (c) to node[right] {$g^*$} (e);
	\draw[->] (c) [bend left=40]to node[right] {incl} (h);
    \draw[->] (a) to node {} (b);
	\draw[->] (a) to node {} (c);
	\draw[->] (a) to node {} (d);
	\draw[->] (d) to node {} (e);
	\draw[->] (d) to node {} (f);
	\draw[->] (c) to node {} (e);
	\draw[->] (b) to node {} (f);
\end{tikzpicture}
    \caption{Direct derivation for generalised rule schema}
    \label{fig:ddergenrule}
\end{figure}

Recall the application of conditional rule schema in Definition \ref{def:ruleapp}. The condition of the rule schema is clearly can be considered as the left-application condition of the rule schema. Since there is no right-application condition in a conditional rule schema, there is no requirement about the condition such that we can always consider $\mrm{true}$ as the right-application condition of a conditional rule schema.

\Def{Generalised version of a conditional rule schema}{def:rvee}{
Given a conditional rule schema $\tuple{r,\Gamma}$. The generalised version of $r$, denoted by $r^\vee$, is the generalised rule schema $r^\vee=\tuple{r,\Gamma^\vee,\mrm{true}}$ where $\Gamma^\vee$ is obtained from $\Gamma$ by replacing the notations $\mrm{not}, !=, \mrm{and, or}, \#$ with $\neg, \neq, \wedge, \vee, `,$' (comma symbol) respectively.
}

\Lemma{lemma:rvee}{
Given a conditional rule schema $\tuple{r,\Gamma}$ with $r=\tuple{L\leftarrow K\rightarrow R}$. Then for any host graphs $G,H$,
\[G\Rightarrow_r\,H \text{ if and only if } G\Rightarrow_{r^\vee}\!H.\]
}

\begin{proof}
(Only if). Recall the restrictions about variables and any-mark of a rule schema. It is obvious that every variable in $R$ is in $L$ and every node/edge with mark $\any$ in $R$ is marked $\any$ in $L$ as well. From Definition \ref{def:ruleapp}, we know that $G\Rightarrow_rH$ asserts the existence of $\alpha_L$ and premorphism $g:L\rightarrow G$ such that: 1) $g:L^\alpha\rightarrow G$ is an injective morphism, 2) $\Gamma^{\alpha,g}$ is true in $G$, and 3) $G\Rightarrow_{r^{\alpha,g},g}H$. From 3) and the variable restrictions mentioned above, it is obvious that there exists morphism $g^*:R^\alpha\rightarrow H$, and $\RH\Sat ac_R$ because all graphs satisfy $\mrm{true}$. Hence, (ii), (iv), and (v) of Definition \ref{def:generalisedrPO} are satisfied. Point 1) then asserts (i) of Definition \ref{def:generalisedrPO}. The fact that $\Gamma^{\alpha,g}$ is true in $G$ from point 2) is then asserts $\RG\Sat \Gamma^\vee$ because it is obvious that the change of symbols does not change the semantics of the condition. Moreover, $\RG$ is a replacement graph w.r.t. $g$ such that evaluating $\Gamma^{\alpha,g}$ in $G$ is the same as evaluating $\Gamma^\alpha$ in $\RG$.\\
(If). Similarly, from Definition \ref{def:generalisedrPO}, we know that $G\Rightarrow_{r^\vee}H$ asserts the existence of label assignment $\alpha$ for $L$ and premorphism $g:L\rightarrow G$ such that: 1) $g:L^\alpha\rightarrow G$ is an injective morphism; 2)$\RG\Sat\Gamma^\vee$; and 3)$G\Rightarrow_{r^{\alpha,g},g}H$. These obviously assert $G\Rightarrow_rH$ from Definition \ref{def:ruleapp} and the argument about $\Gamma^\vee$ above.
\end{proof}

\Remark{For morphism $g:L^\alpha\rightarrow G$, the semantics of $\Gamma$ in $G$ with respect to $g$ and $\Gamma^\vee$ in $\RG$ is identical. From here, $\Gamma$ also refers to $\Gamma^\vee$ when it obviously refers to a condition over $L$.}

\Lemma{lemma:inverse}{
Given a generalised rule schema $w=\langle r, ac_L, ac_R\rangle$ with an unrestricted rule schema $r=\langle L\leftarrow K\rightarrow R\rangle$ and label assignment $\alpha$ for $L$. Then for host graphs $G$ and $H$ with premorphisms $g:L\rightarrow G$ and $g^*:R\rightarrow H$,
\[G\Rightarrow_{w,g,g^*}\!H \text{ if and only if } H\Rightarrow_{w^{-1},g^*,g}\!G.\]
}

\begin{proof}$~$\\
(Only if.) From Definition \ref{def:generalisedrPO} we know that when $G\Rightarrow_{w,g,g^*}\!H$, it means that there exists label assignment $\alpha$ for $L$ and $\beta$ for $R$ where $\alpha(i)=\alpha(i)$ for every variable $i$ in $L$ such that $i$ is in $R$, and for every node/edge $i$ where $\mrk_L(i)=\mrk_R(i)=\mtt{any}$, such that $g:L^\alpha\rightarrow G$ and $g^*:R^\beta\rightarrow H$ are injective morphisms where
 \vspace{-\topsep}\begin{enumerate}
\item[(i)] $\rho_g(G)\Sat ac_L^\alpha$
\item[(ii)] $\rho_{g^*}(H)\Sat ac_R^\beta$
\item[(iii)] $G \Rightarrow_{r^{\alpha},g} H$.
\end{enumerate}
These are obviously defines direct derivation $H \Rightarrow_{(r^-1)^{g^*,\alpha},g^*}\!G$ such that $H\Rightarrow_{w^{-1},g^*,g}\!G$.\\
(If). We can apply the above proof analogously.\qed
\end{proof}

The application of a rule depends on the existence of morphisms. Showing the existence of a morphism $L\rightarrow G$ for host graphs $L, G$ can be done by checking the existence of the structure of $L$ in $G$. For this, we define a condition over a graph to specify the structure and labels of a graph.

\Def{Specifying a totally labelled graph}{def:spec}{
Given a totally labelled graph $L$ with $V_L=\{v_1,\ldots,v_n\}$ and $E_L=\{e_1,\ldots,e_m\}$. Let $X=\{x_1,\ldots,x_k\}$ be the set of all list variables in $L$, and Type$(x)$ for $x\in X$ is $\mrm{int(x)}$, $\mrm{char(x)}$, $\mrm{string(x)}$, $\mrm{atom(x)}$, or $\mrm{true}$ if $x$ is an integer, char, string, atom, or list variable respectively. Let also Root$_L(v)$ for $v\in V_L$ be a function such that Root$_L(v)=\mrm{root(v)}$ if $p_L(v)=1$, and Root$_L(v)=\mrm{\neg root(v)}$ otherwise. \emph{A specification of $L$}, denoted by Spec$(L)$, is the condition over $L$:
\[\mrm{\bigwedge_{i=1}^k} \text{Type}(x_i) \mrm{~\wedge~\bigwedge_{i=1}^n \lV(v_i)}=\lst_L(v_i)\mrm{~\wedge~\mV(v_i)=}\mrk_L(v_i)\mrm{~\wedge~} \text{Root}_L(v_i)\]
\[\mrm{\wedge~\bigwedge_{i=1}^m s(e_i)=}s_L(e_i)\mrm{~\wedge~ t(e_i)=}t_L(e_i)\mrm{~\wedge~\lE(e_i)}=\lst_L(e_i)\mrm{~\wedge~
\mE(e_i)=}\mrk_L(e_i)\]
}

Since morphisms require the preservation of sources, targets, labels, and rootedness, we need to explicitly state rootedness and label of each node, source and target of each edge. Also, since we also want to specify rule graphs, the type of each variable needs to explicitly stated as well. Note that we only specify totally labelled graphs so that the label and rootedness of a node are always defined.

\Ex{Specification of $L$}{ex:spec}{
Let us consider the graph $L$ below:
\begin{center}
    \begin{tikzpicture}[remember picture,
  inner/.style={circle,draw,minimum size=18pt},
  outer/.style={inner sep=2pt}, scale=0.9
  ]
  \node[outer] (A) at (0,0) {
  \begin{tikzpicture}[scale=0.8, transform shape]
		\node[inner, label=below:\tiny 1] (Aa) at (0,0) {$\mtt{a+b}$};	
		\node[inner, label=below:\tiny 2, fill=red!50] (Ab) at (1.5,0) {$\mtt{a}$};	
        \node[inner, label=below:\tiny 3, ultra thick] (Ac) at (-1.5,0) {$\mtt{b}$};	
		\draw[-latex, dashed] (Aa) to node[above] {$\mtt{d}$} (Ab);
		\draw[-latex] (Aa) to node[above] {$\mtt{7}$} (Ac);
		\end{tikzpicture}};\end{tikzpicture}
\end{center}
where the edge incident to 1 and 2 is edge $e1$ and the other one is edge $e2$, and $a,b,$ are integer variables while $d$ is a list variable. Then, Spec$(L)$ is the condition over $L$:
\begin{small}
\[\mrm{int(a)\wedge\,int(b)\wedge\, \lV(1)=a+b\wedge\, \lV(2)=a\wedge\, \lV(3)=b}\]
\[\mrm{\wedge\, \mV(1)=none\wedge\, \mV(2)=red\wedge\, \mV(3)=none\wedge\, \neg root(1)\wedge\, \neg root(2)\wedge\, root(3)}\]
\[\mrm{\wedge\, s(e1)=1\wedge\, t(e1)=2\wedge\, s(e2)=1\wedge\, t(e2)=3\wedge\, \lE(e1)=d\wedge\, \lE(e2)=7}\]
\[\mrm{\wedge\, \mE(e1)=dashed\wedge\, \mE(e2)=none}\]
\end{small}}

When a graph $G$ satisfying Spec$(L)$, it means $G$ has a subgraph $H$ with identical node and edge identifiers and with the same structure (sources, targets, and rootedness) as $L$. The labels of $H$ and $L$ should also be the same if both are host graphs, but not necessarily if at least one of them is a rule graph. However, if $G$ is a host graph satisfying Spec$(L)$, then there must exist label assignment $\alpha$ for $L$ such that Spec$(L^\alpha)$ is satisfied by $G$, yields to the existence of inclusion $L^\alpha\rightarrow G$.

\Prop{Spec$(L)$ and inclusion}{prop:specL}{
Given a rule graph $L$ and a host graph $G$ where $V_L\subseteq V_G$ and $E_L\subseteq E_G$. Then,
$G \Sat \text{Spec}(L)$ if and only if there exists a label assignment $\alpha$ for $L$ such that there exists inclusion $g:L^\alpha\rightarrow G$.
}

\begin{proof}
Let us consider the construction of Spec$(L)$. It is clear that there is no node or edge variables in the condition. Hence, $G$ satisfies Spec$(L)$ if and only if there exists an assignment $\beta$ for all list variables in Spec($L$) and partial functions $\mu_V:V_L\rightarrow \mathbb{M}_V\backslash\{\none\}$ and $\mu_E:E_L\rightarrow \mathbb{M}_E\backslash\{\none\}$ for every item $i$ whose mark is $\mtt{any}$  such that substituting $\beta(x)$ for every variable $x$ and $\mu_V(i)$ or $\mu_E(i)$ for every $\mtt{any}$-mark associated with $i$ in Spec$(L)$ resulting a valid statement in $G$. 
Let we denote by $V_L=\{v_1,\ldots,v_n\}$, $E_L=\{e_1,\ldots,e_m\}$, and $X=\{x_1,\ldots,x_p\}$ the set of all nodes, edges, label variables in $L$. From the semantics of satisfaction, it is clear that  
\[ \bigwedge_{i=1}^n \lst^V_G(v_i)=(\lst^V_L(v_i))^\beta~\wedge~\mrk^V_G(v_i)=(\mrk^V_L(v_i))^{\mu_V}~\wedge~ \text{Root}_G(v_i)\]
\[\wedge~\bigwedge_{i=1}^m s_G(e_i)=s_L(e_i)~\wedge~ t_G(e_i)=t_L(e_i)~\wedge~\lst^E_G(e_i)=(\lst^E_L(e_i))^\beta~\wedge~
\mrk^E_G(e_i)=(\mrk^E_L(e_i))^{\mu_E}\]
Define $g(i)=i$ for every item $i\in V_L\cup E_L$ (such that identifiers are preserved by $g$), and $\alpha=\langle \beta, \mu_V, \mu_E\rangle$. It is clear that $g$ preserves sources, targets, lists, marks, and rootedness.\qed
\end{proof}

Note that Spec$(L)$ is a condition over $L$, so the a graph satisfying the condition must have node and edge identifiers of $L$ in the graph. It is obviously not practical, but we can make it more general by replacing the identifiers with fresh variables such that a graph satisfying the condition does not necessarily contain identifiers of $L$.

\Def{Variablisation of a condition over a graph}{def:var}{
Given a graph $L$ and a condition $c$ over $L$ where $\{v_1,\ldots,v_n\}$ and $\{e_1,\ldots,e_m\}$ represent the set of node and edge constants in $c$ respectively. Let $x_1,\ldots,x_n$ be node variables not in $c$ and $y_1,\ldots,y_m$ be edge variables not in $c$. \emph{Variablisation of $c$}, denoted by Var$(c)$, is the FO formula
\[\mrm{\bigwedge_{i=1}^n\bigwedge_{j\neq i}x_i\neq x_j~\wedge~ \bigwedge_{i=1}^m\bigwedge_{j\neq i}y_i\neq y_j
~\wedge~c^{[v_1\mapsto x_1]\ldots[v_n\mapsto x_n][e_1\mapsto y_1]\ldots[e_m\mapsto y_m]}}\]
where $c^{[a\mapsto b]}$ is obtained from $c$ by replacing every occurrence of $a$ with $b$, and $c^{[a\mapsto b][d\mapsto e]}=(c^{[a\mapsto b]})^{[d\mapsto e]}$.
}

\Lemma{lemma:var}{
Given a graph $L$ and a condition $c$ over $L$. For every host graph $G$ and morphism $g:L\rightarrow G$,
\[G\Sat\text{Var}(c)\text{ if and only if }\RG\Sat c\text.\]
}

\begin{proof}
Let $V=\{v_1,\ldots,v_n\}$ and $E=\{e_1,\ldots,e_m\}$ represent the set of node and edge constants in $c$ respectively, and $X=x_1,\ldots,x_n$ be node variables not in $c$ and $Y=y_1,\ldots,y_m$ be edge variables not in $c$ such that Var$(c)$ is the FO formula shown in the definition above.\\
Let $\alpha$ be an assignment such that $\alpha(x_i)=v_i$ and $\alpha(y_i)=e_i$ for all $x_i\in X$ and $y_i\in Y$. It is obvious that (Var$(c)$)$^{\alpha}\equiv c$, since we only replace each node/edge variable with the constant that was replaced by the variable to obtain Var$(c)$. Therefore, $\RG\Sat c$ iff $\RG\Sat$Var$(c)^{\alpha}$ iff $G\Sat$Var$(c)^{\alpha}$, which means that $G$ satisfies Var$(c)$.\qed
\end{proof}$~~$\\

If we apply this variablisation to Spec$(L)$ for a rule graph $L$, morphism as in Proposition \ref{prop:specL} should also exist but without necessarily preserves identifiers.

\Lemma{lemma:VarL}{
Given rule graph $L$ and host graph $G$. Then,
$G \Sat \text{Var(Spec}(L))$ if and only if there exists a label assignment $\alpha$ for $L$ such that there exists injective morphism $g:L^\alpha\rightarrow G$.
}

\begin{proof}
$G$ satisfying Form(Spec$(L)$) if and only if there exists formula assignment $\gamma=\tuple{\gamma_V, \gamma_L. \gamma_\mathbb{L}}$ and mappings $\mu_V:V_L\rightarrow \mathbb{M}_V\backslash\{none\}$ and $\mu_E:E_L\rightarrow \mathbb{M}_E\backslash\{none\}$ for every item $i$ whose mark is $\mtt{any}$, such that (Form(Spec$(L))^{\gamma})^{\mu_V,\mu_E}$ is true in $G$. 

If we consider Form(Spec$(L))^{\gamma_G}$, it clearly gives us a condition similar to Spec$(L)$, but with different identifiers. Let $X$ denotes the set of images of $\gamma_G$, and $\beta:(V_L\cup G_L)\rightarrow X$ be a bijective mapping such that Spec$(L)^\beta =$Form(Spec$(L))^{\gamma_G}$.

Let we denote by $V_L=\{v_1,\ldots,v_n\}$, $E_L=\{e_1,\ldots,e_m\}$, and $X=\{x_1,\ldots,x_p\}$ the set of all nodes, edges, label variables in $L$. From the semantics of satisfaction, it is clear that  
\begin{scriptsize}
\[ \bigwedge_{i=1}^n \lst^V_G(\beta(v_i))=(\lst^V_L(v_i))^{\gamma_\mathbb{L}}~\wedge~\mrk^V_G(\beta(v_i))=(\mrk^V_L(v_i))^{\mu_V}~\wedge~ \text{Root}_G(\beta(v_i))\]
\[\wedge~\bigwedge_{i=1}^m s_G(\beta(e_i))=s_L(e_i)~\wedge~ t_G(\beta(e_i))=t_L(e_i)~\wedge~\lst^E_G(\beta(e_i))=(\lst^E_L(e_i))^{\gamma_\mathbb{L}}~\wedge~
\mrk^E_G(\beta(e_i))=(\mrk^E_L(e_i))^{\mu_E}\]
\end{scriptsize}
Define $g(i)=\beta(i)$ for every item $i\in V_L\cup E_L$, and $\alpha=\langle \gamma_\mathbb, \mu_V,\mu_E\rangle$. It is clear that $g:L^\alpha\rightarrow G$ preserves sources, targets, lists, marks, and rootedness.\qed
\end{proof}

\subsection{Properties of first-order formulas}

\Lemma{lemma:isocond}{
Given a FO formula $c$ and two isomorphic host graphs $G$ and $H$ with isomorphism $f:G\rightarrow H$. Let $\alpha=\tuple{\alpha_G,\alpha_\mathbb{L}}$ and $\beta=\tuple{\beta_H,\beta_\mathbb{L}}$ be formula assignments where $\beta_H(x)=f(\alpha_G(x))$ for every node and edge variable $x$ in $c$ and $\beta_\mathbb{L}(x)=\alpha_\mathbb{L}(x)$ for every list variable $x$ in $c$. Then,
\[G\Sata c \text{ if and only if } H\vDash^\beta c\]
}

\begin{proof} Here, we prove the Lemma inductively.\\
(Base case).
 \vspace{-\topsep}\begin{enumerate}
    \item If $c=\mrm{true}$ or $c=\mrm{false}$, it is obvious that $G\Sata c$ iff $H\vDash^\beta c$
    \item If $c$ is a predicate P$(x)$ for $P\in\{\mrm{int,char,string,atom}\}$ and some list variable $x$, the satisfaction of the predicate is independent on host graphs. Also, it is obvious that $x^\alpha=x^\beta$ such that P$(x^\alpha)$ is true in every host graphs iff P$(x^\beta)$ is true in every host graph
    \item If $c=\mrm{root(x)}$ for some term $x$ representing a node, $x^\beta=g(x^\alpha)$. From Definition \ref{def:morphisms}, we know that $p_G(x^\alpha)=p_H(g(x^\alpha))$. Hence, $\mrm{root(x^\alpha}$ is true in $G$ iff $\mrm{root(x^\beta)}$ is true in $H$
    \item If $c=x_1\otimes x_2$ for $\otimes\in\{=,\neq\}$ and terms $x_1,x_2$ representing edges or nodes, $x_1^\beta=g(x_1^\alpha)$ and $x_2^\beta=g(x_2^\alpha)$. It is clear that $x_1^\alpha\otimes x_2^\alpha$ iff $g(x_1^\alpha)\otimes g(x_2^\alpha)$ because $g$ is injective.
    \item If $c=x_1\otimes x_2$ for $\otimes\in\{=,\neq,\leq,\geq\}$ and terms $x_1,x_2$ representing lists, $x_1^\alpha=x_1^\beta$ and $x_2^\alpha=x_2^\beta$ (note that $l_V(x^\alpha)=l_V(g(x^\alpha))=l_V(x^\beta)$ for all node variable $x$ in $c$, and analogously for $l_E(x)$). Since the truth value of $x_1^\alpha\otimes x_2^\alpha$ does not depend on host graphs, $x_1^\alpha\otimes x_2^\alpha$ is true in $G$ iff $x_1^\beta\otimes x_2^\beta$ is true in $H$
\end{enumerate}
(Inductive case).
Next, we prove the Lemma for the inductive cases. Let $c_1, c_2$ be FO formulas such that $G\Sata c_1$ iff $H\vDash^\beta c_1$ and $G\Sata c_2$ iff $H\vDash^\beta c_2$. Also, let $c^{x\mapsto v}$ for some variable $x$ and constant $v$ represents $c$ after replacement of every free variable $x$ in $c$ with $v$.
 \vspace{-\topsep}\begin{enumerate}
    \item If $c=\neg c_1,$ $G\Sata \neg c_1$ iff $c_1^\alpha$ is false in $G$ iff $c_1^\beta$ is false in $H$ iff $H\vDash^\beta\neg c_1$
    \item If $c=c_1\vee c_2,$ $G\Sata c_1\vee c_2$ iff $G\Sata c_1 \vee G\Sata c_2$ iff $H\vDash^\beta c_1 \vee H\vDash^\beta c_2$ iff $H\vDash^\beta c_1\vee c_2$
    \item If $c=c_1\wedge c_2,$ $G\Sata c_1\wedge c_2$ iff $G\Sata c_1 \wedge G\Sata c_2$ iff $H\vDash^\beta c_1 \wedge H\vDash^\beta c_2$ iff $H\vDash^\beta c_1\wedge c_2$
    \item $G\Sata\E{V}x(c_1)$ iff $(c_1^\alpha)^{[x\mapsto v]}$ for some $v\in V_G$ is true in $G$ iff $(c_1^\beta)^{[x\mapsto g(v)]}$ is true in $H$ iff $H\vDash^\beta\E{V}x(c_1)$
    \item $G\Sata\E{E}x(c_1)$ iff $(c_1^\alpha)^{[x\mapsto e]}$ for some $e\in E_G$ is true in $G$ iff $(c_1^\beta)^{[x\mapsto g(e)]}$ is true in $H$ iff $H\vDash^\beta\E{E}x(c_1)$
    \item $G\Sata\E{L}x(c_1)$ iff $(c_1^\alpha)^{[x\mapsto i]}$ for some $i\in \mathbb{L}$ is true in $G$ iff $(c_1^\beta)^{[x\mapsto i]}$ is true in $H$ iff $H\vDash^\beta\E{L}x(c_1)$

\end{enumerate}\qed
\end{proof}

\begin{corollary}\label{cor:isocond}
Given two isomorphic host graphs $G$ and $H$, and a FO formula $c$. It is true that
\[G\models c \text{ if and only if } H\models c\]
\end{corollary}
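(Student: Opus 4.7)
The plan is to derive this corollary directly from Lemma~\ref{lemma:isocond}, which already handles the version with explicit matched assignments. The only remaining work is to strip away the dependence on a particular assignment, which is a matter of unfolding Definition~\ref{def:satisfaction} of $\models$ and constructing the matched assignment on the other side.

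First I would unfold the left-hand side: $G \models c$ means, by Definition~\ref{def:satisfaction}, that there exists a formula assignment $\alpha = \langle \alpha_G, \alpha_\mathbb{L}\rangle$ such that $G \models^\alpha c$. Given the isomorphism $f\colon G \to H$, I would then construct $\beta = \langle \beta_H, \beta_\mathbb{L}\rangle$ by setting $\beta_H(x) = f(\alpha_G(x))$ for every free node and edge variable $x$ of $c$, and $\beta_\mathbb{L}(x) = \alpha_\mathbb{L}(x)$ for every free list variable. Since $f$ is an isomorphism, $\beta_H$ maps node variables into $V_H$ and edge variables into $E_H$, so $\beta$ is indeed a well-formed formula assignment for $c$ on $H$. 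Lemma~\ref{lemma:isocond} then yields $H \models^\beta c$, whence $H \models c$ by Definition~\ref{def:satisfaction}.

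For the converse direction, I would observe that since $f\colon G \to H$ is an isomorphism, so is $f^{-1}\colon H \to G$ (by the symmetry of the conditions in Definition~\ref{def:morphisms} when the morphism is bijective and preserves the undefined values of $l$ and $p$). Applying exactly the same argument as above but with $f^{-1}$ in place of $f$, starting from a witness assignment $\beta$ for $H \models c$ and building $\alpha$ via $\alpha_G(x) = f^{-1}(\beta_H(x))$, gives $G \models c$.

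I do not expect any genuine obstacle: the lemma already performs the induction over the structure of FO formulas, handling in particular the preservation of sources, targets, labels, marks, rootedness and quantifier witnesses under the isomorphism. All that is required here is the bookkeeping step of choosing the right assignment, together with the (straightforward) observation that $f^{-1}$ inherits the isomorphism properties needed to invoke the lemma in the opposite direction.
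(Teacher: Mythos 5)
Your proposal is correct and follows essentially the same route as the paper: unfold Definition~\ref{def:satisfaction}, build the matched assignment $\beta_H = f\circ\alpha_G$ (keeping $\alpha_\mathbb{L}$), and invoke Lemma~\ref{lemma:isocond}. The paper compresses both directions into a single chain of biconditionals, whereas you make the converse explicit via $f^{-1}$; that is only a presentational difference, and arguably a slightly more careful one.
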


\begin{proof}
$G\Sat c$ iff there exists an assignment $\alpha=\tuple{\alpha_G,\alpha_\mathbb{L}}$ such that $G\Sata c$. By Lemma \ref{lemma:isocond}, $G\Sata c$ iff $H\vDash^\beta c$ for $\beta=\tuple{\beta_H,\alpha_\mathbb{L}}$ where $\beta_H(x)=g(\alpha_G(x))$ for all node and edge variables $x$ iff $H\Sat c$.\qed
\end{proof}

$~$\\

\Lemma{lemma:BolOperators}{
Given a host graph $G$ and FO formulas $c_1, c_2$. Then, the following holds:
 \vspace{-\topsep}\begin{enumerate}
    \item $G\Sat c_1\vee c_2$ if and only if $G\Sat c_1 \vee G\Sat c_2$
    \item $G\Sat c_1\wedge c_2$ if and only if $G\Sat^\alpha c_1 \wedge G\Sat^\alpha c_2$ for some assignment $\alpha$
    \item $G\Sat\neg c_1$ if and only if $\neg(G\Sat^\alpha c_1)$ for some assignment $\alpha$
    \item $V_G\neq\emptyset \wedge G\Sat\E{V}x(c_1)$ if and only if $G\Sat c_1$
    \item $E_G\neq\emptyset \wedge G\Sat\E{E}x(c_1)$ if and only if $G\Sat c_1$
    \item $G\Sat\E{L}x(c_1)$ if and only if $G\Sat c_1$
\end{enumerate}
Furthermore, the above properties also hold if $c_1, c_2$ are conditions over $G$.
}

\begin{proof}$~$\\
For a condition $d$ over $G$, from the definition of condition over a graph we know that $d=c^{\alpha_G}$ for some FO formula $c$ and node/edge assignment $\alpha_G$. Also, there is no free node and edge variables in $d$ so that $G\Sat^{\alpha_\mathbb{L}}d$ for some list assignment $\alpha_\mathbb{L}$ is equivalent to $G\Sata c$ for $\alpha=\tuple{\alpha_G,\alpha_\mathbb{L}}$. Hence, we can consider a condition over a graph as a FO formula with a fixed node/label assignment. Then for some FO formulas $c_1,c_2,$
 \vspace{-\topsep}\begin{enumerate}
    \item (only if) $G\Sat c_1\vee c_2$ implies $G\Sat^\alpha c_1\vee c_2$ for some assignment $\alpha$ implies $G\Sat^\alpha c_1$ or $G\Sat^\alpha c_2$ implies $G\Sat c_1\vee G\Sat c_2$\\
    (if) $G\Sat c_1 \vee G\Sat c_2$ implies $G\Sat^\alpha c_1 \vee G\Sat^\beta c_2$ for some assignments $\alpha,\beta$. It implies $(G\Sat^\alpha c_1\vee c_2) \vee (G\Sat^\beta c_1\vee c_2)$. Hence, $G\Sat c_1\vee c_2$
    \item $G\Sat c_1\wedge c_2$ iff $G\Sat^\alpha c_1\wedge c_2$ for some assignment $\alpha$ iff $G\Sat^\alpha c_1$ and $G\Sat^\alpha c_2$
    \item $G\Sat\neg c_1$ iff $c_1^\alpha$ is false in $G$ for some assignment $\alpha$ such that $G\Sat^\alpha c_1$ is false. Hence, $\neg(G\Sata c_1)$
    \item $G\Sat \E{V}x(c_1)$ iff $G\Sat c_1^{[x\mapsto v]}$ for some $v\in V_G$ iff $G\Sata c_1$ for some assignment $\alpha$ such that $\alpha(x)=v$ iff $G\Sat c_1$
    \item $G\Sat \E{E}x(c_1)$ iff $G\Sat c_1^{[x\mapsto e]}$ for some $e\in E_G$ iff $G\Sata c_1$ for some assignment $\alpha$ such that $\alpha(x)=e$ iff $G\Sat c_1$
    \item $G\Sat \E{L}x(c_1)$ iff $G\Sat c_1^{[x\mapsto k]}$ for some $k\in \mathbb{L}$ iff $G\Sata c_1$ for some assignment $\alpha$ such that $\alpha(x)=k$ iff $G\Sat c_1$
\end{enumerate}\qed
\end{proof}

\Lemma{lemma:quantifier}{
Given a host graph $G$ and a condition $c$ over $G$. Let $\{v_1,\ldots,v_n\}\subseteq V_G$ and $\{e_1,\ldots,e_m\}\subseteq E_G$. Then,
 \vspace{-\topsep}\begin{enumerate}
    \item $\mrm{\E{V}x(c)\equiv c^{[x\mapsto v_1]}\vee\ldots\vee c^{[x\mapsto v_n]}\vee\E{V}x(x\neq v_1\wedge\ldots\wedge x\neq v_n\wedge c)}$
    \item $\mrm{\E{E}x(c)\equiv c^{[x\mapsto e_1]}\vee\ldots\vee c^{[x\mapsto e_m]}\vee\E{V}x(x\neq e_1\wedge\ldots\wedge x\neq v_m\wedge c)}$
    \item $\mrm{\E{E}x(c)\equiv \E{E}x(\bigvee_{i=1}^n(\bigvee_{j=1}^n s(x)=v_i\wedge t(x)=v_j\wedge c^{[s(x)\mapsto v_i, t(x)\mapsto v_j]})}$\\
    $\mrm{~~~~~~~~~~~~~~~~~~~~~~~~~~\vee(s(x)=v_i\wedge\bigwedge_{j=1}^n t(x)\neq v_j\wedge c^{[s(x)\mapsto v_i]})}$\\
    $\mrm{~~~~~~~~~~~~~~~~~~~~~~~~~~\vee(\bigwedge_{j=1}^n s(x)\neq v_j\wedge t(x)=v_i\wedge c^{[t(x)\mapsto v_i]})}$\\
    $\mrm{~~~~~~~~~~~~~~~~~~~~\vee(\bigwedge_{i=1}^n s(x)\neq v_i\wedge\bigwedge_{i=1}^n t(x)\neq v_i\wedge c))}$
\end{enumerate}
}

\begin{proof}$~$\\
\begin{tabular}{clcl}
     1. & $\mrm{\E{V}x(c)}$ & $\equiv$ & $\mrm{\E{V}x(((x=v_1\vee\ldots\vee x=v_n)\vee\neg(x=v_1\vee\ldots\vee x=v_n))\wedge c)}$\\
     & & $\equiv$ & $\mrm{\E{V}x((x=v_1\wedge c)\vee\ldots\vee(x=v_n\wedge c)\vee(x\neq v_1\wedge\ldots\wedge x\neq v_n\wedge c))}$\\
     & & $\equiv$ & $\mrm{\E{V}x(c^{[x\mapsto v_1]}\vee\ldots\vee c^{[x\mapsto v_n]}\vee(x\neq v_1\wedge\ldots\wedge x\neq v_n\wedge c))}$\\
     & & $\equiv$ & $\mrm{c^{[x\mapsto v_1]}\vee\ldots\vee c^{[x\mapsto v_n]}\vee\E{V}x(x\neq v_1\wedge\ldots\wedge x\neq v_n\wedge c)}$\\
     2. & \multicolumn{3}{l}{Analogous to point 1}\\
     3. & $\mrm{\E{E}x(c)}$ & $\equiv$ & $\mrm{\E{E}x(((s(x)=v_1\vee\ldots\vee s(x)=v_n)\vee\neg(s(x)=v_1\vee\ldots\vee s(x)=v_n))}$\\
     & & & $\mrm{~~~~~~\wedge (t(x)=v_1\vee\ldots\vee t(x)=v_n\vee\neg(t(x)=v_1\vee\ldots\vee t(x)=v_n)) \wedge c)}$\\
     & & $\equiv$ & $\mrm{\E{E}x((s(x)=v_1\wedge (t(x)=v_1\vee\ldots\vee t(x)=v_n)\wedge c)}$\\
     & & & $~~~~~~~~~~~~~~~~\ldots$\\
     & & & $~~~~~~~\mrm{(s(x)=v_n\wedge (t(x)=v_1\vee\ldots\vee t(x)=v_n)\wedge c)}$\\
     & & & $~~~~~~~\mrm{(s(x)\neq v_1\wedge\ldots\wedge s(x)\neq v_n\wedge (t(x)=v_1\vee\ldots\vee t(x)=v_n)\wedge c)}$\\
     & & & $~~~~~~~\mrm{(s(x)=v_i\wedge (t(x)\neq v_1\wedge\ldots\wedge t(x)\neq v_n)\wedge c)}$\\
    & & & $~~~~~~~\mrm{(s(x)\neq v_1\wedge\ldots\wedge s(x)\neq v_n \wedge (t(x)\neq v_1\wedge\ldots\wedge t(x)\neq v_n)\wedge c)}$\\
    & & $\equiv$ & $\mrm{\E{E}x(\bigvee_{i=1}^n(\bigvee_{j=1}^n s(x)=v_i\wedge t(x)=v_j\wedge c^{[s(x)\mapsto v_i, t(x)\mapsto v_j]})}$\\
    & & & $\mrm{~~~~~~~~~~~~~~\vee(\bigwedge_{j=1}^n s(x)\neq v_j\wedge t(x)=v_i\wedge c^{[t(x)\mapsto v_i]})}$\\
    & & & $\mrm{~~~~~~~~~~~~~~\vee(s(x)=v_i\wedge\bigwedge_{j=1}^n t(x)\neq v_j\wedge c^{[s(x)\mapsto v_i]})}$\\
    & & & $\mrm{~~~~~~\vee(\bigwedge_{i=1}^n s(x)\neq v_i\wedge\bigwedge_{i=1}^n t(x)\neq v_i\wedge c))}$
\end{tabular}\qed\end{proof}

\section{Constructing a Strongest Liberal Postcondition}
\label{sec:SLP}
In this section, we introduce a way to construct a strongest liberal postcondition over a graph program. Here, conditions (including pre- and postconditions) refer to closed FO formulas.

\subsection{Calculating strongest liberal postconditions}
A strongest liberal postcondition is one of predicate transformers \cite{DijkstraS90} for forward reasoning. It expresses properties that must be satisfied by every graph result from the application of the input rule schema to a graph satisfying the input precondition. 

\Def{Strongest liberal postcondition over a conditional rule schema}{def:slp}{
An assertion $d$ is a \emph{liberal postcondition} w.r.t. a conditional rule schema $r$ and a precondition $c$, if for all host graphs $G$ and $H$,
\[G\vDash c \text{ and } G\Rightarrow_r H \text{ implies }H\vDash d.\]
A \emph{strongest liberal postcondition} w.r.t. $c$ and $r$, denoted by $\text{SLP}(c,r)$, is a liberal postcondition w.r.t. $c$ and $r$ that implies every liberal postcondition w.r.t. $c$ and $r$.
}
$~$\\$~$\\
Our definition of a strongest liberal postcondition is different with the definitions in \cite{HP09,DijkstraS90,Cousot90} where they define $\text{SLP}(c,r)$ as a condition such that for every host graph $H$ satisfying the condition, there exists a host graph $G$ satisfying $c$ where $G\Rightarrow_rH$. Lemma \ref{lemma:slp} shows that their definition and ours are equivalent.

\begin{lemma}\normalfont\label{lemma:slp}
Given a rule schema $r$, a precondition $c$. Let $d$ be a liberal postcondition w.r.t. $r$ and $c$. Then $d$ is a strongest liberal postcondition w.r.t. $r$ and $c$ if and only if for every graph $H$ satisfying $d$, there exists a host graph $G$ satisfying $c$ such that $G\Rightarrow_{r}H$.
\end{lemma}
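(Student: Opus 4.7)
The plan is to prove the two implications separately, with the nontrivial direction ($\Rightarrow$) handled by contradiction through a strictly strengthened liberal postcondition. The easy direction ($\Leftarrow$) is essentially immediate from the definitions: assume every $H$ satisfying $d$ has some predecessor $G \Sat c$ with $G \Rightarrow_r H$, and let $d'$ be an arbitrary liberal postcondition. For any $H \Sat d$, pick such a $G$ and conclude $H \Sat d'$ from the liberal-postcondition property of $d'$. This yields $d \Rightarrow d'$, so $d$ implies every liberal postcondition and is therefore strongest.

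For ($\Rightarrow$), I would argue by contradiction. Suppose $d$ is a strongest liberal postcondition but the predecessor property fails, so some host graph $H_0$ satisfies $d$ yet no $G \Sat c$ produces $H_0$ via $r$. I would then construct a strictly strengthened liberal postcondition $d^{*} = d \wedge \neg\,\mrm{iso}(H_0)$, where $\mrm{iso}(H_0)$ is a first-order formula expressing that the host graph is isomorphic to $H_0$. This formula can be built from $\text{Var}(\text{Spec}(H_0))$ in the sense of Definition \ref{def:spec} and Definition \ref{def:var}, conjoined with universal closure clauses stating that every node and every edge equals one of the existentially quantified witnesses; then $\mrm{iso}(H_0)$ holds in a host graph $G$ exactly when $G \cong H_0$. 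To see that $d^{*}$ remains a liberal postcondition, take any $G \Sat c$ with $G \Rightarrow_r H'$: certainly $H' \Sat d$, and if in addition $H' \cong H_0$, then by the standard isomorphism-invariance of direct derivations $G$ would also derive $H_0$ itself, contradicting the choice of $H_0$. Hence $H' \not\Sat \mrm{iso}(H_0)$, and so $H' \Sat d^{*}$. But $H_0 \Sat d$ while $H_0 \not\Sat d^{*}$, so $d$ does not imply $d^{*}$, contradicting the assumption that $d$ is strongest.

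The main obstacle is producing $\mrm{iso}(H_0)$ as a genuine first-order formula in the language of Definition \ref{def:FOLabc} and justifying the isomorphism-invariance of direct derivations. The latter is a standard consequence of the universal property of natural pushouts, and $\mrm{iso}(H_0)$ is expressible precisely because $H_0$ is a finite graph: one enumerates its nodes and edges and caps the universe by a conjunction of disjunctions of equalities, along the lines of the $\text{Var}/\text{Spec}$ machinery already developed in Section \ref{sec:FOL}. A slicker alternative, should the paper be read as allowing semantic assertions, is to take $d^{*}$ directly as the set $\{H : \exists\,G \Sat c \text{ with } G \Rightarrow_r H\}$; this is a liberal postcondition by construction, and $d$ being strongest immediately gives the predecessor property.
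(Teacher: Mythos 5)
Your proof is correct, and the easy direction coincides with the paper's. For the harder ``only if'' direction, however, you take a genuinely different and noticeably more rigorous route. The paper argues informally: assuming some $H_0\vDash d$ has no predecessor $G\vDash c$ with $G\Rightarrow_r H_0$, it concludes that one ``can not guarantee'' that $H_0$ satisfies every liberal postcondition --- but it never exhibits a liberal postcondition that $H_0$ actually violates, so the contradiction with $d$ being strongest is asserted rather than derived. You close exactly this gap by constructing the explicit separating assertion $d^{*}=d\wedge\neg\,\mathrm{iso}(H_0)$: it is a liberal postcondition (any derivable $H'\cong H_0$ would, by isomorphism-closure of direct derivations, make $H_0$ itself derivable, contradicting the choice of $H_0$), yet $H_0\vDash d$ and $H_0\nvDash d^{*}$, so $d$ is not strongest. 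The two prerequisites you flag are both available in the paper's framework: $\mathrm{iso}(H_0)$ is a closed first-order formula obtained from $\mathrm{Var}(\mathrm{Spec}(H_0))$ (Definitions \ref{def:spec} and \ref{def:var}) together with universal ``domain-capping'' clauses $\forall_{\mathrm{V}}z(z{=}x_1\vee\ldots\vee z{=}x_n)$ and its edge analogue, using that $H_0$ is finite, totally labelled, and has labels drawn from the constants of Definition \ref{def:FOLabc}; and isomorphism-invariance of $\Rightarrow_r$ follows from the uniqueness of (natural) pushouts up to isomorphism. Your closing remark is also apt: under a purely semantic reading of ``assertion'' one could instead take $d^{*}$ to be the set of graphs reachable from $c$, which is the shortest argument, but since Section \ref{sec:SLP} fixes assertions to be closed first-order formulas, the $\mathrm{iso}(H_0)$ construction is the one that is actually needed here. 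What your approach buys is a proof that genuinely depends on (and makes visible) the expressive power of the assertion language; what the paper's sketch buys is brevity, at the cost of not really establishing the claim.
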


\begin{proof}$~$\\
(If).\\
Let $H$ be a host graph satisfying $d$. Then, there must exists a graph $G$ such that $G\vDash c$ and $G\Rightarrow_r H$. Hence, $H\vDash a$ for any liberal postcondition $a$ from the definition of a liberal postcondition.\\
(Only if).\\
Assume that it is not true that for every host graph $H$, $H\vDash d$ implies there exists a host graph $G$ satisfying $c$ such that $G\Rightarrow_{r}H$. We show that a graph satisfying $d$ can not imply the graph satisfying any liberal postcondition w.r.t $r$ and $c$. From the assumption, there exists a host graph $H$ such that every host graph $G$ does not satisfy $c$ or does not derive $H$ by $r$. In the case of $G$ does not derive $H$ by $r$, we clearly can not guarantee characteristic of $H$ w.r.t. $c$. Then for the case where $G$ does not satisfy $c$ but derives $H$ by $r$, we also can not guarantee the satisfaction of any liberal postcondition $a$ over $c$ and $r$ in $H$ because $a$ is dependent of $c$. Hence, we can not guarantee that $H$ satisfying all liberal postcondition w.r.t. $r$ and $c$.\qed
\end{proof}

To construct $\text{SLP}(c,r)$, we use the generalised version of $r$ to open a possibility of constructing a strongest liberal postcondition over the inverse of a rule schema. Since a rule schema has some restriction on the existence of variables and $\any$-mark, a rule schema may not be invertible. By using the generalised version of a rule schema, we omit this limitation so that the generalised version of the inverse of a rule schema is also a generalised rule schema so that we can use the construction for an inverse rule as well.

In this paper, $\text{SLP}(c,r)$ is obtained by defining transformations Lift$(c,r^\vee)$, Shift$(c,r^\vee)$, and Post$(c,r^\vee)$. The transformation Lift transforms the given condition $c$ into a left-application condition w.r.t. the given unrestricted rule schema $r$. Then, we transform the left-application condition to right-application condition by transformation Shift. Finally, the transformation Post transforms the right-application condition to a strongest liberal postcondition (see \figurename~\ref{fig:pretopost}).

\begin{figure}
    \centering
    \begin{tikzpicture}[scale=0.9, transform shape]
	\node (a) at (4, 0.75) {$ac_R$};
	\node (b) at (0, 0.75) {$ac_L$};
	\node (d) at (4, -1) {$\text{SLP}(c,r)$};
	\node (f) at (0, -1) {$c,r$};
	\draw[->] (f) to node {} (d);
	\draw[->] (f) to node[left] {Lift} (b);
	\draw[->] (b) to node[above] {Shift} (a);
	\draw[->] (a) to node[right] {Post} (d);
\end{tikzpicture}
    \caption{Constructing $\text{SLP}(c,r)$}
    \label{fig:pretopost}
\end{figure}

For a conditional rule schema $\tuple{r,\Gamma}$ with rule schema $r=\tuple{L\leftarrow K\rightarrow R}$ and a precondition $c$, when a graph $G$ satisfying $c$ and there exists a label assignment $\alpha_L$ such that $G\Rightarrow_{r^\alpha,g}H$ for some host graph $H$ and injective graph morphism $g:L^\alpha\hookrightarrow G$, $ac_L^\alpha=(\text{Lift}(c,r^\vee))^\alpha$ should be satisfied by $G$ w.r.t. $g$. The replacement graph $\RG$ should satisfies $ac_L$ which means $ac_L$ should consist of the precondition $c$, rule schema condition $\Gamma$, and the dangling condition.

$G\Rightarrow_{r^\alpha,g}H$ with injective graph morphism $g:L^\alpha\hookrightarrow G$ and label assignment $\alpha_L$ obviously assert the existence of injective morphism $g^*:R^\beta\rightarrow H$ for some label assignment $\beta_R$ such that $\alpha_L(i)=\beta_R(i)$ for common element $i$ (see \figurename~\ref{fig:ddergenrule}). The graph replacement $\RH$ then should satisfy $ac_R^\beta=(\text{Shift}(c,r^\vee))^\beta$. The graph condition $ac_R$ should describe the elements of the image of the comatch and some properties of $c$ that are still relevant after the rule schema application.

Basically, $ac_R$ is already a strongest property that must be satisfied by a resulting graph. However, it has node/edge constants so that we need to change it into a closed formula so that we finally obtain a strongest liberal postcondition. This part is done by the transformation Post.

To give a better idea of the transformations we define in this chapter, we show examples after each definition. We use the conditional rule schemata $r_1=\mtt{del}$ of \figurename~\ref{fig:ruledel} and $\mtt{copy}$ of \figurename~\ref{fig:rulecopy} and the preconditions $q_1=\mrm{\neg\E{E}x(m_V(s(x))\neq none)}$ and  $q_2=\mrm{\E{V}x(\neg root(x))}$ as running examples. We denote by $\Gamma_1$ and $\Gamma_2$ the GP\,2 rule schema conditions $\mrm{d\geq e}$ and $\mrm{outdeg(1)\neq 0}$ respectively. Also, we denote by $r_1$ and $r_2$ the rule schema of $\mtt{del}$ and $\mtt{copy}$ respectively.

\begin{figure}
\subfloat{%
\begin{minipage}[c]{0.62\textwidth}%
\centering  
  \begin{tikzpicture}[remember picture,
  inner/.style={circle,draw,minimum size=18pt},
  outer/.style={inner sep=2pt}, scale=0.9
  ]
  \node[outer] (AA) at (0,1) {$\mtt{del(a,b,c:list;~ d,e:int)}$};
  \node[outer] (A) at (0,0) {
  \begin{tikzpicture}[scale=0.8, transform shape]
		\node[inner, label=below:\tiny 1] (Aa) at (0,0) {$\mtt{a}$};	
		\node[inner, label=below:\tiny 2] (Ab) at (1.5,0) {$\mtt{b}$};	
        \node[inner, label=below:\tiny 3] (Ac) at (-1.5,0) {$\mtt{c}$};	
		\draw[-latex] (Aa) to node[above] {$\mtt{d}$} (Ab);
		\draw[-latex] (Aa) to node[above] {$\mtt{e}$} (Ac);
		\end{tikzpicture}};
  \node[outer] (B) at (0,-1) {$\mtt{where~d\geq e}$};
  \node[outer] (B) at (2.25,0) {$\Rightarrow$};
  \node[outer] (C) at (3.75,0) {
  \begin{tikzpicture}[scale=0.8, transform shape]
		\node[inner, label=below:\tiny 1, fill=red!50] (Aa) at (0,0) {$\mtt{a}$};	
		\node[inner, label=below:\tiny 2] (Ab) at (1.75,0) {$\mtt{b}$};	
        \draw[-latex] (Aa) to node[above] {$\mtt{d+e}$} (Ab);
		\end{tikzpicture}};
	\end{tikzpicture}
\end{minipage}}
\subfloat{%
\fbox{\begin{minipage}[c]{0.35\textwidth}%
            \begin{tabular}{l}
\small{$\mtt{del~(a,b,c:list; d,e:int;)}$}\\
\small{$\mtt{[~\mid~(1, a)~(2, b)~(3, c)}$}\\
\small{$\mtt{~~~\mid~(e1, 1, 2, d)~(e2, 1, 3, e)]}$}\\
\small{$\mtt{=>}$}\\
\small{$\mtt{[~\mid~(1\#red, a)~(2, b)}$}\\
\small{$\mtt{~~~\mid~(e1, 1, 2, d+e)]}$}\\
\small{$\mtt{interface = \{1, 2\}}$}\\
\small{$\mtt{where~d\geq e}$}
             \end{tabular}
\end{minipage}}}
\caption{GP\,2 conditional rule schema $\mtt{del}$}
\label{fig:ruledel}
\end{figure}

\begin{figure}
\subfloat{%
\begin{minipage}[c]{0.62\textwidth}%
\centering  
  \begin{tikzpicture}[remember picture,
  inner/.style={circle,draw,minimum size=18pt},
  outer/.style={inner sep=2pt}, scale=0.9
  ]
  \node[outer] (AA) at (0,1) {$\mtt{copy(a:list)}$};
  \node[outer] (A) at (0,0) {
  \begin{tikzpicture}[scale=0.8, transform shape]
		\node[inner, label=below:\tiny 1, ultra thick] (Aa) at (0,0) {$\mtt{a}$};	
		\end{tikzpicture}};
  \node[outer] (B) at (0,-1) {$\mtt{where~outdeg(1)~!\!=~0}$};
  \node[outer] (B) at (1.5,0) {$\Rightarrow$};
  \node[outer] (C) at (3,0) {
  \begin{tikzpicture}[scale=0.8, transform shape]
		\node[inner, label=below:\tiny 1] (Aa) at (0,0) {$\mtt{a}$};	
		\node[inner, ultra thick] (Ab) at (1.5,0) {$\mtt{a}$};	
		\draw[-latex, dashed] (Aa) to node[above] {} (Ab);
		\end{tikzpicture}};
\end{tikzpicture}
\end{minipage}}
\subfloat{%
\fbox{\begin{minipage}[c]{0.35\textwidth}%
            \begin{tabular}{l}
\small{$\mtt{copy~(a:list;)}$}\\
\small{$\mtt{[~\mid~(1(R), a)~\mid~]}$}\\
\small{$\mtt{=>}$}\\
\small{$\mtt{[~\mid~(1, a)~(2(R), a)}$}\\
\small{$\mtt{~~~\mid~(e1, 1, 2, empty\#dashed)]}$}\\
\small{$\mtt{interface = \{1\}}$}\\
\small{$\mtt{where~outdeg(1)~!\!=~0}$}
             \end{tabular}
\end{minipage}}}
\caption{GP\,2 conditional rule schema $\mtt{copy}$}
\label{fig:rulecopy}
\end{figure}

\subsection{The dangling condition}
The dangling condition must be satisfied by an injective morphism $g$ if $G\Rightarrow_{r,g}H$ for some rule schema $r=\tuple{L\leftarrow K\rightarrow R}$ and host graphs $G,H$. Since we want to express properties of $\RG$ where such derivation exists, we need to express the dangling condition as a condition over the left-hand graph.

Recall the dangling condition from Definition \ref{def:dang}. $\RG$ satisfies the dangling condition if every node $v\in L-K$ does not incident to any edge outside $L$. This means that the indegree and outdegree of every node $v\in L-K$ in $L$ represent the indegree and outdegree of $v$ in $G$ as well.

\Def{Condition Dang}{def:dang}{
Given an unrestricted rule schema $r=\tuple{L\leftarrow K\rightarrow R}$ where $\{v_1,\ldots,v_n\}$ is the set of all nodes in $L-K$. Let $indeg_L(v)$ and $outdeg_L(v)$ denotes the indegree and outdegree of $v$ in $L$, respectively. The condition Dang$(r)$ is defined as:
 \vspace{-\topsep}\begin{enumerate}
    \item if $V_L-V_K=\emptyset$ then Dang$(r)=\mrm{true}$
    \item if $V_L-V_K\neq\emptyset$ then 
    \[\text{Dang}(r)=\mrm{\bigwedge_{i=1}^n indeg(v_i)=}indeg_L(v_i)\,\wedge\,\mrm{outdeg(v_i)=}outdeg_L(v_i)\]
\end{enumerate}
}

\begin{example}[Condition Dang]\label{ex:dang}$~$
 \vspace{-\topsep}\begin{enumerate}
    \item Dang$(r_1) = \mrm{indeg(3)=1\wedge outdeg(3)=0}$
    \item Dang$(r_2) = \mrm{true}$
\end{enumerate}
\end{example}

\Obsv{obsv:dang}{
Given an unrestricted rule schema $r=\langle L\leftarrow K\rightarrow R\rangle$. Let $G$ be a host graph and $g:L\rightarrow G$ be a premorphism. The dangling condition is satisfied if and only if $\rho_g(G)\Sat$Dang$(r)$.
}

\begin{proof}
From the definition of the dangling condition (see Definition \ref{def:dang}), the dangling condition is satisfied when no edge in $G-g(L)$ is incident to any node in $g(L-K)$. By the definition of replacement graph (see Definition \ref{def:rho}), it is obvious that $G-g(L)$ is equivalent to $\rho_g(G)-L$. Then, evaluating the construct of $g(L-K)$ in $G$ w.r.t. $g$ is the same as evaluating the $L-K$ in $\rho_g(G)$. Hence, the dangling condition is satisfied iff no edge in $\rho_g(G)-L$ incident to any node in $L-K$, which means all nodes in $L-K$ only incident to edges in $L$. Hence, Dang$(r)$ is true.\qed
\end{proof}

\subsection{From precondition to left-application condition}

Now, we start with transforming a precondition $c$ to a left-application condition with respect to a generalised rule $w=\tuple{r,ac_L,ac_r}$. Intuitively, the transformation is done by: 1) Find all possibilities of variables in $c$ representing nodes/edges in an input and form a disjunction from all possibilities, denotes by Split$(c,r)$; 2) Express the dangling condition as a condition over $L$, denoted by Dang$(r)$; 3) Evaluate terms and Boolean expression we can evaluate in Split$(c,r)$, Dang$(r)$, and $\Gamma$, then form a conjunction from the result of evaluation, and simplify the conjunction.

A possibility of variables in $c$ representing nodes/edges in an input graph as mentioned above refers to how variables in $c$ can represent node or edge constants in the replacement of the input graph. A simple example would be for a precondition $c=\E{V}x(c_1)$ for some FO formula $c_1$ with a free variable $x$, $c$ holds on a host graph $G$ if there exists a node $v$ in $G$ such that $c_1^{\alpha}$ where $\alpha(x)=v$ is true in $G$. The node $v$ can be any node in $G$. In the replacement graph of $G$, $v$ can be any node in the left-hand graph of the rule schema, or any node outside it. Split$(c,r)$ is obtained from the disjunction of all these possibilities.

\Def{Transformation Split}{def:split}{
Given an unrestricted rule schema $r=\tuple{L\leftarrow K\rightarrow R}$. where $V_L=\{v_1,\ldots,v_n\}$ and $E_L=\{e_1,\ldots,e_m\}$. Let $c$ be a condition over $L$ sharing no variables with $r$ (note that it is always possible to replace the label variables in $c$ with new variables that are distinct from variables in $r$). We define the condition $\text{Split}(c,r)$ over $L$ inductively as follows:
\begin{tabular}{ll}
    \multicolumn{2}{l}{- Base case.}  \\
     $~$ &  If $c$ is $\mrm{true}$, $\mrm{false}$, a predicate $\mrm{int(t), char(t), string(t), atom(t), root(t)}$ for \\&some term $\mrm{t}$, or in the form $\mrm{t_1\ominus t_2}$ for $\mrm{\ominus\in\{=.\neq.<,\leq,>,\geq\}}$ and some\\& terms $\mrm{t_1, t_2}$,\\
     & \multicolumn{1}{c}{\text{Split$(c,r) = c$}}\\
    \multicolumn{2}{l}{- Inductive case.}  \\
     & Let $c_1$ and $c_2$ be conditions over $L$.\\
     & 1) $\text{Split}(c_1\vee c_2, r) = \text{Split}(c_1, r)\vee\text{Split}(c_2, r)$,\\
    & 2) $\text{Split}(c_1 \wedge c_2, r) = \text{Split}(c_1, r)\wedge\text{Split}(c_2, r)$,\\
    & 3) $\text{Split}(\neg c_1, r) = \neg\text{Split}(c_1, r)$,\\
    & 4) $\text{Split}(\mrm{\E{V}x}(c_1), r)= (\mrm{\bigvee_{i=1}^n}\text{Split}(c_1^{[ x\mapsto v_i]}, r))\vee\mrm{\E{V} x(\bigwedge_{i=1}^n x{\neq}v_i\,\wedge\,} \text{Split}(c_1, r)$,\\
    & 5) $\text{Split}(\mrm{\E{E}x}(c_1), r)=\mrm{(\bigvee_{i=1}^m}\text{Split}(c_1^{[x\mapsto e_i]}, r))\vee\mrm{\E{E}x(\bigwedge_{i=1}^m x{\neq}e_i\,\wedge\,} \text{inc}(c_1, r,x))$,\\
   & ~~~~where\\
   & $~~~~\text{inc}(c_1, r,x)=\mrm{\bigvee_{i=1}^n (\bigvee_{j=1}^n s(x)=v_i\wedge t(x)=v_j\,\wedge\,}\text{Split}(c_1^{[\mrm{s(x)\mapsto v_i, t(x)\mapsto v_j}]}, r))$\\
   & $~~~~~~~~~~~~~~~~~~~~~~~~~~~\mrm{\vee\,
   (s(x)=v_i\,\wedge\,\bigwedge_{j=1}^n t(x)\neq v_j\,\wedge\,}\text{Split}(c_1^{[\mrm{s(x)\mapsto v_i}]}, r))$\\
   & $~~~~~~~~~~~~~~~~~~~~~~~~~~~\mrm{\vee\,
   (\bigwedge_{j=1}^n s(x)\neq v_j\,\wedge\,t(x)= v_i\,\wedge\,}\text{Split}(c_1^{[\mrm{t(x)\mapsto v_i}]}, r))$\\
   & $~~~~~~~~~~~~~~~~~~~~~\mrm{\vee\,
   (\bigwedge_{i=1}^n s(x)\neq v_i\,\wedge\,\bigwedge_{j=1}^n t(x)\neq v_j\,\wedge\,}\text{Split}(c_1
   , r))$\\
   & 6) $\text{Split}(\mrm{\E{L}x}(c_1), r)=\E{L}\mrm{x}(\text{Split}(c_1, r))$
\end{tabular}\\
\noindent where $c^{[a\mapsto b]}$ for a variable $a$ and constant $b$ represents the condition $c$ after the replacement of all occurrence of $a$ with $b$. Similarly, $c^{[d\mapsto b]}$ for $d\in\{\mrm{s(x), t(x)}\}$ is also a replacement $d$ with $b$.
}
$~$\\$~$\\
As can be seen in the definition above, Split of an edge quantifier is not as simple as Split of a node quantifier. For an edge variable $x$ in a precondition, $x$ can represent any edge in $G$. Moreover, the term $\mrm{s(x)}$ or $\mrm{t(x)}$ may represent a node in the image of the match. Hence, we need to check these possibilities as well. However, if the precondition does not contain a term $\mrm{s(x)}$ or $\mrm{t(x)}$ for some edge variable $x$, we do not need to consider nodes that can be represented by the functions.

\Obsv{obsv:eE}{
Given an unrestricted rule schema $r=\tuple{L\leftarrow K\rightarrow R}$ where $V_L=\{v_1,\ldots,v_n\}$ and $E_L=\{e_1,\ldots,e_m\}$. Let $c=\E{E}x(c_1)$ be a condition over $L$. Then, the following holds:
 \vspace{-\topsep}\begin{enumerate}
    \item If $c_1$ does not contain the term $\mrm{s(x)},$\\
    $\text{inc}(c_1, r)=\mrm{\bigvee_{i=1}^n (t(x)=v_i\,\wedge\,}\text{Split}(c_1^{[\mrm{t(x)\mapsto v_i}]}, r))
   \mrm{\,\vee\,
   \bigwedge_{i=1}^n (t(x)\neq v_i\,\wedge\,}\text{Split}(c_1
   , r))$
   \item If $c_1$ does not contain the term $\mrm{t(x)},$\\
    $\text{inc}(c_1, r)=\mrm{\bigvee_{i=1}^n (s(x)=v_i\,\wedge\,}\text{Split}(c_1^{[\mrm{s(x)\mapsto v_i}]}, r))
   \mrm{\,\vee\,
   \bigwedge_{i=1}^n (s(x)\neq v_i\,\wedge\,}\text{Split}(c_1
   , r))$
   \item If $c_1$ does not contain the terms $\mrm{s(x)}\text{ and }\mrm{t(x)},$\\
    $\text{inc}(c_1, r)=\text{Split}(c_1
   , r)$
\end{enumerate}
}

\begin{proof}$~$\\
 \vspace{-\topsep}\begin{enumerate}
    \item If $c_1$ does not contain the term $\mrm{s(x)},$ then for any $i,j$, $c_1^{[\mrm{s(x)\mapsto v_i, t(x)\mapsto v_j}]}=c_1^{[\mrm{t(x)\mapsto v_j}]}$, and $c_1^{[\mrm{s(x)}]}=c_1$. The first and the third line of inc$(c_1, r)$ is the disjunction of all possibilities of $\mrm(t(x))$ is one of nodes in $L$ while the second and forth line is about $\mrm(t(x))$ is outside the match.
    \item Analogously to above.
    \item If $c_1$ does not contain the terms $\mrm{s(x)}\text{ and }\mrm{t(x)},$ it is obvious that\\ $c_1^{[\mrm{s(x)\mapsto v_i, t(x)\mapsto v_j}]}$ $=\,c_1^{[\mrm{t(x)\mapsto v_j}]}$ $=\,c_1^{[\mrm{s(x)\mapsto v_j}]}$ $=\,c_1$.\qed
\end{enumerate}
\end{proof}

\Ex{Transformation Split}{ex:Split}{$~$\\
\begin{tabular}[t]{lcl}
     Split$(q_1, r_1)$ & = & $\neg$ Split$(\mrm{\E{E}x(s(x)= t(x))}, r_1)$  \\
     & = & $\mrm{\neg(m_V(s(e1))\neq none \vee m_V(s(e2))\neq none\,\vee}$\\
     && $\mrm{~~~\E{E}x(x\neq e1\wedge x\neq e2\,\wedge\,((s(x)=1\wedge m_V(1)\neq none)}$\\
     && $~~~~~~~~~~~~~~~~~~~~~~~~~~~~~~~~~~~\mrm{\vee\,(s(x)=2\wedge m_V(2)\neq none)}$\\
     && $~~~~~~~~~~~~~~~~~~~~~~~~~~~~~~~~~~~\mrm{\vee\,(s(x)=3\wedge m_V(3)\neq none)}$\\
     && $~~~~~~~~~~~~~~~~~~~~~~~~~~~~~~~~~~~\mrm{\vee\,(s(x)\neq 1\wedge s(x)\neq 2\wedge s(x)\neq 3}$\\
     && $~~~~~~~~~~~~~~~~~~~~~~~~~~~~~~~~~~~~~~\mrm{\wedge m_V(s(x))\neq none))))}$\\
     Split$(q_2, r_2)$& = & $\mrm{\neg root(1) \vee \E{V}x(x\neq 1 \wedge \neg root(x))}$\\
    \end{tabular}
}

Since Split$(c,r)$ only disjunct all possibilities of nodes and edges that can be represented by node and edge variables in $c,$ it should not change the semantic of $c$. However, we transform a condition $c$ to a condition over $L$ such that we may not be able to check satisfaction of Split$(c,r)$ in $G$. However, we can always check its satisfaction in $\RG$ for some premorphism $g:L\rightarrow G$.

\Lemma{lemma:split}{
Given a condition $c$ and an unrestricted rule schema $r=\tuple{L\leftarrow K\rightarrow R}$, sharing no variables with $c$. For a host graph $G$, let $g : L \rightarrow G$ be a premorphism. Then,
\[G \models c \text{ if and only if } \rho_g(G)\models \text{Split}(c, r).\]
}

\begin{proof}Here, we prove the lemma inductively on conditions. The texts above the symbol $\Leftrightarrow$ bellow refer to lemmas that imply the associated implication, e.g. L4 refers to Lemma 4.
\begin{longtable}{llcp{10.5cm}}
   \multicolumn{4}{l}{(Base case).} \\
    \multicolumn{2}{l}{$G\models c$} & $\myeq{\Leftrightarrow}{L\ref{lemma:isocond}}$ & $\rho_g(G)\models c$ \\
    & & $\Leftrightarrow$ & $\rho_g(G)\models \text{Split}(c, r)$\\
   \multicolumn{4}{l}{(Inductive case).}\\
   \multicolumn{4}{l}{Assuming that for some conditions $c_1$ and $c_2$ over $L$, the lemma holds.} \\
     1) & $G\models c_1\vee c_2$ & $\myeq{\Leftrightarrow}{L\ref{lemma:BolOperators}}$ & $G\models c_1\vee G\models c_2$ \\
    & & $\Leftrightarrow$ & $\rho_g(G)\models \text{Split}(c_1, r)\vee \rho_g(G)\models \text{Split}(c_2, r)$\\
     & & $\myeq{\Leftrightarrow}{L\ref{lemma:BolOperators}}$ & $\rho_g(G)\models \text{Split}(c_1, r)\vee \text{Split}(c_2, r)$ \\
     2) & $G\models c_1\wedge c_2$ & $\myeq{\Leftrightarrow}{L\ref{lemma:BolOperators}}$ & $G\models^\alpha c_1\wedge G\models^\alpha c_2$ for some assignment $\alpha$ \\
    & & $\Leftrightarrow$ & $\rho_g(G)\models^\beta \text{Split}(c_1, r)\vee \rho_g(G)\models^\beta \text{Split}(c_2, r)$\\
    &&& where $\beta(x)=\alpha(x)$ if $x\notin V_L$; $\beta(x)=g^{-1}(\alpha(x))$ otherwise\\
     & & $\myeq{\Leftrightarrow}{L\ref{lemma:BolOperators}}$ & $\rho_g(G)\models \text{Split}(c_1, r)\vee \text{Split}(c_2, r)$ \\
     3) & $G\models \neg\, c_1$  & $\myeq{\Leftrightarrow}{L\ref{lemma:BolOperators}}$ & $\neg(G\Sata c_1)$ for some assignment $\alpha$\\
     && $\Leftrightarrow$ & $\neg(\rho_g(G)\models^\beta \text{Split}(c_1, r))$ \\
     &&& where $\beta(x)=\alpha(x)$ if $x\notin V_L$; $\beta(x)=g^{-1}(\alpha(x))$ otherwise\\
     && $\myeq{\Leftrightarrow}{L\ref{lemma:BolOperators}}$ & $\rho_g(G)\models \neg\text{Split}(c_1, r)$  \\
     4) & $G\models \mrm{\E{V}x}(c_1)$ & $\myeq{\Leftrightarrow}{L\ref{lemma:quantifier}}$ & $G\models \mrm{\bigvee_{i=1}^n} {c_1}^{[ \mrm{x\mapsto v_i}]}\vee\mrm{\E{V} x(\bigwedge_{i=1}^n x\neq v_i\wedge} c_1)$  \\
     & & $\Leftrightarrow$ & $\rho_g(G)\models \mrm{\bigvee_{i=1}^n \text{Split}(c_1^{[ x\mapsto v_i]}, r)\vee\E{V} x(\bigwedge_{i=1}^n x\neq v_i\wedge \text{Split}(c_1, r))}$  \\
     5) & $G\models \E{E} \mrm{x}(c_1)$ & $\myeq{\Leftrightarrow}{L\ref{lemma:quantifier}}$ & $G\models \mrm{\bigvee_{i=1}^m} {c_1}^{[ \mrm{x\mapsto e_i}]}\mrm{\vee\E{V} x(\bigwedge_{i=1}^m x\neq e_i\wedge} c_1)$  \\
     & & $\myeq{\Leftrightarrow}{L\ref{lemma:quantifier}}$ & $\rho_g(G)\models \mrm{\bigvee_{i=1}^m \text{Split}(c_1^{[ x\mapsto e_i]}, r)\vee\E{V} x(\bigwedge_{i=1}^m x\neq v_i\wedge \text{Split}(c_1, r))}$  \\
     & & $\myeq{\Leftrightarrow}{L\ref{lemma:quantifier}}$ & $\rho_g(G)\models \mrm{\bigvee_{i=1}^m \text{Split}(c_1^{[ x\mapsto e_i]}, r)\vee\E{V} x(\bigwedge_{i=1}^m x\neq v_i\wedge \text{inc}(c_1,r))}$  \\
     6) & $G\models \E{L}\mrm{x}(c_1)$ & $\myeq{\Leftrightarrow}{L\ref{lemma:BolOperators}}$ & $G\models c_1$ \\
     &  & $\Leftrightarrow$ & $\rho_g(G)\models \text{Split}(c_1, r)$  \\
     & & $\myeq{\Leftrightarrow}{L\ref{lemma:BolOperators}}$ & $\rho_g(G)\models \E{L}\mrm{x}(\text{Split}(c_1, r))$
\end{longtable}\qed
\end{proof}

After splitting the precondition into all possibilities of representations, we check the value of some functions and Boolean operators to check if any possibility violates the precondition such that we can omit the possibility.

\Def{Valuation of $c$}{def:val}{
Given an unrestricted rule schema $r=\tuple{L\leftarrow K\rightarrow R}$, a condition $c$ over $L$, a host graph $G$, and premoprhism $g:L\rightarrow G$. Let $c$ shares no variable with $L$ unless $c$ is a rule schema condition. Let also $F=\{\mrm{s,t,l_V,l_E,m_V,m_E}$,$\mrm{indeg,outdeg,length}$ be the set of function syntax. Let also $y\oplus_L z$ for $\oplus\in\{+,-,*,/,:,.\}$ and $y,z\in\mathbb{L}$ denotes the value of $y\oplus z$ as desribed in Section 3.3, and $f_L(z)$ for a constant $z$ and $f\in F$ denotes the value of $f(y)$ in $L$. \emph{Valuation of $c$ w.r.t. $r$}, written Val$(c, r)$, is constructed by applying the following steps to $c$:
 \vspace{-\topsep}\begin{enumerate}\footnotesize
    \item Obtain $c'$ by changing every term $x$ in $c$ with $T(x)$, where
    \begin{enumerate}
        \item If $x$ is a constant or variable, $T(x)=x$
        \item If $x=f(y)$ for $f\in F,$\\
        $T(x)=\begin{cases}
        f_L(y)&\text{if $f\in F\backslash\{\mrm{indeg,outdeg}\}$ and $y$ is a constant}\\
        & ~~\text{or $f\in\{\mrm{indeg,outdeg}\}$ and $y\in V_L-V_K$}
        \\
        f_L(T(x)) &\text{if $f\in\{\mrm{l_V,m_V}\}$ and $(y=\mrm{s(e)}$ or $y=\mrm{t(e)})$ for $e\in E_L$}\\
        &~~\text{or $f\in\{\mrm{indeg,outdeg}\}$ and $T(y)\in V_L-V_K$}\\
        incon(T(y))+f_L(T(y)) &\text{if $f=\mrm{indeg}$ and $y\in V_K$}\\
        outcon(T(y))+f_L(T(y)) &\text{if $f=\mrm{outdeg}$ and $y\in V_K$}\\
f(y) &\text{otherwise}
        \end{cases}$
    \item If $x\oplus z$ for $\oplus\in\{+,-,/,*,:,.\},$\\
    $T(x)=\begin{cases}
    y\oplus_L z & \text{if $y,z\in\mathbb{L}$}\\
    T(y)\oplus T(z) &\text{if $T(y)=\notin\mathbb{L}$ or $T(z)=\notin\mathbb{L}$}\\
    T(T(y)\oplus T(z)) &\text{otherwise}
    \end{cases}$
    \end{enumerate}
\item Obtain $c"$ by replacing predicates and Boolean operators $x$ in $c'$ with $B(x)$, where\\
$B(x)=\begin{cases}
y\otimes_\mathbb{B} z & \text{if $x=y\otimes z$ for $\otimes\in \{=,\neq,\leq,\geq\}$ and constants $y,z$}\\
\mrm{true} &\text{if $x=\mrm{root(v)}$ for $v\in r_L$}\\
\mrm{false} &\text{if $x=\mrm{root(v)}$ for $v\notin r_L$}\\
x &\text{otherwise}
\end{cases}$
\item Simplify $c"$ such that there are no subformulas in the form $\mrm{\neg\, true,} \neg(\neg\,a)$ ${\neg(a\vee b),}$ ${\neg(a\wedge b)}$ for some conditions $a, b$. We can always simplify them to $\mrm{false}, a, \neg a\wedge\neg b, \neg a\vee\neg b$ respectively.
\end{enumerate}
}

\begin{example}[Valuation of a graph condition]\label{ex:val}\normalfont
For rules $r_1$ and $r_2$,
 \vspace{-\topsep}\begin{enumerate}
    \item \begin{tabular}[t]{lcl}
    \multicolumn{3}{l}{Val$(\text{Split}(q, r_1), r_1)$}\\
    & = & $\mrm{\neg(none\neq none \vee none\neq none\,\vee}$\\
     && $\mrm{~~~\E{E}x(x\neq e1\wedge x\neq e2\,\wedge\,((s(x)=1\wedge none\neq none)}$\\
     && $~~~~~~~~~~~~~~~~~~~~~~~~~~~~~~~~~~~\mrm{\vee\,(s(x)=2\wedge none\neq none)}$\\
     && $~~~~~~~~~~~~~~~~~~~~~~~~~~~~~~~~~~~\mrm{\vee\,(s(x)=3\wedge none\neq none)}$\\
     && $~~~~~~~~~~~~~~~~~~~~~~~~~~~~~~~~~~~\mrm{\vee\,(s(x)\neq 1\wedge s(x)\neq 2\wedge s(x)\neq 3\wedge m_V(s(x))\neq none))))}$\\
    
     & $\equiv$ & $\mrm{\neg\E{E}x(x\neq e1\wedge x\neq e2\wedge s(x)\neq 1\wedge s(x)\neq 2\wedge s(x)\neq 3\wedge m_V(s(x))\neq none)}$
    \end{tabular}\\
    Here, we replace the terms $\mrm{s(e1),s(e2)}$ with node constant $\mrm{1}$, then replace $\mrm{m_V(1), m_V(2), m_V(3)}$ with $\mrm{none}$. Then, we simplify the resulting condition by evaluating $\mrm{none\neq none}$ which is equivalent to $\mrm{false}$.
    
    \item \begin{tabular}[t]{lcl}
    Val$(\text{Split}(s, r_2), r_1)$ & = & $\mrm{false \vee \E{V}x(x\neq 1\wedge\neg root(x))}$\\
    & $\equiv$ & $\mrm{\E{V}x(x\neq 1\wedge\neg root(x))}$\end{tabular}\\
     Here, we substitute $\mrm{false}$ for $\neg root(1)$ since the node 1 in $L$ is a rooted node.

\item \begin{tabular}[t]{lcl}
    Val$(\Gamma_1, r_1)$ & = & $\mrm{d\geq e}$\\
\end{tabular}\\
For this case, we change nothing.

\item \begin{tabular}[t]{lcl}
    Val$(\Gamma_2, r_2)$ & = & $\mrm{outcon(1)\neq 0}$\\
\end{tabular}\\
In this case, we change $\mrm{outdeg(1)}$ with $\mrm{outcon(1)+0}$ because the outdegree of node 1 in $L$ is 0.
\end{enumerate}
\end{example}

Intuitively, Val gives some terms with node/edge constants their value in $L$. Recall that if there exists injective morphism $g:L^\alpha\rightarrow G$ for some label assignment $\alpha_L$, then there must be an inclusion $L^\alpha\rightarrow \RG$. This should assert that the value of terms we valuate in $L$ is equal to their value in $\RG$.

\Lemma{lemma:val}{
Given an unrestricted rule schema $r=\tuple{L\leftarrow K\rightarrow R}$, a host graph $G$, and an injectiva morphism $g:L^\alpha\rightarrow G$ for a label assignment $\alpha_L$. For a graph condition $c$,
\[\rho_g(G)\Sat c \text{ if and only if } \rho_g(G)\Sat (\text{Val}(c,r))^\alpha\]
}

\begin{proof} 
Let us consider the construction of Val$(c)$ step by step. In step 1, we change terms $x$ in $c$ with $T(x)$. Here, we change functions $\mrm{s(e),t(e),l_V(v),m_V(v)}$,\\$\mrm{l_E(e),m_E(e),l_V(s(e)),l_V(t(e)),m_V(s(e)),m_V(t(e))}$ for $e\in E_L$ and $v\in V_L$ with their values in $L$. Since $L^\alpha\rightarrow\RG$ is an inclusion, then $s_L(e)=s_{\RG}(e)$ and $t_L(e)=t_{\RG}(e)$. Also, $(\lst_L(i))^\alpha=\lst_{\RG}(i)$, $(\mrk_L(i))^\alpha=\mrk_{\RG}(i)$ for all $i\in V_G$ and $i\in E_G$ such that the replacement does not change the satisfaction of $c$ in $\RG$. Then for function $\mrm{indeg(v)}$ for $v\in V_L-V_K$, we change it to $indeg_L(x)$ due to the dangling condition, and for $v\in V_K$, we change it to $incon(v)+indeg_L(v)$ which is equivalent to $indeg_G(v)=indeg_{\RG}(v)$ because $incon(v)=indeg_G(v)-indeg_L(v)$ (and analogously for $\mrm{outdeg(v)}$).
In step 2, changing Boolean operators whose arguments are constants to their Boolean value clearly does not change the satisfaction in $\RG$. Also, by the definition of morphism, $p_L(v)=p_{\RG}(v)$ for all $v\in V_L$ so that the Boolean value of $\mrm{root(v)}$ in $L$ is equivalent to the Boolean value of $\mrm{root(v)}$ in $\RG$.
Finally, in step 3, simplification clearly does not change satisfaction.\qed
\end{proof}

Finally, we define the transformation Lift, which takes a precondition and a generalised rule schema as an input and gives a left-application condition as an output. The output should express the precondition, the dangling condition, and the existing left-application condition of the given generalised rule schema.

\Def{Transformation Lift}{def:lift}{
Given a generalised rule $w=\tuple{r,ac_L,ac_R}$ for an unrestricted rule schema $r=\tuple{L\leftarrow K\rightarrow R}$. Let $c$ be a precondition. A left application condition w.r.t. $c$ and $w$, denoted by Lift$(c,w)$, is the condition over $L$:
\[\text{Lift}(c, w)=\text{Val}(\text{Split}(c\wedge ac_L, r)\wedge \text{Dang}(r), r).\]
}

\begin{example}[Transformation Lift]$~$
 \vspace{-\topsep}\begin{enumerate}
    \item $\text{Lift}(q_1, \mtt{del}^\vee)$ \\
    =  $\mrm{\neg\E{E}x(x\neq e1\wedge x\neq e2\wedge s(x)\neq 1\wedge s(x)\neq 2\wedge s(x)\neq 3\wedge m_V(s(x))\neq none)}$\\
        $~~~\wedge\,\mrm{d\geq e}$
         
    \item \begin{tabular}[t]{lcl}
        $\text{Lift}(q_2, \mtt{copy}^\vee)$ & = & $\mrm{\E{V}x(x\neq 1\wedge\neg root(x)) \wedge outcon(1)\neq 0  \wedge true}$\\
        & $\equiv$ & $\mrm{\E{V}x(x\neq 1\wedge\neg root(x)) \wedge outcon(1)\neq 0}$
         \end{tabular}
\end{enumerate}
\end{example}

\Prop{Left-application condition}{prop:lift}{
Given a host graph $G$ and a generalised rule $w=\tuple{r,ac_L,ac_R}$ for an unrestricted rule schema $r=\tuple{L\leftarrow K\rightarrow R}$. Let $c$ be a precondition and $\alpha_L$ be a label assignment such that there exists an injective morphism $g:L^\alpha\rightarrow G$. For some host graph $H$,
\[\text{$G\Sat c$ and $G\Rightarrow_{w,g,g^*}H$ implies $\RG\Sat$(Lift$(c,w))^\alpha$}\]
}

\begin{proof}From Lemma \ref{lemma:split}, we know that $G\Sat c$ implies $\RG\Sat$Split$(c,r)$. Then $G\Rightarrow_{w,g,g^*}H$ implies $\RG\Sat ac_L$, which implies $\RG\Sat$Split$(ac_L,r)$, and the existence of natural double-pushout with match $g:L^\alpha\rightarrow G$. The latter implies the satisfaction of the dangling condition. The satisfaction of the dangling condition implies $\RG\Sat$Dang$(r)$ based on Observation \ref{obsv:dang}, such that $\RG\Sat$Split$(c\wedge ac_L,r)\wedge$Dang$(r)$, and $\RG\Sat$Val(Split$(c,r)\wedge ac_L\wedge$Dang$(r),r)^\alpha$ from Lemma \ref{lemma:val}. \qed
\end{proof}

Recall the construction of Split$(c,r)$ for a precondition $c$ and an unrestricted rule schema $r$. A node/edge quantifier is preserved in the result of the transformation with additional restriction about $x$ not representing any node/edge in $L$. Hence in the resulting condition over $L$ from transformation Lift, every node/edge variable should not represent any node/edge in $L$.

\Obsv{obsv:lift}{
Given a host graph $G$ and a generalised rule $w=\tuple{r^\alpha,ac_L,ac_R}$ for an unrestricted rule schema $r=\tuple{L\leftarrow K\rightarrow R}$, and a precondition $c$. For every node/edge variable $x$ in Lift$(c,w)$, $x$ does not represent any node/edge in $L$.
}

\begin{proof}
Here we show that for every node/edge variable $x$, there exists an existential quantifier over $x$ such that there exists constraint $\bigwedge_{i\in V_{L}x\neq i}$ or $\bigwedge_{i\in E_{L}x\neq i}$ inside the quantifier.

Lift$(c,w)$ is a conjunction of Val$(\text{Split}(c,r), r)$, Dang$(r)$, and Val$(\Gamma, r)$. The transformation Val clearly does not remove or change subformulas in the form $x\neq i$ and does not add any new node/edge variable. Hence, we just need to show that for every node/edge variable $x$ in $\text{Split}(c,r)$, Dang$(r)$, and $\Gamma$, there exists constraint $\bigwedge_{i\in V_{L}x\neq i}$ or $\bigwedge_{i\in E_{L}x\neq i}$.

It is obvious that $\Gamma$ does not have node and edge variable from its syntax. For Dang$(r)$, it clearly only has one edge variable and there exists constraint $\bigwedge_{i\in E_{L}x\neq i}$ inside the existential quantifier for the variable. Finally for Split$(c,r)$, since $c$ is a closed formula, every node/edge variable must be bounded by existential quantifier, such that from Definition \ref{def:split}, the variable must be bounded by existential quantifier with constraint $\bigwedge_{i\in V_{L}x\neq i}$ or $\bigwedge_{i\in E_{L}x\neq i}$ inside.\qed
\end{proof}

\subsection{From left to right-application condition}

To obtain a right-application condition from a left-application condition, we need to consider what properties could be different in the initial and the result graphs. Recall that in constructing a left-application condition, we evaluate all functions with a node/edge constant argument and change them with constant, including the constant $\mrm{incon(v)}$ and $\mrm{outcon(v)}$ when evaluating $\mrm{indeg(v)}$ and $\mrm{outdeg(v)}$ for node $v$ in the interface. In the result graph $H$, $indeg_H(v)$ is clearly equal to $incon(v)+indeg_R(H)$, and analogous for $outdeg_H(v)$. 

The Boolean value for $\mrm{x=i}$ for any node/edge variable $x$ and node/edge constant $i$ not in $R$ must be false in the resulting graph. Analogously, $\mrm{x=i}$ is always true. Also, all variables in the left-application condition should not represent any new nodes and edges in the right-hand side.

\begin{definition}[Adjusment]\label{def:adj}\normalfont
Given an unrestricted rule schema $r=\tuple{L\leftarrow K\rightarrow R}$ and a condition $c$ over $L$. Let $c'$ be a condition over $L$ that is obtained from $c$ by changing every term $\mrm{incon(x)}$ (or $\mrm{outcon(x)}$) for $x\in V_K$ with $\mrm{indeg(x)-}indeg_R(x)$ (or $\mrm{outdeg(x)-}outdeg_R(x)$). Let also $\{v_1,\ldots,v_n\}$ and $\{e_1,\ldots,e_m\}$ denote the set of all nodes and edges in $R-K$ respectively. 
The \textit{adjusted} condition of $c$ w.r.t $r$, denoted by Adj$(c, r)$,  is a condition over $R$ that is defined inductively, where $c_1,c_2$ are conditions over $L$:
 \vspace{-\topsep}\begin{enumerate}
    \item If $c$ is $\mrm{true}$ or $\mrm{false}$, Adj$(c,r)=c'$;
    \item If $c$ is the predicates $\mrm{int(x),char(x),string(x)}$ or $\mrm{atom(x)}$ for a list variable $x$, Adj$(c,r)=c'$;
    \item If $c=\mrm{root(x)}$ for some term $x$ representing a node, Adj$(c,r)=c'$
    \item If $c=x_1\ominus x_2$ for some terms $x_1, x_2$ and $\ominus\in\{=,\neq,<,\leq,>,\geq\}$,\\
    Adj$(c,r)$ =
            $\begin{cases} 
      \mrm{false} & ,\text{if $\ominus\in\{=\}$ and $x_1\in V_L-V_K\cup E_{L}$ or $x_2\in V_L-V_K\cup E_{L}$}, \\
        \mrm{true} & ,\text{if $\ominus\in\{\neq\}$ and $x_1\in V_L-V_K\cup E_{L}$ or $x_2\in V_L-V_K\cup E_{L}$}, \\
      c' & ,\text{otherwise}
   \end{cases}$
    \item Adj$(c_1\vee c_2,r) = \text{Adj}({c_1},r)\vee\text{Adj}({c_2},r)$
    \item  Adj$(c_1\wedge c_2,r) = \text{Adj}({c_1},r)\wedge\text{Adj}({c_2},r)$
    \item Adj$(\neg c_1,r)=\neg\text{Adj}({c_1},r)$
    \item Adj$(\E{V}\mrm{x}(c_1),r)=\E{V}\mrm{x}(x\neq v_1\wedge\ldots\wedge x\neq v_n\wedge\text{Adj}(c_1,r))$
    \item Adj$(\E{E}\mrm{x}(c_1),r)=\E{E}\mrm{x}(x\neq e_1\wedge\ldots\wedge x\neq e_m\wedge\text{Adj}(c_1,r))$
    \item Adj$(\E{L}\mrm{x}(c_1),r)=\E{L}\mrm{x}(\text{Adj}({c_1},r))$\qed
\end{enumerate}
\end{definition}

\begin{example}$~$\\
Let $p_1$ denotes Lift$(q_1,\mtt{del^\vee})$ and $p_2$ denotes Lift$(q_2,\mtt{copy^\vee})$.\\ \begin{tabular}{lcp{10.5cm}}
    1. Adj$(p_1,r_1)$ & = & $\mrm{\neg\E{E}x(x\neq e1\wedge s(x)\neq 1\wedge s(x)\neq 2\wedge m_V(s(x))\neq none)\wedge\,\mrm{d\geq e}}$\\
    2. Adj$(p_2,r_2)$ & = & $\mrm{\E{V}x(x\neq 1\wedge x\neq 2\wedge\neg root(x)) \wedge outdeg(1)\neq 1}$\\
\end{tabular}
\end{example}
         
The main purpose of transformation Adj is to adjust the obtained left-application condition such that it can be satisfied by the replacement graph of the resulting graph.

\Lemma{lemma:adj}{
Given a host graph $G$, a generalised rule $w=\tuple{r,ac_L,ac_R}$ for an unrestricted rule schema $r=\tuple{L\leftarrow K\rightarrow R}$, an injective morphism $g:L^\alpha\rightarrow G$ for some label assignment $\alpha_L$, and a precondition $d$. Let $H$ be a host graph such that $G\Rightarrow_{w,g,g^*}H$ for some injective morphism $g^*:R^\beta\rightarrow H$ where $\beta_R(i)=\alpha_L(i)$ for all common item $i$ in domain $\beta_R$ and $\alpha_L(i)$. Then,
\[\text{$\RG\Sat($Lift$(d,w))^\alpha$ implies $\RH\Sat$(Adj(Lift($d,w),r))^\beta$}\]
}

\begin{proof}
Note that Adj$(c,r)$ does not change any term representing label in $c$ such that Adj$(c^\alpha,r)\equiv$Adj$(c,r)^\alpha$ for all label assignment $\alpha_L$. Also, note that Adj$(c,r)$ does not contain any variable $x$ in $R$ that does not exist in $L$. Hence, Adj$(c,r)^\alpha=$Adj$(c,r)^\beta$.
Assuming $\RG\Sat c^\alpha$ for $c=$Lift$(c,w)$, we prove that $\RH\Sat$(Adj$(c,r))^\beta$ inductively bellow:\\
\noindent Base case.
 \vspace{-\topsep}\begin{enumerate}
    \item If $c$ is $\mrm{true}$ or $\mrm{false}$, it is obvious that the lemma holds as every graph satisfies $\mrm{true}$ and no graph satisfies $\mrm{false}$
    \item If $c$ is the predicate $\mrm{int(x),char(x),string(x)}$ or $\mrm{atom(x)}$ for a list variable $x$, $c'\equiv c$ and satisfaction of $c$ is independent on the host graph such that $\RG\Sat c^\alpha$ implies $\RH\Sat c'^\alpha$ and $c'^\alpha=c'^\beta$.
    \item If $c$ is the predicate $\mrm{root(x)}$ for some term $x$ representing a node, then $x\notin V_L$ (see Definition $\ref{def:val}$ point 2), $x$ is a variable representing $V_{\RG}-(V_L)=V_{\RH}-V_R$ (see Observation \ref{obsv:lift}), or $x$ is the function $\mrm{s(x)}$ or $\mrm{t(x)}$ for some edge variable $x$ representing an edge in $E_{\RG}-E_L=E_{\RH}-E_R$ (see Definition $\ref{def:val}$ point 1(b) and \ref{obsv:lift}). Hence, $x$ representing a node in $\RG-L$, which is also in $\RH-R$ so that if $\mrm{root(x)}$ is true in $\RG$, $\mrm{root(x)}$ must be true in $\RH$, and label assignment has nothing to do with this.
    \item If $c=x_1\ominus x_2$, if $x_1$ and $x_2$ are terms representing lists, then $x_1$ and $x_2$ independent to nodes and edges in $V_L$ unless $x_1$ or $x_2$ is in the form $\mrm{incon(v)} or \mrm{outcon(v)}$ for some $v\in V_K$ (see Definition $\ref{def:val}$ point 1(b) and \ref{obsv:lift}). However, because $outcon(v)=outdeg_{\RG}(v)-outdeg_L(v)=outdeg_{\RH}(v)-outdeg_R(v)$, then semantics of $\mrm{outcon(v)}$ in $\RG$ is equivalent to semantics of $\mrm{indeg(v)-}indeg_R(v)$ in $\RH$. Hence, $c$ is either independent to nodes and edges in $V_L$ or contain $\mrm{outcon(x)}$ or $\mrm{incon(x)}$, $\RG\Sat c$ implies $\RH\Sat c'=$Adj$(c,r)$, or $c$. If $c$ is $x_1=x_2$ and $x_1$ or $x_2$ is a constant in $(V_L-V_K)$ or in $E_L$, it is obvious that there in no node/edge in $\RH$ that is equal to the constant such that $\RH\Sat\mrm{false}=$Adj$(c,r)$. Analogously, if $c$ is $x_1\neq x_2$ and $x_1$ or $x_2$ is a constant in $(V_L-V_K)$ or in $E_L$, every node/edge in $\RH$ does not equal to the node or edge such that $\RH\Sat\mrm{true}=$Adj$(c,r)$.
    \end{enumerate}
    Inductive case. Assuming $\RG\Sat c_1^\alpha$ implies $\RH\Sat$Adj$(c_1,r)^\beta$ and  $\RG\Sat c_2^\alpha$ implies $\RH\Sat$Adj$(c_2,r)^\beta$ for some conditions $c_1,c_2$ over $L$,
    \begin{enumerate}
    \item $\RG\Sat (c_1\vee c_2)^\alpha$ implies $\RG\Sat c_1^\alpha$ or $\RG\Sat c_2^\alpha$ implies 
    $\RH\Sat$Adj$(c_1,r)^\beta$ or $\RH\Sat$Adj$(c_2,r)^\beta$, implies $\RH\Sat$(Adj$(c_1,r)\vee$Adj$(c_2,r))^\beta$.
    \item $\RG\Sat (c_1\wedge c_2)^\alpha$ implies $\RG\Sat^{\mu_V,\mu_E} c_1^\alpha$ and $\RG\Sat^{\mu_V,\mu_E} c_2^\alpha$ for some assignments $\mu_V,\mu_E$ which implies 
    $\RH\Sat^{\mu_V,\mu_E}$Adj$(c_1,r)^\beta$ and $\RH\Sat^{\mu_V,\mu_E}$Adj$(c_2,r)^\beta$ implies $\RH\Sat$(Adj$(c_1,r)\wedge$Adj$(c_2,r))^\beta$
    \item $\RG\Sat \neg c_1^\alpha$ implies $\neg(\RG\Sat{\mu_V,\mu_E} c_1^\alpha)$ for some assignments $\mu_V,\mu_E$ which implies 
    $\neg(\RH\Sat^{\mu_V,\mu_E}$(Adj$(c_1,r))^\beta)$, implying $\RH\Sat\neg$(Adj$(c_1,r))^\beta$
    \item If $c=\E{V}x(c_1)$, recall that every node variable $x$ in $c$ does not represent node in $L$. $\RG\Sat(\E{V}x(c_1))^\alpha$ implies $\RG\Sat (c_1^{x\mapsto v})^\alpha$ for some $v\in V_{\RG}-V_L=V_{\RH}-V_R$ which implies $\RH\Sat$(Adj$(c_1^{[x\mapsto v]}, r))^\beta$. Since $v\notin V_{R},$ $\RH\Sat(\E{V}x(x\neq v_1\wedge\ldots\wedge x\neq v_n\wedge$Adj$(c_1, r)))^\beta$
    \item If $c=\E{V}x(c_1)$, the proof is analogous to above
    \item $\RG\Sat(\E{L}x(c_1))^\alpha$ implies $\RG\Sat (c_1^{x\mapsto k})^\alpha$ for some $k\in \mathbb{L}$ which implies $\RH\Sat$(Adj$(c_1^{[x\mapsto k]}, r))^\beta$=(Adj$(c_1, r)^{[x\mapsto k]})^\beta$, which means $\RH\Sat(\E{L}x($Adj$(c_1, r)))^\beta$.\qed
    \end{enumerate}
\end{proof}

Note that any unrestricted rule schema $r$ is invertible. The transformation Adj adjusts a left-application condition to the properties of the resulting graph w.r.t the given unrestricted rule schema. This means, adjusting the properties of the resulting graph w.r.t the inverse of the unrestricted rule schema should resulting in the initial left-application condition.

\Lemma{lemma:adj2}{
Given host graph $G$, a generalised rule $w=\tuple{r,ac_L,ac_R}$ for an unrestricted rule schema $r=\tuple{L\leftarrow K\rightarrow R}$, and a precondition $d$. Let $g:L^\alpha\rightarrow R$ for some label assignment $\alpha_L$ be an injective morphism satisfying the dangling condition. Then
\[\text{$\RG\Sat$ Adj(Adj$($Lift$(d,w),r),r^{-1})^\alpha$ if and only if $\RG\Sat$Lift$(d,w)^\alpha$}\]
}

\begin{proof}
Here we prove that $\RG\Sat$ Adj(Adj$(c,r),r^{-1})$ if and only if $\RG\Sat c$ inductively, where $c=$Lift$(d,r)$:\\
Base case.
 \vspace{-\topsep}\begin{enumerate}
    \item If $c$ is $\mrm{true}$ or $\mrm{false}$, Adj$(c,r)=c'=$Adj(Adj$(c,r),r^{-1})$
    \item If $c$ is the predicate $\mrm{int(x),char(x),string(x)}$ or $\mrm{atom(x)}$ for a list variable $x$, $c'\equiv c$ such that Adj$(c,r)=c'=$Adj(Adj$(c,r),r^{-1})$
    \item If $c$ is the predicate $\mrm{root(x)}$, $c'\equiv c$ such that Adj$(c,r)=c'=$Adj(Adj$(c,r),r^{-1})$
    \item If $c$ is $x_1=x_2$ for $x_1$ or $x_2$ a node or edge constant in $L-K$, both $x_1$ and $x_2$ cannot be constants (see construction of Val which is used to construct $c$). Then, one of them must be a node or edge variable (which does not represent node in $L$ - see Observation \ref{obsv:lift}), or the function $\mrm{s(x)}$ or $\mrm{t(x)}$ for some edge variable $x$. Observation \ref{obsv:lift} shows us that $x$ does not representing edge in $L$, and $g$ satisfies the dangling condition implies $\mrm{s(x)}$ and $\mrm{t(x)}$ do not represent nodes in $\RG-(L-K)$. Hence, $x_1=x_2$ is always false in $\RG$. Otherwise for $c=x_1\ominus x_2$, Adj$(c,r)=c'=$Adj(Adj$(c,r),r^{-1})$. 
    \end{enumerate}
    Inductive case.\\
    Assume that $c_1\equiv$Adj(Adj$(c_1,r),r^{-1})$ and $c_2\equiv$Adj(Adj$(c_2,r),r^{-1})$ for conditions $c_1,c_2$ over $L$.
    \begin{enumerate}
    \item $\RG\Sat c_1\vee c_2$ iff $\RG\Sat c_1$ or $\RG\Sat c_2$ iff $\RG\Sat$Adj(Adj$(c_1,r),r^{-1})$ or $\RG\Sat$Adj(Adj$(c_2,r),r^{-1})$ iff $\RG\Sat$Adj(Adj$(c_1,r),r^{-1})\vee$\\ Adj(Adj$(c_2,r),r^{-1})\equiv$Adj(Adj$(c,r),r^{-1})$
    \item $\RG\Sat c_1\wedge c_2$ implies $\RG\Sat^\beta c_1 \wedge \RG\Sat^\beta c_2$ for some assignment $\beta$ iff 
    $\RG\Sat^\beta$Adj(Adj$(c_1,r),r^{-1})\wedge\RG\Sat^\beta$Adj(Adj$(c_2,r),r^{-1}))$ iff $\RG\Sat$Adj(Adj$(c_1,r),r^{-1})\wedge$Adj(Adj$(c_2,r),r^{-1}))\equiv$Adj(Adj$(c,r),r^{-1})$
    \item $\RG\Sat \neg c_1$ iff $\neg(\RG\Sat\beta c_1)$ for some assignment $\beta$ iff
    $\neg(\RG\Sat^\beta$Adj(Adj$(c_1,r),r^{-1})$, iff $\RG\Sat\neg$Adj(Adj$(c_1,r),r^{-1})\equiv$Adj(Adj$(c,r),r^{-1})$
    \item If $c=\E{V}x(c_1)$, Adj$(c,r)=\E{V}\mrm{x}(x\neq v_1\wedge\ldots\wedge x\neq v_n\wedge\text{Adj}(c_1,r))$, so that Adj(Adj$(c,r),r^{-1})=\E{V}\mrm{x}(\text{Adj(Adj}(c_1,r),r^{-1}))$. Hence, $\RG\Sat$Adj(Adj$(c,r),r^{-1})$ iff $\RG\Sat\E{V}\mrm{x}(\text{Adj(Adj}(c_1,r),r^{-1}))$ iff $\RG\Sat\E{V}\mrm{x}(c_1)=c$
    \item If $c=\E{V}x(c_1)$, the proof is analogous to above
    \item If $c=\E{L}x(c_1)$, $\RG\Sat$Adj(Adj$(c,r),r^{-1})$ iff $\RG\Sat\E{L}\mrm{x}(\text{Adj(Adj}(c_1,r),r^{-1}))$ iff $\RG\Sat\E{L}\mrm{x}(c_1)=c$
    \end{enumerate}
Since the construction of Adj(Adj($c,r),r^{-1})$ does not any term representing labels, Adj(Adj($c^\alpha,r),r^{-1})\equiv$Adj(Adj($c,r)^\alpha,r^{-1})\equiv$Adj(Adj($c,r),r^{-1})^\alpha$. Hence, the lemma is valid.\qed
\end{proof}

Actually, from the transformation Adj we already obtain a right-application condition. However, we want a stronger condition such that we add the specification of the right-hand graph. In addition, since the resulting graph should also satisfy the existing right-application of the given generalised rule schema, and the comatch should also satisfy the dangling condition.

\Def{Shifting}{def:shift}{
Given a generalised rule $w=\tuple{r,ac_L,ac_R}$ for an unrestricted rule schema $r=\tuple{L\leftarrow K\rightarrow R}$, and a precondition $c$. Right application condition w.r.t. $c$ and $w$, denoted by Shift$(c,w)$, is defined as:
\[\text{Shift$(c,w)=$Adj(Lift$(c,w),r)\wedge\, ac_R\, \wedge\,$Spec$(R)\,\wedge\,$Dang$(r^{-1})$}.\]
}

\begin{example}$~$\\
\begin{tabular}{lcp{10.5cm}}
    Shift$(q_1,\mtt{del}^\vee)$&=&$\mrm{\neg\E{E}x(x\neq e1\wedge s(x)\neq 1\wedge s(x)\neq 2\wedge m_V(s(x))\neq none)\wedge\,\mrm{d\geq e}}$\\
    &&$\mrm{\wedge\, \lV(1)=a\wedge\lV(2)=b\wedge\lE(e1)=d+e\wedge\mV(1)=red}$\\
    &&$\mrm{\wedge\mV(2)=none\wedge\mE(e1)=none\wedge s(e1)=1\wedge t(e1)=2}$\\
    &&$\mrm{\wedge \neg root(1)\wedge\neg root(2)\wedge int(d)\wedge int(e)}$\\
    Shift$(q_2,\mtt{del}^\vee)$&=&$\mrm{\E{V}x(x\neq 1\wedge x\neq 2\wedge\neg root(x)) \wedge outdeg(1)\neq 1}$\\
    &&$\mrm{\wedge\,\lV(1)=a\wedge\lV(2)=a\wedge\lE(e1)=empty\wedge\mV(1)=none}$\\
    &&$\mrm{\wedge\,\mV(2)=none\wedge\mE(e1)=dashed\wedge s(e1)=1\wedge t(e1)=2}$\\
    &&$\mrm{\wedge\,\neg root(1)\wedge root(2)\wedge indeg(2)=1\wedge outdeg(2)=0}$
\end{tabular}
\end{example}

\Prop{Shifting}{prop:shift}{
Given a host graph $G$, a generalised rule $w=\tuple{r,ac_L,ac_R}$ an unrestricted rule schema $r=\tuple{L\leftarrow K\rightarrow R}$, an injective morphism $g:L^\alpha\rightarrow G$ for some label assignment $\alpha_L$, and a precondition $d$.
Then for host graphs $H$ such that $G\Rightarrow_{w,g,g^*}H$ with an right morphism $g^*:R^\beta\rightarrow H$ where $\beta_R(i)=\alpha_L(i)$ for every variable $i$ in $L$ such that $i$ in $R$, and for every node/edge $i$ where $\mrk_L(i)=\mrk_R(i)=\mtt{any}$,
\[\text{$\RH\Sat($Adj(Lift$(d,w)),r)^\beta$ if and only if $\RH\Sat$(Shift$(d,w))^\beta$}\]
}

\begin{proof}
It is obvious that Adj(Lift$(d,w)),r)^\beta$ is implied by Shift$(d,w)^\beta$, so now we show that Adj(Lift$(d,w)),r)^\beta$ implies Shift$(d,w)^\beta$. That is, $ac_R^\beta\, \wedge\,$Spec$(R)^\beta\,\wedge\,$ Dang$(r^{-1})^\beta$ is satisfied by $\RH$. From Definition \ref{def:generalisedrPO}, $G\Rightarrow_{w,g,g^*}H$ implies $\RH\Sat ac_R^\beta$. From the construction of Spec$(R)$, Spec$(R)^\beta\equiv$Spec$(R^\beta)$ such that $\RH\Sat$Spec$(R)^\beta$ is implied by the injective morphism $g^*$. Finally, there is no label variable in Dang$(r^{-1})$ such that Dang$(r^{-1})\equiv$Dang$(r^{-1})^\beta$, which is implied by $G\Rightarrow_{w,g,g^*}H$ because nodes in $R-K$ must not incident to any edge in $\RH-R$ so that their indegree and outdegree in $R$ represents their indegree and outdegree in $\RH$.\qed
\end{proof}

\subsection{From right-application condition to postcondition}

The right-application condition we obtain from transformation Shift is strong enough to express properties of the replacement graph of any resulting graph. However, since we need a condition (without node/edge constant), we define transformation Post.

\Def{Formula Post}{def:post}{
Given a generalised rule $w=\tuple{r,ac_L,ac_R}$ for an unrestricted rule $r=\tuple{L\leftarrow K\rightarrow R}$ and a precondition $c$. A postcondition w.r.t. $c$ and $w$, denoted by Post$(c,w)$, is the FO formula:
\[\text{Post}(c, w)=\E{V}x_1,\ldots,x_n(\E{E}y_1,\ldots,y_m(\E{L}z_1,\ldots,z_k(\text{Var(Shift}(c,w))))).\]
where $\{x_1,\ldots,x_n\}$, $\{y_1,\ldots,y_m\}$, and $\{z_1,\ldots,z_k\}$ denote the set of free node, edge, and label (resp.) variables in Var(Shift$(c,w)$). We then denote by Slp$(c,r)$ the formula Post$(c,r^\vee)$, and  Slp$(c,r^{-1})$ for the formula Post$(c,(r^\vee)^{-1})$.}

To obtain a closed FO formula from the obtained right-application condition, we only need to variablise the node/edge constants in the right-application condition, then put an existential quantifier for each free variable in the resulting FO formula.

\begin{example}$~$\\
\begin{tabular}{lcp{10.5cm}}
    Post$(q_1,\mtt{del}^\vee)$&=&$\mrm{\E{V}u,v(u\neq v\wedge \E{E}w(\E{L}a,b,d,e(}$\\
    &&$\mrm{\neg\E{E}x(x\neq w\wedge s(x)\neq u\wedge s(x)\neq v\wedge m_V(s(w))\neq none)\wedge\,\mrm{d\geq e}}$\\
    &&$\mrm{\wedge\, \lV(u)=a\wedge\lV(v)=b\wedge\lE(w)=d+e\wedge\mV(u)=red}$\\
    &&$\mrm{\wedge\mV(v)=none\wedge\mE(w)=none\wedge s(w)=u\wedge t(w)=v}$\\
    &&$\mrm{\wedge \neg root(u)\wedge\neg root(v)\wedge int(d)\wedge int(e))))}$\\
    Post$(q_2,\mtt{del}^\vee)$&=&
    $\mrm{\E{V}u,v(u\neq v\wedge\E{E}w(\E{L}a(}$
    \\
    &&$\mrm{\E{V}x(x\neq u\wedge x\neq v\wedge\neg root(x)) \wedge outdeg(u)\neq 1}$\\
    &&$\mrm{\wedge\,\lV(u)=a\wedge\lV(v)=a\wedge\lE(w)=empty\wedge\mV(u)=none}$\\
    &&$\mrm{\wedge\,\mV(v)=none\wedge\mE(w)=dashed\wedge s(w)=u\wedge t(w)=v}$\\
    &&$\mrm{\wedge\,\neg root(u)\wedge root(v)\wedge indeg(v)=1\wedge outdeg(v)=0})))$
\end{tabular}
\end{example}

\Prop{Post}{prop:post}{
Given a host graph $G$, a generalised rule $w=\tuple{r,ac_L,ac_R}$ for an unrestricted rule schema $r=\tuple{L\leftarrow K\rightarrow R}$, and a precondition $c$. Then for all host graph $H$ such that there exists an injective morphism $g^*:R^\beta\rightarrow H$ for a label assignment $\beta_R$,
\[\text{$\RH\Sat$(Shift$(c,w))^\beta$ if and only if $H\Sat$Post$(c, w)^\beta$}\]}

\begin{proof}From Lemma \ref{lemma:var}, $\RH\Sat$(Shift$(c,w))^\beta$ iff $H\Sat$Var(Shift$(c,w))^\beta$. If there is no node (or edge) in $H$, then there is no node (or edge) constant in $\RH$ since they are isomorphic. Hence, there is no free node (or edge) variable in Var(Shift$(c,w)))^\beta$ so that there is no additional node (or edge) quantifier for Var(Shift$(c,w)))^\beta$. If there exists a node (or edge) in $H$, then from Lemma \ref{lemma:BolOperators}, adding an existential quantifier will not change its satisfaction on $H$. Hence, $H\Sat$Var(Shift$(c,w)))^\beta$ iff $H\Sat$Post$(c, w)^\beta$.\qed
\end{proof}

Finally, we show that Post$(c,r^\vee)$ is a strongest liberal postcondition w.r.t. $c$ and $r$. That is, by showing that for all host graph $G$, $G\Sat c$ and $G\Rightarrow_rH$ implies $H\Sat$Post$(c,r^\vee)$, and showing that for all host graph $H$, $H\Sat$Post$(c,r^\vee)$ implies the existence of host graph $G$ such that $G\Sat c$ and $G\Rightarrow_rH$.

\Theo{Strongest liberal postconditions}{theo:slp}{
Given a precondition $c$ and a conditional rule schema $r=\tuple{\langle L \leftarrow K\rightarrow R\rangle,\Gamma}$. Then, Slp$(c,r)$ is a strongest liberal postcondition w.r.t. $c$ and $r$.}

\begin{proof}
From Lemma \ref{lemma:rvee}, $G\Rightarrow_r H$ iff $G\Rightarrow_{w,g,g*} H$ for some injective morphisms $g:L^\alpha\rightarrow G$ and $g^*:R^\beta\rightarrow H$ with label assignment $\alpha_L$ and $\beta_R$ where $\beta_R(i)=\alpha_L(i)$ for every variable $i$ in $L$ such that $i$ is in $R$, and for every node/edge $i$ where $\mrk_L(i)=\mrk_R(i)=\mtt{any}$. From Proposition \ref{prop:lift}, Lemma \ref{lemma:adj}, Proposition \ref{prop:shift}, and Proposition \ref{prop:post},
$G\Sat c$ and $G\Rightarrow_{r^\vee,g,g^*} H$ implies $\RG\Sat$(Lift$(c,r^\vee))^\alpha$ implies $\RH\Sat$Shift$(c,r^\vee)^\beta$ implies $H\Sat$Post$(c,r^\vee)$. Hence, Post$(c,r^\vee)$ is a liberal postcondition w.r.t. $c$ and $r$.

To show that Post$(c,r^\vee)$ is a strongest liberal postcondition, based on Lemma \ref{lemma:slp}, we need to show that  for every graph $H$ satisfying Post$(c,r^\vee)$, there exists a host graph $G$ satisfying $c$ such that $G\Rightarrow_{r}H$. 

Recall the construction of Shift$(c,r^\vee)$. A graph satisfying Shift$(c,r^\vee)$ must satisfying Spec$(R)$ such that $H\Sat$(Post$(c,r^\vee))$ implies $H\Sat$(Post$(c,r^\vee))^\beta$  for some label assignment $\beta_R$, which implies $H\Sat$Var(Spec$(R))^\beta\equiv$Var(Spec$(R^\beta))$. From Lemma \ref{lemma:var}, this implies the existence of an injective morphism $g^*:R^\beta\rightarrow H$. 
From Proposition \ref{prop:post}, $H\Sat$Post$(c,r^\vee)^\beta$ implies $\RH\Sat$Shift$(c,r^\vee)^\beta$. From the construction of Shift$(c,r^\vee)$, Dang$(r^{-1})$ asserts that the dangling condition is satisfied by $g^*$. Hence, there exists a natural double-pushout bellow where every morphism is inclusion:

\begin{center}
        \begin{tikzpicture}[scale=0.9, transform shape]
	\node (2) at (1, 0) {(1)};
	\node (a) at (2, 0.75) {$K$};
	\node (b) at (0, 0.75) {$R^\beta$};
	\node (c) at (4, 0.75) {$L^\alpha$};
	\node (d) at (2, -0.75) {$D$};
	\node (1) at (3, 0) {(2)};
	\node (e) at (4, -0.75) {$A$};
	\node (f) at (0, -0.75) {$\RH$};
    \draw[->] (a) to node {} (b);
	\draw[->] (a) to node {} (c);
	\draw[->] (a) to node {} (d);
	\draw[->] (d) to node {} (e);
	\draw[->] (d) to node {} (f);
	\draw[->] (c) to node {} (e);
	\draw[->] (b) to node {} (f);
\end{tikzpicture}\end{center}

Since $\RH\Sat$Adj(Lift($c,r^\vee),r)^\beta$, from Lemma \ref{lemma:adj} this implies $A$ satisfies Adj(Adj(Lift($c,r^\vee),r),r^{-1})^\alpha$. From Lemma \ref{lemma:adj2}, this implies $A\Sat c^\alpha$. Since direct derivations are invertible, $A\Rightarrow_{r^\vee,g,g^*}H$. Hence, $A\Rightarrow_r H$.\qed
\end{proof}

\section{Proof Calculus}
\label{sec:proofrules}
In this section, we introduce semantic and syntactic partial correctness calculus. The former consider arbitrary assertion language as pre- and postconditions, while the latter consider conditions (i.e. closed firs-order formulas) as the pre- and postconditions.

\subsection{Semantic partial correctness calculus}
\label{sec:sem}
For a graph program $P$ and assertions $c$ and $d$, triple $\{c\}\,P\,\{d\}$ is partially correct iff for all graph satisfying $c$, $H\in\Sem{P}G$ implies $H\Sat d$. 

\begin{definition}[Partial correctness \cite{PoskittP12}]\normalfont \label{def:par}
A graph program $P$ is \emph{partially correct} with respect to a precondition $c$ and a postcondition $d$, denoted by $\vDash \{c\}~P~\{d\}$ if for every host graph $G$ and every graph $H$ in $\llbracket P\rrbracket G$, $G\models c$ implies $H\models d$. \qed
\end{definition}  

To prove that $\vDash \{c\}~P~\{d\}$ holds for some assertions $c,d$, and a graph program $P$, we use two methods: 1) finding a strongest liberal postcondition w.r.t $c$ and $P$ and prove that the strongest liberal postcondition implies $d$, and 2) using proof rules for graph programs, create a proof tree to show the partial correctness. The first method has been done in classical programming \cite{DijkstraS90,Jones-Roscoe-Wood10a}, while the second has been done in graph programming \cite{Poskitt13} but without the special command $\mtt{break}$.

In the previous section, we have defined a strongest liberal postcondition w.r.t. a precondition and a conditional rule schema. In this section, we extend the definition from conditional rule schemata to graph programs. In addition, we also introduce a weakest liberal precondition over a graph program.

\Def{Strongest liberal postconditions}{def:slpP}{
A condition $d$ is a \emph{liberal postcondition} w.r.t. a precondition $c$ and a graph program $P$, if for all host graphs $G$ and $H$,
\[G\vDash c \text{ and } H\in\Sem{P}G \text{ implies }H\vDash d.\]
A \emph{strongest liberal postcondition} w.r.t. $c$ and $P$, denoted by SLP$(c,P)$, is a liberal postcondition w.r.t. $c$ and $P$ that implies every liberal postcondition w.r.t. $c$ and $P$.
}

\Def{Weakest liberal preconditions}{def:wlpP}{
A condition $c$ is a \emph{liberal precondition} w.r.t. a postcondition $d$ and a graph program $P$, if for all host graphs $G$ and $H$,
\[G\vDash c \text{ and } H\in\Sem{P}G \text{ implies }H\vDash d.\]
A \emph{weakest liberal precondition} w.r.t. $d$ and $P$, denoted by WLP$(P,d)$, is a liberal precondition w.r.t. $d$ and $P$ that is implied by every liberal postcondition w.r.t. $d$ and $P$.
}

\begin{lemma}\normalfont\label{lemma:slpP}
Given a graph program $P$ and a precondition $c$. Let $d$ be a liberal postcondition w.r.t. $c$ and $P$. Then $d$ is a strongest liberal postcondition w.r.t. $c$ and $P$ if and only if for every graph $H$ satisfying $d$, there exists a host graph $G$ satisfying $c$ such that $H\in\Sem{P}G$.
\end{lemma}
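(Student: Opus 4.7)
The plan is to adapt the argument used for Lemma~\ref{lemma:slp} almost verbatim, replacing the derivation relation $G\Rightarrow_r H$ with the semantic relation $H\in\Sem{P}G$. The statement is a purely semantic characterisation of strongest liberal postconditions, so the internal structure of $P$ (whether a rule schema or a compound program) is irrelevant; only the input/output relation $R_P=\{(G,H):H\in\Sem{P}G\}$ is used.

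For the \emph{if} direction, assume every host graph $H$ with $H\Sat d$ admits some $G$ with $G\Sat c$ and $H\in\Sem{P}G$. Let $a$ be any liberal postcondition w.r.t.\ $c$ and $P$, and pick an arbitrary $H$ with $H\Sat d$. By assumption there is a $G$ with $G\Sat c$ and $H\in\Sem{P}G$; Definition~\ref{def:slpP} then gives $H\Sat a$. Hence $d$ implies every liberal postcondition $a$, and since $d$ is itself a liberal postcondition by hypothesis, $d$ is a strongest liberal postcondition w.r.t.\ $c$ and $P$.

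For the \emph{only if} direction, I would argue by contraposition. Suppose there is some $H_0$ with $H_0\Sat d$ but for which no host graph $G$ satisfies both $G\Sat c$ and $H_0\in\Sem{P}G$. Consider the semantic assertion $a$ whose models are exactly the host graphs $X$ such that there exists a host graph $G'$ with $G'\Sat c$ and $X\in\Sem{P}G'$. This $a$ is trivially a liberal postcondition w.r.t.\ $c$ and $P$: whenever $G\Sat c$ and $X\in\Sem{P}G$, the graph $X$ lies in the model class of $a$ by construction. However $H_0$ is not a model of $a$ by our choice, while $H_0\Sat d$; therefore $d$ does not imply $a$, contradicting the assumption that $d$ is a strongest liberal postcondition.

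The main obstacle, such as it is, is purely definitional: one must justify that the set $a$ used in the \emph{only if} direction counts as a legitimate assertion. In the semantic calculus of Section~\ref{sec:proofrules} the assertion language is arbitrary (any isomorphism-closed class of host graphs, using Corollary~\ref{cor:isocond} for well-definedness), so this is immediate. The syntactic analogue, where $a$ would have to be a closed first-order formula, is precisely the point where relative completeness fails and is addressed separately in Section~\ref{sec:completeness}; but for Lemma~\ref{lemma:slpP} as stated no syntactic representation of $a$ is required.
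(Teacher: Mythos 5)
Your proposal is correct. The \emph{if} direction is essentially identical to the paper's: pick $H\Sat d$, obtain $G\Sat c$ with $H\in\Sem{P}G$ from the hypothesis, and conclude $H\Sat a$ for any liberal postcondition $a$ directly from Definition~\ref{def:slpP}. For the \emph{only if} direction you go further than the paper does: the paper argues informally that if some $H_0\Sat d$ has no $c$-satisfying $\Sem{P}$-preimage then one ``cannot guarantee'' that $H_0$ satisfies every liberal postcondition, without ever exhibiting a liberal postcondition that $d$ fails to imply. You close that gap by explicitly constructing the witness: the semantic assertion $a$ whose models are exactly the graphs reachable under $\Sem{P}$ from models of $c$. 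This $a$ is a liberal postcondition by construction, $H_0$ violates it, and so $d$ does not imply $a$ --- a genuine contradiction with $d$ being strongest. You also correctly isolate the one point that needs justification, namely that such a semantic class counts as an assertion; since the lemma lives in the semantic calculus where assertions are arbitrary isomorphism-closed predicates (Corollary~\ref{cor:isocond} guaranteeing well-definedness, and $\Sem{P}$ being closed under isomorphism), this is unproblematic, and the expressibility issue only resurfaces for the syntactic calculus, exactly as you note. In short: same route as the paper, but your \emph{only if} argument is the rigorous version of what the paper only sketches.
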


\begin{proof}$~$\\
(If).\\
Assuming it is true that for every graph $H$ satisfying $d$, there exists a host graph $G$ satisfying $c$ such that $H\in\Sem{P}G$.
Let $H$ be a host graph satisfying $d$. From the assumption, there exists a graph $G$ such that $G\vDash c$ and $H\in\Sem{P}G$. Since $H\in\Sem{P}G$, $H\vDash a$ for all liberal postcondition $a$ over $c$ and $P$. Hence, $H\Sat d$ implies $H\Sat a$ for all liberal postcondition $a$ over $c, P$ such that $d$ is a strongest postcondition w.r.t. $c$ and $P$\\
(Only if).\\
Assume that it is not true that for every host graph $H$, $H\vDash d$ implies there exists a host graph $G$ satisfying $c$ such that $H\in\Sem{P}G$. We show that a graph satisfying $d$ can not imply the graph satisfying all liberal postcondition w.r.t $r$ and $P$. From the assumption, there exists a host graph $H$ such that every host graph $G$ does not satisfy $c$ or $H\notin\Sem{P}G$. In the case of $H\notin\Sem{P}G$, we clearly can not guarantee characteristic of $H$ w.r.t. $P$. Then for the case where $G$ does not satisfy $c$, we also can not guarantee the satisfaction of any liberal postcondition $a$ over $c$ in $H$ because $a$ is dependent of $c$. Hence, we can not guarantee that $H$ satisfying all liberal postcondition w.r.t. $r$ and $c$.\qed
\end{proof}

\begin{lemma}\normalfont\label{lemma:wlpP}
Given a graph program $P$ and a postcondition $d$. Let $c$ be a liberal precondition w.r.t. $P$ and $d$. Then $c$ is a weakest liberal precondition w.r.t. $P$ and $d$ if and only if for every graph $G$ $G\Sat c$ if and only if for all host graphs $H$, $H\in\Sem{P}G$ implies $H\Sat d$.
\end{lemma}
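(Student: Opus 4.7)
The plan is to mirror the structure of the proof of Lemma~\ref{lemma:slpP}, handling each direction of the outer ``if and only if'' in turn, and unfolding Definition~\ref{def:wlpP} carefully. Recall that $c$ being a weakest liberal precondition w.r.t.\ $P$ and $d$ means two things: (i) $c$ itself is a liberal precondition, i.e.\ $G\Sat c$ and $H\in\Sem{P}G$ imply $H\Sat d$; and (ii) every liberal precondition $c'$ w.r.t.\ $P$ and $d$ satisfies $c'\Rightarrow c$. The target biconditional on the right states that for every host graph $G$, $G\Sat c$ exactly when $d$ holds on every graph in $\Sem{P}G$.

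For the only-if direction, assume $c$ is a weakest liberal precondition. The forward implication $G\Sat c \Rightarrow (\forall H\in\Sem{P}G\colon H\Sat d)$ is immediate from condition (i). For the reverse implication, I would introduce the ``canonical'' liberal precondition $c^*$ defined semantically by $G\Sat c^*$ iff $\forall H\in\Sem{P}G\colon H\Sat d$. By construction $c^*$ is a liberal precondition, so by the weakest-property (ii) applied to $c^*$, we get $c^*\Rightarrow c$. Hence if $\forall H\in\Sem{P}G\colon H\Sat d$, then $G\Sat c^*$ and therefore $G\Sat c$, which is exactly the missing implication.

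For the if direction, assume the biconditional $G\Sat c \Leftrightarrow (\forall H\in\Sem{P}G\colon H\Sat d)$ for every host graph $G$. Condition (i) is immediate: if $G\Sat c$ and $H\in\Sem{P}G$, then by the assumed biconditional $H\Sat d$. For condition (ii), let $c'$ be any liberal precondition w.r.t.\ $P$ and $d$, and let $G$ be a host graph with $G\Sat c'$. By the definition of a liberal precondition, every $H\in\Sem{P}G$ satisfies $d$, and then by the assumed biconditional $G\Sat c$. This gives $c'\Rightarrow c$, so $c$ is a weakest liberal precondition.

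I do not expect any serious obstacle here: both directions are formal manipulations of the definitions, closely parallel to Lemma~\ref{lemma:slpP}. The only subtle point is making sure that the ``canonical'' liberal precondition $c^*$ used in the only-if direction is a legitimate assertion in our language; but since the semantic calculus of Section~\ref{sec:sem} explicitly permits arbitrary assertion languages, $c^*$ can be taken as a semantic predicate, and no further syntactic construction is needed.
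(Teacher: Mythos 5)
Your proposal is correct and follows essentially the same route as the paper: both directions unwind Definition~\ref{def:wlpP}, and the key step in the only-if direction is to treat the semantic predicate ``every $H\in\Sem{P}G$ satisfies $d$'' as itself a liberal precondition and apply the weakest-property to it. If anything, your explicit introduction of the canonical precondition $c^*$ (and the remark that it is admissible because the semantic calculus allows arbitrary assertions) makes that step cleaner than the paper's own wording, which leaves the instantiation implicit.
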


\begin{proof}
(If).\\
Suppose that $G\Sat c$ iff for all host graphs $H$, $H\in\Sem{P}G$ implies $H\Sat d$. It implies for all host graphs $H$, $G\Sat c$ and $H\in\Sem{P}G$ implies $H\Sat d$. From Definition \ref{def:wlpP}, $c$ is a liberal precondition. Let $a$ be a liberal precondition w.r.t. $P$ and $d$ as well. From Definition \ref{def:wlpP}, for all host graphs $H$, $H\in\Sem{P}G$ implies $H\Sat d$, and from the premise, $G\Sat c$. Hence, $c$ is a weakest liberal precondition.
(Only if).\\
Suppose that $c$ is a weakest liberal precondition. From Definition \ref{def:wlpP}, if $G\Sat c$ then $H\in\Sem{P}G$ implies $H\Sat d$. Let $a$ be a liberal precondition w.r.t $P$ and $d$. From Definition \ref{def:wlpP}, $G\Sat a$ implies for all $H$, $H\in\Sem{P}G$ implies $H\Sat d$. Since for all $a$, $G\Sat a$ must imply $G\Sat c$, then $H\in\Sem{P}G$ implies $H\Sat d$ must imply $G\Sat c$ as well.
\qed
\end{proof}

SLP and WLP for a loop $P!$ is not easy to construct because $P!$ may get stuck or diverge. In \cite{Pennemann09}, the divergence is represented by infinite formulas while in \cite{Jones-Roscoe-Wood10a}, it is represented by recursive equation that is not well-defined. In this paper, for practical reason we only consider strongest liberal postconditions over loop-free graph programs.

For the conditional commands $\mtt{if/try-then-else}$, the execution of the command depends on the existence of a proper host graph as a result of executing a graph program. In \cite{Poskitt13}, there is an assertion representing a condition that must be satisfied by a graph such that there exists a path to successful execution, and there is also an assertion representing a condition that must be satisfied by a host graph such that there exist a path to a failure. Here, we define assertion SUCCESS for the former and FAIL for the latter.

\Def{Assertion SUCCESS}{def:assSE}{
For a graph program $P$, SUCCESS$(P)$ is the predicate on host graphs where for all host graph $G$,
\[G\Sat\text{SUCCESS}(P)\text{~if and only if there exists a host graph $H$ with~} H\in\llbracket P\rrbracket G.\]
}

\Def{Assertion FAIL}{def:assFE}{
Given a graph program $P$. FAIL$(P)$ is the predicate on host graphs where for all host graph $G$,
\[G\Sat\text{FAIL}(P)\text{~if and only if~} \text{fail}\in\llbracket P\rrbracket G.\qed\]
}

Note that for a graph program $C$, $\FAIL(C)$ does not necessarily equivalent to $\neg\SUCCESS(C)$, e.g. if $C=\mtt{\{nothing,add\};zero}$ where
$\mtt{nothing}$ is the rule schema where the left and right-hand graphs are the empty graph, $\mtt{add}$ is the rule schema where the left-hand graph is the empty graph and the right-hand graph is a single 0-labelled unmarked and unrooted node, and $\mtt{zero}$ is a rule schema that match with the a 0-labelled unmarked and unrooted node.
For a host graph $G$ where there is no 0-labelled unmarked unrooted node, there is a derivation $\tuple{C,G}\rightarrow^* H$ for some host graph $H$ but also a derivation $\tuple{C,G}\rightarrow^* \mtt{fail}$ such that $G\Sat\SUCCESS(C)$ and $G\Sat\FAIL(C)$.

Having a strongest liberal postcondition over a loop-free program $P$ w.r.t a precondition $c$ allows us to prove that the triple $\{c\}\,P\,\{d\}$ for an assertion $d$ is partially correct. That is, by showing that $d$ is implied by the strongest liberal postcondition. 

\Prop{Strongest liberal postcondition for loop-free programs}{prop:slpP}{
Given a precondition $c$ and a loop-free program $S$. Then, the following holds:\\
 \vspace{-\topsep}\begin{enumerate}
    \item If $S$ is a set of rule schemata $\mathcal{R}=\{r_1,\ldots,r_n\}$,\\
    SLP$(c,\mathcal{R})=\begin{cases}
    \SLP(c,r_1)\vee\ldots\vee\SLP(c,r_n) &, \text{, if~}n>0,\\
    \mrm{false}&\text{, otherwise}\end{cases}$
    \item For loop-free programs $C, P,$ and $Q$,
    \begin{enumerate}
        \item[(i)] If $S=P\mtt{~or~}Q,$\\
        $\SLP(c,S)=\SLP(c,P)\,\vee\,\SLP(c,Q)$
        \item[(ii)] If $S=P;Q,$\\
        $\SLP(c,S)=\SLP(\SLP(c,P),Q)$
        \item[(iii)] If $S=\mtt{if\,}C\mtt{\,then\,}P\mtt{\,else\,}Q,$\\
        $\SLP(c,S)=\SLP(c\wedge \SUCCESS(C),P)\vee\SLP(c\wedge \FAIL(C),Q)$
        \item[(iv)] If $S=\mtt{try\,}C\mtt{\,then\,}P\mtt{\,else\,}Q,$\\
        $\SLP(c,S)=\SLP(c\wedge \SUCCESS(C),C;P)\vee\text{SLP}(c\wedge \FAIL(C),Q)$
    \end{enumerate}
\end{enumerate}
}

Computing SLP$(c,\mathcal{R})$ for a set of rule schemata $\mathcal{R}$ is basically disjunct all strongest liberal postcondition w.r.t $c$ and each rule schema in $\mathcal{R}$. If the rule set is empty, then SLP$(c,\mathcal{R})$ is $\mrm{false}$ since there is nothing to disjunct. Computing SLP$(c,P;Q)$ is constructed by having SLP$(c,P)$ and then find strongest liberal postcondition w.r.t. $Q$ and the resulting formula. 

The equation for program composition is the same with the equation for program composition in \cite{DijkstraS90,Jones-Roscoe-Wood10a}. However, for $\mtt{if-then-else}$ command, the command $\mtt{if}$ in classical programming is followed by an assertion while in graph programs it is followed by a graph program. Hence, instead of checking the truth value of the assertion on the input graph, we check the check if the satisfaction of $\SUCCESS$ and $\FAIL$ of the associated program on the input graph. Then for $\mtt{try-then-else}$ command, it does not exist in classical programming, but we have the equation for the command based on its similarity with $\mtt{if-then-else}$.

The execution of if/try commands yields two possibilities of results, so we need to check the strongest liberal postcondition for both possibilities and disjunct them. For the graph program $\mtt{if~}C\mtt{~then~}P\mtt{~else~}Q$, $P$ can be executed if $\SUCCESS(C)$ holds and $Q$ can be executed if $\FAIL(C)$ holds. Similarly for $\mtt{try~}C\mtt{~then~}P\mtt{~else~}Q$, $C;P$ can be executed if $\SUCCESS(C)$ holds and $Q$ can be executed if $\FAIL(C)$ holds.

\begin{proof}[of Proposition \ref{prop:slpP}]
Here, we show that the proposition holds by induction on loop-free programs.\\
Base case.
 \vspace{-\topsep}\begin{enumerate}
    \item If $S=\mathcal{R}=\{\}$,\\
        It is obvious that for all host graph $G$, $G\nRightarrow$ such that every condition is a liberal postcondition w.r.t. $c$ and $\mathcal{R}$, and $\mrm{false}$ is the strongest among all.
    \item If $S=\mathcal{R}=\{r_1,\ldots,r_n\}$ where $n>0$,\\
        \begin{tabular}{llcp{10cm}}
            (a) & $H\Sat\SLP(c,\mathcal{R})$ & $\myeq{\Leftrightarrow}{L\ref{lemma:slpP}}$ & $\exists G. G\Rightarrow_\mathcal{R}H \wedge G\Sat c$\\
             &&  $\Leftrightarrow$ & $\exists G. (G\Rightarrow_{r_1}H\vee\ldots\vee G\Rightarrow_{r_n}H)\wedge G\Sat c$\\
             &&  $\Leftrightarrow$ & $(\exists G. G\Rightarrow_{r_1}H\wedge G\Sat c)\vee\ldots\vee( \exists G. G\Rightarrow_{r_n}H\wedge G\Sat c)$\\
             && $\myeq{\Leftrightarrow}{L\ref{lemma:slpP}}$ & $H\Sat\text{SLP}(c,r_1)\vee\ldots\vee\text{SLP}(c,r_n)$
        \end{tabular}
\end{enumerate}
Inductive case. Assume the proposition holds for loop-free programs $C, P$, and $Q$.
 \vspace{-\topsep}\begin{enumerate}\small
    \item If $S=P\mtt{~or~}Q,$\\
    \begin{tabular}{llcp{11cm}}
            & $H\Sat\SLP(c,S)$ & $\myeq{\Leftrightarrow}{L\ref{lemma:slpP}}$ & $\exists G. H\in\Sem{P\mtt{~or~}Q}G \wedge G\Sat c$\\
             &&  $\Leftrightarrow$ & $\exists G. (H\in\Sem{P}G \vee H\in\Sem{Q}G)\wedge G\Sat c$\\
             &&  $\Leftrightarrow$ & $(\exists G. H\in\Sem{P}G \wedge G\Sat c) \vee (\exists G. H\in\Sem{Q}G\wedge G\Sat c)$\\
             &&  $\myeq{\Leftrightarrow}{L\ref{lemma:slpP}}$ & $G\Sat\SLP(c,P)\vee\SLP(c,Q)$
        \end{tabular}
    \item If $S=P;Q$,\\
        \begin{tabular}{llcp{10cm}}
            & $H\Sat\SLP(c,S)$ & $\myeq{\Leftrightarrow}{L\ref{lemma:slpP}}$ & $\exists G.~ H\in\Sem{P;Q}G \wedge G\Sat c$\\
             &&  $\Leftrightarrow$ & $\exists G,G'.~ G'\in\Sem{P}G \wedge H\in\Sem{Q}G'\wedge G\Sat c$\\
             &&  $\myeq{\Leftrightarrow}{L\ref{lemma:slpP}}$ & $\exists G'.~G'\Sat\SLP(c,P) \wedge H\in\Sem{Q}G'$\\
             &&  $\myeq{\Leftrightarrow}{L\ref{lemma:slpP}}$ & $H\Sat\SLP(\SLP(c,P),Q)$
        \end{tabular}
    \item If $S=\mtt{if\,}C\mtt{\,then\,}P\mtt{\,else\,}Q$,\\
    \begin{tabular}{llcp{10.7cm}}
            & \multicolumn{3}{l}{$H\Sat\SLP(c,S)$}\\
            && $\myeq{\Leftrightarrow}{L\ref{lemma:slpP}}$ & $\exists G.~ G\Sat c\wedge H\in\Sem{\mtt{if\,}C\mtt{\,then\,}P\mtt{\,else\,}Q}G$\\
             &&  $\Leftrightarrow$ & $\exists G.~ G\Sat c\wedge((G\Sat\SUCCESS(C)\wedge H\in\Sem{P}G) \vee (G\Sat\FAIL(C)\wedge H\in\Sem{Q}G))$\\
             && $\Leftrightarrow$ & $(\exists G.~G\Sat c\wedge\SUCCESS(C)\wedge H\in\Sem{P}G) \vee (\exists G.~ G\Sat c\wedge\FAIL(C)\wedge H\in\Sem{Q}G)$\\
             &&  $\myeq{\Leftrightarrow}{L\ref{lemma:slpP}}$ & $G\Sat\SLP(c\wedge\SUCCESS(C),P)\vee\SLP(c\wedge\FAIL(C),Q)$
        \end{tabular}
    \item If $S=\mtt{try\,}C\mtt{\,then\,}P\mtt{\,else\,}Q$,\\
    \begin{tabular}{llcp{10.7cm}}
            & \multicolumn{3}{l}{$H\Sat\SLP(c,S)$}\\
            && $\myeq{\Leftrightarrow}{L\ref{lemma:slpP}}$ & $\exists G.~ G\Sat c\wedge H\in\Sem{\mtt{try\,}C\mtt{\,then\,}P\mtt{\,else\,}Q}G$\\
             && $\Leftrightarrow$ & $(\exists G,G'.~G\Sat c\wedge G'\in\Sem{C}G\wedge H\in\Sem{P}G') \vee (\exists G.~ G\Sat c\wedge\FAIL(C)\wedge H\in\Sem{Q}G)$\\
             &&  $\myeq{\Leftrightarrow}{L\ref{lemma:slpP}}$ & $(\exists G'.~G'\Sat\SLP(c,C)\wedge H\in\Sem{P}G')\vee\SLP(c\wedge\FAIL(C),Q)$\\
             &&  $\myeq{\Leftrightarrow}{L\ref{lemma:slpP}}$ & $G\Sat\SLP(\SLP(c,C),P)\vee\SLP(c\wedge\FAIL(C),Q)$\qed
        \end{tabular}
\end{enumerate}\end{proof}

To prove the triple $\{c\}~P~\{d\}$ is partially correct for a graph program $P$, we only need to show that SLP$(c,P)$ implies $d$. However for graph programs $P$ containing a loop, obtaining the assertion SLP$(c,P)$ is not easy. Alternatively, we can create a proof tree (see Definition \ref{def:prooftree}) with proof rules to show that $\{c\}~P~\{d\}$ is partially correct. Before we define the proof rules for partial correctness, we define predicate Break which shows relation between a graph program and assertions.

\Def{Provability; proof tree\cite{Poskitt13}}{def:prooftree}{
Given an proof system $I$, a triple $\{c\}~P~\{d\}$ is provable in $I$, denoted by $\vdash_I\{c\}~P~\{d\}$, if one can construct a \emph{proof tree} from the axioms and inference rules of $I$ with that triple as the root.
If $\{c\}~P~\{d\}$ is an instance of an axiom $X$ then 
\[X~\frac{}{\{c\}~P~\{d\}}\]
is a proof tree, and $\vdash_I\{c\}~P~\{d\}$. If $\{c\}~P~\{d\}$ can be instantiated from the conclusion of an inference rule $Y$, and there are proof trees $T_1,\ldots ,T_n$ with conclusions that are instances of the $n$ premises of $Y$, then 
\[Y~\frac{T_1~~~\ldots~~~T_n}{\{c\}~P~\{d\}}\]
is a proof tree, and $\vdash_I\{c\}~P~\{d\}$.}

\Def{Predicate Break}{def:Break}{
Given a graph program $P$ and assertions $c$ and $d$. Break$(c,P,d)$ is the predicate defined by:
\[\small{\text{Break$(c,P,d)$ holds iff for all derivations $\langle P, G\rangle\rightarrow^*\langle\mtt{break},H\rangle, G\Sat c$ implies $H\Sat d$}}.\]
}

Intuitively, when Break$(c,P,d)$ holds, the execution of $\mtt{break}$ that yields to termination of $P!$ will result a graph satisfying $d$.

\Lemma{lemma:breakno}{
Given a graph program $P$ with invariant $c$. If $P$ does not contain the command $\mtt{break}$, then the following triple holds:
\[\{c\}~P!~\{c\wedge\text{FAIL}(P)\}\]
}

\begin{proof}
If $P$ does not contain the command $\mtt{break}$, then the derivation $\langle P, G\rangle\rightarrow^*\langle\mtt{break},H\rangle$ must not exist for any host graphs $G$ and $H$. Hence, Break$(c,P,d)$ is true for any $c$ and $d$. Hence, Break$(c,P,\mathrm{false})$ must be true. Since $c$ in an invariant, $\{c\}~P~\{c\}$ is true. If $\tuple{P!,G}\rightarrow^*H$ for some host graph $H$, from the semantics of graph programs, $\tuple{P!,G}\rightarrow\tuple{P!,H}\rightarrow^+\mtt{fail}$. $H$ must satisfy $c$ because $c$ is the invariant of $P$, and $H$ must satisfy FAIL$(P)$ because $fail\in\Sem{P}H$. Hence, the triple holds.\qed
\end{proof}

\Def{Semantic partial correctness proof rules}{def:semproofrule}{
The semantic partial correctness proof rules for core commands, denoted by \textsf{SEM}, is defined in \figurename~\ref{fig:semprules}, where $c, d,$ and $d'$ are any assertions, $r$ is any conditional rule schema, $\mathcal{R}$ is any set of rule schemata, and $C, P$, and $Q$ are any graph programs.}

\begin{figure}
    \centering
    \footnotesize
\def\arraystretch{3}\tabcolsep=2pt
~[ruleapp]$_{\text{slp}}~\displaystyle\frac{}{\{c\}~r~\{\text{SLP}(c,r)\}}$\\$~$\\
~[ruleapp]$_{\text{wlp}}~\displaystyle\frac{}{\{\text{WLP}(r,d)\}~r~\{d\}}$\\$~$\\
~[ruleset]$~\displaystyle\frac{\{c\}~r~\{d\}\text{ for each }r\in\mathcal{R}}{\{c\}~\mathcal{R}~\{d\}}$\\$~$\\
~[comp]$\displaystyle\frac{\{c\}~P~\{e\}~~~~\{e\}~P~\{d\}}{\{c\}~P;Q~\{d\}}$
\\$~$\\
~[cons]~$\displaystyle\frac{c\text{ implies }c'~~~\{c'\}~P~\{d'\}~~~d'\text{ implies }d}{\{c\}~P~\{d\}}$\\$~$\\
{~[if]$~\displaystyle\frac{\{c\wedge\text{SUCCESS}(C)\}~P~\{d\}~~~\{c\wedge\text{FAIL}(C)\}~Q~\{d\}}{\{c\}~\mtt{if~}C\mtt{~then~}P\mtt{~else~}Q~\{d\}}$}\\$~$\\
{~[try]$~\displaystyle\frac{\{c\wedge\text{SUCCESS}(C)\}~C;P~\{d\}~~~\{c\wedge\text{FAIL}(C)\}~Q~\{d\}}{\{c\}~\mtt{try~}C\mtt{~then~}P\mtt{~else~}Q~\{d\}}$}
\\$~$\\
~[alap]$~\displaystyle\frac{\{c\}~P~\{c\}~~~~~\text{Break}(c,P,d)}{\{c\}~P!~\{(c\wedge\text{FAIL}(P))\vee d\}}$
    \caption{Calculus \textsf{SEM} of semantic partial correctness proof rules}
    \label{fig:semprules}
\end{figure}

The inference rule [ruleset] tells us about the application of a set of rule schemata $\mathcal{R}$. The rule set $\mathcal{R}$ is applied to a graph by nondeterministically choose an applicable rule schema from the set and apply it to the input graph. Hence, to derive a triple about $\mathcal{R}$, we need to prove the same triple for each rule schema inside $\mathcal{R}$.

The inference rule [comp] is similar to [comp] in traditional programming. In executing $P;Q$, the graph program $Q$ is not executed until after the execution of $P$ has terminated. So to show a triple about $P;Q$, we need to prove a triple about each $P$ and $Q$ and show that they are connected to some midpoint such that the midpoint is satisfied after the execution of $P$ and before the execution of $Q$.

Like in conventional Hoare logic \cite{Apt19}, the rule [cons] is aimed to strengthen the precondition and weaken the postcondition, or to replace the condition to another condition that semantically equivalent but syntactically different. To show that $c'$ can be strengthened to $c$, we only need to show that $c$ implies $c'$, and to weaken $d'$ to $d$, we need to show that $d'$ implies $d$.

The assertions SUCCESS and FAIL are needed to prove a triple about if command. Recall that in the execution of $\mtt{if~}C\mtt{~then~}P\mtt{~else~}Q$, the program $C$ is first executed on a copy of $G$. If it terminates and yields a proper graph as a result, $P$ is executed on $G$. If $C$ terminates and results in a fail state, then $Q$ is executed on $G$. 

Similarly, for a triple about try command, we use the two assertions. But for $\mtt{try~}C\mtt{~then~}P\mtt{~else~}Q$, $C$ is not executed on a copy of $G$, but $G$ itself. When the execution of $C$ on $G$ terminates and yields a proper graph, $P$ is executed on the result graph. Hence, the difference with [if] is located in the first of the premises, where we use the sequential composition of $C$ and $P$.

As in traditional programming, we need an invariant to show a triple about loop $P!$. When we have proven the existence of an invariant for $P$, the invariant will hold after any number of successful executions of $P$. If $P!$ terminates, from the semantics of ``!" we know that the last execution of $P$ either yields a fail state (see [Loop$_2$] of \figurename~\ref{fig:infRule-core}), such that FAIL$(P)$ must hold, or executing the command $\mtt{break}$ (see [Loop$_3$] of \figurename~\ref{fig:infRule-core}). In the former case, it is clear that the invariant and FAIL$(P)$ must hold. Then in the latter case, we use Break$(c,P,d)$ which is defined in Definition \ref{def:Break}. The triple for loops is then captured by the rule [alap].

\subsection{Syntactic partial correctness calculus}

Section \ref{sec:sem} introduces us to the semantic of partial correctness calculus. Now that we already have first-order formulas for some properties in graph programming, in this section we define the construction of SLP, SUCCESS, and FAIL in first-order formulas. In addition, we also define the syntactic version of partial correctness proof rules where possible (it will turn out that this is not always can be done). First, we define the first-order formula App$(r)$ which should represent the first-order formula of SUCCESS$(r)$.

\Def{App$(r)$}{def:appr}{
Given a conditional rule schema $r:\lkr$. The formula App$(r)$ is defined as
\[\text{App}(r)=\text{Var}(\text{Spec}(L)\wedge\text{Dang}(r)\wedge{\Gamma}).\]
}

The definition of Var$(c)$, Spec$(L)$, and Dang$(r)$ for a condition $c$, rule graph $L$, and rule schema $r$, can be found in Definition \ref{def:var}, \ref{def:spec}, and \ref{def:dang} respectively.

\Lemma{lemma:appr}{
Given a conditional rule schema $r:\lkr,$ and a host graph $G$,
\[G\Sat\text{SUCCESS}(r) \text{ if and only if } G\Sat\text{App}(r).\]
}

\begin{proof}
(If).\\
$G\Sat$App$(r)$ implies $G\Sat$Var(Spec($L$)), such that from Lemma \ref{lemma:VarL}, we know that there exists injective morphism $g:L^\alpha\rightarrow G$ for some label assignment $\alpha_L$. Then from Lemma \ref{lemma:var}, $G\Sat$App$(r)$ implies $\RG\Sat$Dang$(r)$ and $\RG\Sat\Gamma^\alpha$. From Observation \ref{obsv:dang}, $\RG\Sat$Dang$(r)$ implies $g$ satisfies the dangling condition, and $\RG\Sat\Gamma^\alpha$ clearly implies $\Gamma^{\alpha,g}$ is satisfied by $G$. Hence, from the definition of conditional rule schema application, we know that $G\Rightarrow_{r,g}H$ for some host graph $H$ such that $G\Sat$SUCCESS$(r)$.\\
(Only if).\\
$G\Sat$SUCCESS$(r)$ implies $G\Rightarrow H$ for some host graph $H$, which implies the existence of injective morphism $g:L^\alpha\rightarrow G$ for some label assignment $\alpha_L$ such that $g$ satisfies the dangling condition and $G\Sat \Gamma^{\alpha,g}$. The existence of the injective morphism implies $G\Sat$Var(Spec$(L)$) from Lemma \ref{lemma:VarL}, the satisfaction of the dangling condition implies $\RG\Sat$Dang$(r)$, and the $G\Sat\Gamma^{\alpha,g}$ implies $\RG\Sat\Gamma$. Hence, $\RG\Sat$Spec$(L)$ since $L^\alpha\rightarrow\RG$ is inclusion (see Proposition \ref{prop:specL}). Hence, $\RG\Sat$Spec$(L)\wedge$Dang$(r)\wedge\Gamma$ so that from Lemma \ref{lemma:var}, $G\Sat$App$(r)$.\qed
\end{proof}

Defining a first-order formula for SUCCESS$(r)$ with a rule schema $r$ is easier than defining FO formula for SUCCESS$(P)$ with an arbitrary loop-free program $P$. This is because we need to express properties of the initial graph after checking the existence of derivations. To determine the properties of the initial graph, we introduce the condition Pre$(P,c)$ for a postcondition $c$ and a loop-free program $P$. 
Intuitively, Pre$(P,c)$ expresses the properties of the initial graph such that we can assert the existence of a host graph $H$ such that $H\Sat c$ and $H\in\Sem{P}G$. For an example, if there exists host graphs $G'$ and $H$ for a given host graph $G$ and rule schemata $r_1$ and $r_2$ such that $G\Rightarrow_{r_1}G'\Rightarrow_{r_2}H$ and $H\Sat\mrm{true}$ (which also means that $G\Sat\SUCCESS(P)$), then $G'$ should satisfy Pre$(r_2, \mrm{true})$ and $G$ should satisfy Pre($r_1$,Pre$(r_2,\mrm{true})$) such that Pre($r_1$,Pre$(r_2,\mrm{true})$) can be considered as SUCCESS$(r_1;r_2)$ in first-order formula. For more general cases, see Definition \ref{def:slpnoloop}. In the definition, $(r^\vee)^{-1}$ refers to the inverse of the generalised $r$ (see Definition \ref{def:rvee}).

\Def{Slp, Success, Fail, Pre of a loop-free program}{def:slpnoloop}{
Given a condition $c$ and a loop-free program $S$. The first-order formulas Slp$(c,S)$, Pre$(c,S)$, Success$(S)$, and Fail$(S)$ are defined inductively:
 \vspace{-\topsep}\begin{enumerate}
    \item If $S$ is a set of rule schemata $\mathcal{R}=\{r_1,\ldots,r_n\}$,
    \begin{enumerate}
        \item[(a)] Slp$(c,S)=\begin{cases}\text{Post}(c,r_1^\vee)\vee\ldots\vee\text{Post}(c,r_n^\vee)&\text{if~} n>0,\\
        \mrm{false}&\text{otherwise}\end{cases}$
        \item[(b)] Pre$(S,c)=\begin{cases}\text{Post}(c,(r_1^\vee)^{-1})\vee\ldots\vee\text{Post}(c,(r_n^\vee)^{-1})&\text{if~} n>0,\\
        \mrm{false}&\text{otherwise}\end{cases}$
        \item[(c)] Success$(S)=\begin{cases}\text{App}(r_1)\vee\ldots\vee\text{App}(r_n)&\text{if~} n>0,\\
        \mrm{false}&\text{otherwise}\end{cases}$
        \item[(d)] Fail$(S)=\begin{cases}\neg(\text{App}(r_1)\vee\ldots\vee\text{App}(r_n))&\text{if~} n>0,\\
        \mrm{false}&\text{otherwise}\end{cases}$
    \end{enumerate}
    \item For loop-free programs $C, P,$ and $Q$,
    \begin{enumerate}
    \item[(i)] If $S=P\mtt{~or~}Q$, 
        \begin{enumerate}
        \item[(a)] Slp$(c,S)$=Slp$(c,P)\vee$Slp$(c,Q)$
        \item[(b)] Pre$(S,c)$=Pre$(P,c)\vee$Pre$(Q,c)$
        \item[(c)] Success$(S)=$Success$(P)\vee$Success$(Q)$
        \item[(d)] Fail$(S)=$Fail$(P)\vee$Success$(Q)$
        \end{enumerate}
    \item[(ii)] If $S=P;Q$,
        \begin{enumerate}
        \item[(a)] Slp$(c,S)$=Slp(Slp$(c,P),Q$)
        \item[(b)] Pre$(S,c)$=Pre($P$,Pre$(Q,c)$)
        \item[(c)] Success$(S)=$Pre$(P,\text{Success}(Q))$
        \item[(d)] Fail$(S)=$Fail$(P)\vee$Pre($P$,Fail$(Q)$)
        \end{enumerate}
    \item[(iii)] If $S=\mtt{if\,}C\mtt{\,then\,}P\mtt{\,else\,}Q$,
        \begin{enumerate}
        \item[(a)] Slp$(c,S)$=Slp$(c\wedge\text{Success}(C),P)\,\vee\,$Slp$(c\wedge\text{Fail}(C),Q)$
        \item[(b)] Pre$(S,c)$=(Success$(C)\wedge$Pre($P,c))\,\vee\,$(Fail$(C)\wedge$Pre($Q,c))$
        \item[(c)] Success$(S)=($Success$(C)\wedge$Success($P)\,\vee\,$(Fail$(C)\wedge$Success($Q))$
        \item[(d)] Fail$(S)$=(Success$(C)\wedge$Fail$(P))\,\vee\,$(Fail$(C)\wedge$Fail$(Q))$
        \end{enumerate}
    \item[(iv)] If $S=\mtt{try\,}C\mtt{\,then\,}P\mtt{\,else\,}Q$,
        \begin{enumerate}
        \item[(a)] Slp$(c,S)$=Slp$(c\wedge\text{Success}(C),C;P)\,\vee\,$Slp$(c\wedge\text{Fail}(C),Q)$
        \item[(b)] Pre$(S,c)=$Pre($C,\text{Pre}(P,c))\,\vee\,$(Fail$(C)\wedge$Pre($Q,c))$
        \item[(c)] Success$(S)=$Pre(C,Success($P))\,\vee\,$(Fail$(C)\wedge$Success($Q))$
        \item[(d)] Fail$(S)=$Pre(Fail$(P),C))\,\vee\,$(Fail$(C)\wedge$Fail$(Q))$
    \end{enumerate}
    \end{enumerate}
\end{enumerate}
}

For a precondition $c$ and a loop-free program $S$, Slp$(c,S)$ is basically constructed based on Proposition $\ref{prop:slpP}$. For Pre$(S,c)$, since we want to know the property of the initial graph based on $c$ that is satisfied by the final graph and $S$, it works similar with constructing a weakest liberal precondition from a given postcondition and a program. Here we use \cite{Pennemann09} as a reference. However, in the reference the conditional part of $\mtt{if-then-else}$ command contains an assertion instead of a graph program such that if $C$ is an assertion, following their setting we will have Pre$(C,c)=C\implies$Pre$(P,c)\wedge\neg C\implies$Pre$(Q,c)$. The difference between assertions and graph programs as condition of a conditional program is, the satisfaction of the assertion on the initial graph implies that $Q$ can not be executed, while in our case, $G\Sat$Success$(C)$ does not always imply that $Q$ can not be executed. Hence, we change the equation to what we have in the definition above.

Success$(S)$ should express the existence of a proper graph as a final result, which means it should express the property of the initial graph based on $S$ and the final graph satisfying $\mrm{true}$. This is exactly what Pre$(\mrm{true},S)$ should express. Finally, Fail$(S)$ should express the property of the initial graph where failure is a result of the execution of $S$. Since we can yield failure anywhere is the subprogram of $S$, we need to disjunct all possibilities.

\Theo{Slp, Pre, Success, and Fail}{theo:FOL}{
For all condition $c$ and loop-free program $S$, the following holds:
 \vspace{-\topsep}\begin{enumerate}
    \item[(a)] Slp$(c,S)$ is a strongest liberal postcondition w.r.t. $c$ and $S$
    \item[(b)] For all host graph $G$, $G\Sat$Pre$(S,c)$ if and only if there exists host graph $H$ such that $H\in\Sem{S}G$ and $H\Sat c$
    \item[(c)] $G\Sat$Success$(S)$ if and only if $G\Sat$SUCCESS$(S)$
    \item[(d)] $G\Sat$Fail$(S)$ if and only if $G\Sat$FAIL$(S)$
\end{enumerate}
}

\begin{proof}
Here, we prove the theorem by induction on loop-free graph programs.\\
Base case.
 \vspace{-\topsep}\begin{enumerate}
    \item For $\mathcal{R}=\{\}$,\begin{enumerate}
        \item It is obvious that for all host graph $G$, $G\nRightarrow$ such that every condition is a liberal postcondition w.r.t. $c$ and $\mathcal{R}$, and $\mrm{false}$ is the strongest among all.
        \item Statement (b) is valid because nothing satisfies $\mrm{false}$.
        \item Both $G\Sat$Success$(\mathcal{R})$ and $G\Sat$SUCCESS$(\mathcal{R})$ always false such that $G\Sat$Success$(\mathcal{R})$ iff $G\Sat$SUCCESS$(\mathcal{R})$ holds.
        \item Similarly, this point holds because both $G\Sat$Fail$(\mathcal{R})$ and $G\Sat$FAIL$(\mathcal{R})$ always true.
    \end{enumerate}
    \item If $S=\mathcal{R}=\{r_1,\ldots,r_n\}$ where $n>0$,\\
        \begin{tabular}{llcp{10cm}}
            (a) & $H\Sat\SLP(c,\mathcal{R})$ & $\myeq{\Leftrightarrow}{P\ref{prop:slpP}}$ & $H\Sat\text{SLP}(c,r_1)\vee\ldots\vee\text{SLP}(c,r_n)$\\
             && $\myeq{\Leftrightarrow}{T\ref{theo:slp}}$ & $H\Sat\text{Post}(c,r_1^\vee)\vee\ldots\vee\text{Post}(c,r_n^\vee)$
        \end{tabular}
        \begin{tabular}{llcp{10cm}}
             (b) & \multicolumn{3}{l}{$\exists H. H\in\Sem{\mathcal{R}}G\wedge H\Sat c$}\\
             &~~~~~& $\Leftrightarrow$ & $\exists H. (G\Rightarrow_{r_1}H\vee\ldots\vee G\Rightarrow_{r_n}H)\wedge H\Sat c$ \\
             && $\Leftrightarrow$ & $(\exists H. G\Rightarrow_{r_1}H\wedge H\Sat c)\vee\ldots\vee (\exists H.G\Rightarrow_{r_n}H\wedge H\Sat c$)\\
             && $\myeq{\Leftrightarrow}{D\ref{def:generalisedrPO}}$ & $(\exists H. G\Rightarrow_{r_1^\vee}H\wedge H\Sat c)\vee\ldots\vee (\exists H.G\Rightarrow_{r_n^\vee}H\wedge H\Sat c$)\\
             && $\myeq{\Leftrightarrow}{L\ref{lemma:inverse}}$ & $(\exists H. H\Rightarrow_{(r_1^\vee)^{-1}}H\wedge H\Sat c)\vee\ldots\vee (\exists H.H\Rightarrow_{(r_n^\vee)^{-1}}G\wedge H\Sat c$)\\
             && $\myeq{\Leftrightarrow}{T\ref{theo:slp}}$ & $G\Sat\text{Post}(c,(r_1^\vee)^{-1})\vee\ldots\vee \text{Post}(c,(r_n^\vee)^{-1})$
        \end{tabular}
        \begin{tabular}{llcp{10cm}}
             (c) & $H\Sat\SUCCESS(\mathcal{R})$&$\Leftrightarrow$&$\exists H. H\in\Sem{\mathcal{R}}G$\\
             && $\Leftrightarrow$ & $\exists H. G\Rightarrow_{r_1}H\vee\ldots\vee G\Rightarrow_{r_n}H$ \\
             && $\Leftrightarrow$ & $(\exists H. G\Rightarrow_{r_1}H)\vee\ldots\vee (\exists H.G\Rightarrow_{r_n}H$)\\
             && $\myeq{\Leftrightarrow}{D\ref{def:assSE}}$ & $G\Sat\SUCCESS(r_1)\vee\ldots\vee \SUCCESS(r_n)$\\
             && $\myeq{\Leftrightarrow}{L\ref{lemma:appr}}$& $G\Sat\text{App}(r_1)\vee\ldots\vee \text{App}(r_n)$\\
        \end{tabular}
        \begin{tabular}{llcp{10cm}}
             (d) & $G\Sat\FAIL(\mathcal{R})$&$\Leftrightarrow$&$\text{fail}\in\Sem{\mathcal{R}}G$\\
             && $\Leftrightarrow$ & $(\neg\exists H. G\Rightarrow_{r_1}H)\wedge\ldots\wedge\ (\neg\exists H.G\Rightarrow_{r_n}H$)\\
             && $\myeq{\Leftrightarrow}{D\ref{def:assSE}}$ & $G\Sat\neg(\SUCCESS(r_1)\vee\ldots\wedge \SUCCESS(r_n))$\\
             && $\myeq{\Leftrightarrow}{L\ref{lemma:appr}}$& $G\Sat\neg(\text{App}(r_1)\vee\ldots\vee \text{App}(r_n))$
        \end{tabular}
\end{enumerate}
Inductive case. Assume (a), (b), (c), and (d) hold for loop-free programs $C, P$, and $Q$.
 \vspace{-\topsep}\begin{enumerate}\small
    \item If $S=P\mtt{~or~}Q,$\\
    \begin{tabular}{llcp{11cm}}
            (a) & $H\Sat\SLP(c,S)$ & $\myeq{\Leftrightarrow}{P\ref{prop:slpP}}$ & $G\Sat\SLP(c,P)\vee\SLP(c,Q)$\\
             && $\myeq{\Leftrightarrow}{Ind.}$ & $G\Sat\text{Slp}(c,P)\vee\text{Slp}(c,Q)$
        \end{tabular}
        \begin{tabular}{llcp{11cm}}
             (b) & {$\exists H. H\in\Sem{S}G\wedge H\Sat c$} & $\Leftrightarrow$ & $\exists H. (H\in\Sem{P}G \vee H\in\Sem{Q}G)\wedge H\Sat c$ \\
             && $\Leftrightarrow$ & $(\exists H. H\in\Sem{P}G\wedge H\Sat c)\vee(\exists H.H\in\Sem{Q}G\wedge H\Sat c$)\\
             && $\myeq{\Leftrightarrow}{Ind.}$ & $G\Sat\text{Pre}(P,c)\vee\text{Pre}(Q,c)$
        \end{tabular}
        \begin{tabular}{llcp{10cm}}
             (c) & $G\Sat\SUCCESS(S)$
             & $\Leftrightarrow$ & $\exists H. H\in\Sem{P\mtt{~or~}Q}G$\\
             && $\Leftrightarrow$ & $\exists H. H\in\Sem{P}G \vee H\in\Sem{Q}G$ \\
             && $\myeq{\Leftrightarrow}{D\ref{def:assSE}}$ & $G\Sat\SUCCESS(P)\vee\SUCCESS(Q)$\\
             && $\myeq{\Leftrightarrow}{Ind.}$ & $G\Sat\text{Success}(P)\vee\text{Success}(Q)$
        \end{tabular}
        \begin{tabular}{llcp{10cm}}
             (d) & $G\Sat\FAIL(S)$
             & $\Leftrightarrow$ & $\text{fail}\in\Sem{P\mtt{~or~}Q}G$\\
             && $\Leftrightarrow$ & $\text{fail}\in\Sem{P}G \vee \text{fail}\in\Sem{Q}G$ \\
             && $\myeq{\Leftrightarrow}{D\ref{def:assFE}}$ & $G\Sat\FAIL(P)\vee\FAIL(Q)$\\
             && $\myeq{\Leftrightarrow}{Ind.}$ & $G\Sat\text{Fail}(P)\vee\text{Fail}(Q)$
        \end{tabular}
    \item If $S=P;Q$,\\
        \begin{tabular}{llcp{10cm}}
            (a) & $H\Sat\SLP(c,S)$ & $\myeq{\Leftrightarrow}{P\ref{prop:slpP}}$ & $H\Sat\SLP(\SLP(c,P),Q)$\\
             && $\myeq{\Leftrightarrow}{Ind.}$ & $H\Sat\text{Slp}(\text{Slp}(c,P),Q)$
        \end{tabular}
        \begin{tabular}{llcp{10cm}}
             (b) & {$\exists H. H\in\Sem{S}G\wedge H\Sat c$}& $\Leftrightarrow$ & $\exists H,G'.~G'\in\Sem{P}G \wedge H\in\Sem{Q}G' \wedge H\Sat c$ \\
             && $\myeq{\Leftrightarrow}{Ind.}$ & $\exists G'.~ G'\Sat\text{Pre}(Q,c)\wedge G'\in\Sem{P}G$\\
             && $\myeq{\Leftrightarrow}{Ind.}$ & $G\Sat\text{Pre}(P,\text{Pre}(Q,c))$
        \end{tabular}
        \begin{tabular}{llcp{10cm}}
             (c) & $G\Sat\SUCCESS(S)$ & $\Leftrightarrow$ & $\exists H. H\in\Sem{P;Q}G$\\
             && $\Leftrightarrow$ & $\exists H,G'.~G'\in\Sem{P}G \wedge H\in\Sem{Q}G'$ \\
             && $\myeq{\Leftrightarrow}{D\ref{def:assSE}}$ & $\exists G'.~ G'\Sat\text{SUCCESS}(Q)\wedge G'\in\Sem{P}G$\\
             && $\myeq{\Leftrightarrow}{Ind.}$ & $\exists G'.~ G'\Sat\text{Success}(Q)\wedge G'\in\Sem{P}G$\\
             && $\myeq{\Leftrightarrow}{Ind.}$ & $G\Sat\text{Pre}(P,\text{Success}(Q))$
        \end{tabular}
        \begin{tabular}{llcp{10cm}}
             (d) & $G\Sat\FAIL(S)$ & $\Leftrightarrow$ & $\text{fail}\in\Sem{P;Q}G$\\
             && $\Leftrightarrow$ & $\text{fail}\in\Sem{P}G \vee \exists H.~H\in\Sem{P}G \wedge \text{fail}\in\Sem{Q}H$ \\
             && $\myeq{\Leftrightarrow}{D\ref{def:assFE}}$ & $G\Sat\FAIL(P) \vee \exists H.~H\in\Sem{P}G \wedge H\Sat\FAIL(Q)$ \\
             && $\myeq{\Leftrightarrow}{Ind.}$ & $G\Sat\text{Fail}(P) \vee \text{Pre}(P,\text{Fail}(Q))$ 
        \end{tabular}
    \item If $S=\mtt{if\,}C\mtt{\,then\,}P\mtt{\,else\,}Q$,\\
    \begin{tabular}{llcp{10.7cm}}
            (a) & \multicolumn{3}{l}{$H\Sat\SLP(c,S)$}\\
            && $\myeq{\Leftrightarrow}{P\ref{prop:slpP}}$ &  $G\Sat\SLP(c\wedge\SUCCESS(C),P)\vee\SLP(c\wedge\FAIL(C),Q)$\\
             && $\myeq{\Leftrightarrow}{Ind.}$ & $G\Sat\text{Slp}(c\wedge\text{Success}(C),P)\vee\text{SLP}(c\wedge\text{Fail}(C),Q)$
        \end{tabular}
        \begin{tabular}{llcp{10.7cm}}
             (b) & \multicolumn{3}{l}{$\exists H. H\in\Sem{S}G\wedge H\Sat c$}\\
             && $\Leftrightarrow$ & $\exists H.~((G\Sat\SUCCESS(C)\wedge H\in\Sem{P}G) \vee (G\Sat\FAIL(C)\wedge H\in\Sem{Q}G))\wedge H\Sat c$ \\
             && $\Leftrightarrow$ &$(\exists H.~G\Sat\SUCCESS(C)\wedge H\in\Sem{P}G\wedge H\Sat c)$\\
             &&& $\vee (\exists H.~G\Sat\FAIL(C)\wedge H\in\Sem{Q}G))\wedge H\Sat c)$\\
             && $\myeq{\Leftrightarrow}{Ind.}$ & $G\Sat(\text{Success}(C)\wedge\text{Pre}(P,c))\vee(\text{Fail}(C)\wedge\text{Pre}(Q,c))$
        \end{tabular}
        \begin{tabular}{llcp{10.7cm}}
             (c) & \multicolumn{3}{l}{$G\Sat\SUCCESS(S)$}\\
             && $\Leftrightarrow$ & 
             $\exists H. H\in\Sem{S}G$\\
             && $\Leftrightarrow$ & $\exists H.~(G\Sat\SUCCESS(C)\wedge H\in\Sem{P}G) \vee (G\Sat\FAIL(C)\wedge H\in\Sem{Q}G)$ \\
             && $\Leftrightarrow$ &$(\exists H.~G\Sat\SUCCESS(C)\wedge H\in\Sem{P}G)\vee (\exists H.~G\Sat\FAIL(C)\wedge H\in\Sem{Q}G)))$\\
             && $\myeq{\Leftrightarrow}{Ind.}$ & $G\Sat(\text{Success}(C)\wedge\text{Success}(P))\vee(\text{Fail}(C)\wedge\text{Success}(Q))$
        \end{tabular}
        \begin{tabular}{llcp{10.7cm}}
             (d) & \multicolumn{3}{l}{$G\Sat\FAIL(S)$}\\
             && $\Leftrightarrow$ & 
             $\text{fail}\in\Sem{S}G$\\
             && $\Leftrightarrow$ & $(G\Sat\SUCCESS(C)\wedge \text{fail}\in\Sem{P}G) \vee (G\Sat\FAIL(C)\wedge \text{fail}\in\Sem{Q}G)$ \\
             && $\myeq{\Leftrightarrow}{Ind.}$ & $G\Sat(\text{Success}(C)\wedge\text{Fail}(P))\vee(\text{Fail}(C)\wedge\text{Fail}(Q))$
        \end{tabular}
    \item If $S=\mtt{try\,}C\mtt{\,then\,}P\mtt{\,else\,}Q$,\\
    \begin{tabular}{llcp{10.7cm}}
            (a) & \multicolumn{3}{l}{$H\Sat\SLP(c,S)$}\\
            && $\myeq{\Leftrightarrow}{P\ref{prop:slpP}}$ &  $G\Sat\SLP(\SLP(c,C),P)\vee\SLP(c\wedge\FAIL(C),Q)$\\
             && $\myeq{\Leftrightarrow}{Ind.}$ & $G\Sat\text{Slp}(\text{Slp}(c,C),P)\vee\text{Slp}(c\wedge\text{Fail}(C),Q)$
        \end{tabular}
        \begin{tabular}{llcp{10.7cm}}
             (b) & \multicolumn{3}{l}{$\exists H. H\in\Sem{S}G\wedge H\Sat c$}\\
             && $\Leftrightarrow$ & $(\exists H,G'.~H\Sat c\wedge G'\in\Sem{C}G\wedge H\in\Sem{P}G') \vee (\exists H.~ H\Sat c\wedge\FAIL(C)\wedge H\in\Sem{Q}G)$\\
             && $\myeq{\Leftrightarrow}{Ind.}$ &$(\exists G'.~G'\Sat\text{Pre}(P,c)\wedge G'\in\Sem{C}G) \vee (\exists H.~G\Sat\text{Fail}(C)\wedge H\in\Sem{Q}G))\wedge H\Sat c)$\\
             && $\myeq{\Leftrightarrow}{Ind.}$ & $G\Sat\text{Pre}(C,\text{Pre}(P,c))\vee(\text{Fail}(C)\wedge\text{Pre}(Q,c))$
        \end{tabular}
        \begin{tabular}{llcp{10.7cm}}
             (c) & \multicolumn{3}{l}{$G\Sat\SUCCESS(S)$}\\
             && $\Leftrightarrow$ & 
             $\exists H. H\in\Sem{S}G$\\
             && $\Leftrightarrow$ & $(\exists H,G'.~G'\in\Sem{C}G\wedge H\in\Sem{P}G') \vee (\exists H.~ \FAIL(C)\wedge H\in\Sem{Q}G)$\\
             && $\myeq{\Leftrightarrow}{D\ref{def:assSE}}$ &$(\exists G'.~G'\in\Sem{C}G\wedge G'\Sat\SUCCESS(P)) \vee (\exists H.~ \FAIL(C)\wedge H\in\Sem{Q}G)$\\
             && $\myeq{\Leftrightarrow}{Ind.}$ & $G\Sat\text{Pre}(C,\text{Success}(P))\vee(\text{Fail}(C)\wedge\text{Success}(Q))$
        \end{tabular}
        \begin{tabular}{llcp{10.7cm}}
             (d) & \multicolumn{3}{l}{$G\Sat\FAIL(S)$}\\
             && $\Leftrightarrow$ & 
             $\text{fail}\in\Sem{S}G$\\
             && $\Leftrightarrow$ & $(\exists G'.~G'\in\Sem{C}G\wedge\text{fail}\in\Sem{P}G)\vee (G\Sat\FAIL(C)\wedge \text{fail}\in\Sem{Q}G)$ \\
             && $\myeq{\Leftrightarrow}{D\ref{def:assSE},\ref{def:assFE}}$ & $G\Sat(\SUCCESS(C)\wedge\FAIL(P))\vee(\FAIL(C)\wedge \FAIL(Q))$ \\
             && $\myeq{\Leftrightarrow}{Ind.}$ & $G\Sat(\text{Success}(C)\wedge\text{Fail}(P))\vee(\text{Fail}(C)\wedge\text{Fail}(Q))$\qed
        \end{tabular}
\end{enumerate}\end{proof}

For any loop-free program $P$, we now can find the first order formula of SLP, SUCCESS, and FAIL. However, constructing SLP and SUCCESS of a loop is a challenging task because a loop may diverge. However, constructing a FO formula for FAIL of a graph program with loops is not as challenging if we only consider some forms of graph programs. In \cite{Bak15a}, Bak introduced a class of commands that cannot fail. Hence, we can always conclude that Fail$(P)=\mrm{false}$ if $P$ is a command that cannot fail. Here, we introduce the class of non-failing commands.

\Def{Non-failing commands}{def:nofail}{
The class of \emph{non-failing commands} is inductively defined as follows:

\noindent Base case:
 \vspace{-\topsep}\begin{enumerate}
    \item $\mtt{break}$ and $\mtt{skip}$ are non-failing commands
    \item Every call of a rule schema with the empty graph as its left-hand graph is a non-failing command
    \item Every rule set call $\{r_1,\ldots,r_n\}$ for $n\geq 1$ where each $r_i$ has the empty graph as its left-hand graph, is a non-failing command
    \item Every command P! is a non-failing command
\end{enumerate}
\noindent Inductive case:
 \vspace{-\topsep}\begin{enumerate}
    \item $P;Q$ is a non-failing command if $P$ and $Q$ are non-failing commands
    \item $\mtt{if\,}C\mtt{\,then\,}P\mtt{\,else\,}Q$ is a non-failing command if $P$ and $Q$ are non-failing commands
    \item $\mtt{try\,}C\mtt{\,then\,}P\mtt{\,else\,}Q$ is a non-failing command if $P$ and $Q$ are non-failing commands
\end{enumerate}
}

Recall the inference rule [alap] of \textsf{SEM}. To obtain a triple $\{c\}~P!~\{d\}$ for some precondition $c$, postcondition $d$, and a graph program $P!$, we need to find Fail$(P)$. We now can construct Fail($P$) if $P$ is a loo-free program as in Definition \ref{def:slpnoloop}, or if $P$ is a non-failing command. 

Now, let us consider $P$ in the form $C;Q$. For any host graph $G$, $\text{fail}\in\Sem{C;Q}G$ iff $\text{fail}\in\Sem{C}G$ or $H\in\Sem{C}G\wedge\text{fail}\in\Sem{Q}H$ for some host graph $H$, which means $G\Sat \FAIL(C)\vee(\SUCCESS(C)\wedge\FAIL(Q))$. We can construct both Fail$(C)$ and Success$(C)$ if $C$ is a loop-free program, and we can construct Fail$(Q)$ if $Q$ is a loop-free program or a non-failing command. Here, we introduce the class of \textit{iteration commands} which is the class of commands where we can obtain Fail of the commands.

\Def{Iteration commands}{def:iteration}{
The class of iteration commands is inductively defined as follows:
 \vspace{-\topsep}\begin{enumerate}
    \item Every loop-free program is an iteration command
    \item Every non-failing command is an iteration command
    \item A command of the form $C;P$ is an iteration command if $C$ is a loop-free program and $P$ is an iteration command 
\end{enumerate}
}

If $S$ is a loop-free program, we can construct Fail$(S)$ as defined in Definition \ref{def:slpnoloop}. Meanwhile, if $S$ is a non-failing command, there is no graph $G$ such that fail$\in\Sem{S}G$ such that we can conclude that Fail$(S)\equiv\mrm{false}$. Finally, if $S$ is in the form of $C;P$ for a loop-free program $C$ and a non-failing program $P$, fail$\in\Sem{S}G$ for a graph $G$ only if fail$\in\Sem{C}G$ (because $P$ cannot fail), so that Fail$(S)\equiv$fail$(C)$.

\begin{definition}[Fail of iteration commands]\label{def:Failit}\normalfont
Let Fail$_{\text{lf}}(C)$ denotes the formula Fail$(C)$ for a loop-free program $C$ as defined in Definition \ref{def:slpnoloop}. For any iteration command $S$,\\
\footnotesize{Fail$(S)=\begin{cases}
\mrm{false} & \text{if $S$ is a non-failing command}\\
\text{Fail}_{\text{lf}}(S) & \text{if $S$ is a loop-free program}\\
\text{Fail}(C) & {\text{if $S=C;P$ for a loop-free program $C$, a non-failing program $P$}}\\
\end{cases}$\qed}
\end{definition}

\begin{theorem}\normalfont\label{theo:nofail}
Given an iteration command $S$. Then,
\[\text{$G\Sat\text{Fail}(S)$ if and only if $G\Sat\text{FAIL}(S)$}.\]
\end{theorem}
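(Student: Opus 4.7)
The plan is to proceed by case analysis on the three patterns that appear in Definition \ref{def:Failit}, reducing each one to either Theorem \ref{theo:FOL}(d) or to a self-contained lemma stating that non-failing commands cannot produce \texttt{fail}.

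First, if $S$ is a loop-free program, then Fail$(S)$ coincides with Fail$_{\text{lf}}(S)$, and the equivalence $G\Sat\text{Fail}(S)\Leftrightarrow G\Sat\text{FAIL}(S)$ is exactly Theorem \ref{theo:FOL}(d). Second, if $S=C;P$ with $C$ loop-free and $P$ a non-failing command, I unfold the semantics of sequential composition via the inference rules \textsf{[Seq}$_1$\textsf{]}--\textsf{[Seq}$_3$\textsf{]}: $\text{fail}\in\Sem{C;P}G$ holds iff either $\text{fail}\in\Sem{C}G$, or there exists $H$ with $H\in\Sem{C}G$ and $\text{fail}\in\Sem{P}H$. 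By the key lemma below the second disjunct is vacuous, so $G\Sat\text{FAIL}(C;P)$ iff $G\Sat\text{FAIL}(C)$, and Theorem \ref{theo:FOL}(d) again closes the case since $\text{Fail}(S)=\text{Fail}(C)$.

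The main obstacle, and the content that requires a separate auxiliary lemma, is the case where $S$ is a non-failing command: here $\text{Fail}(S)=\mrm{false}$, so the statement reduces to showing $\text{fail}\notin\Sem{S}G$ for every host graph $G$. I would prove this by structural induction on the formation rules of Definition \ref{def:nofail}. For the base cases: $\mtt{skip}$ and $\mtt{break}$ reduce to host graphs by \textsf{[Skip}$_1$\textsf{]} and \textsf{[Break]} and never yield $\mtt{fail}$; a rule schema with empty left-hand graph admits the trivial morphism from the empty graph into every $G$ and thus applies by \textsf{[Call}$_1$\textsf{]}, never firing \textsf{[Call}$_2$\textsf{]}; a rule set containing such a rule inherits the same property; and a loop $P!$ can only terminate via \textsf{[Loop}$_2$\textsf{]} (producing the current graph when $P$ fails) or \textsf{[Loop}$_3$\textsf{]} (producing the graph reached at a $\mtt{break}$), so $\text{fail}\notin\Sem{P!}G$ regardless of whether $P$ itself can fail. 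The inductive cases for $P;Q$, $\mtt{if}~C~\mtt{then}~P~\mtt{else}~Q$, and $\mtt{try}~C~\mtt{then}~P~\mtt{else}~Q$ then follow by inspecting the inference rules: in each one, a $\mtt{fail}$ outcome can only propagate from a subcomputation of $P$ or $Q$ (the condition $C$ is evaluated separately, and its failure merely selects a branch), and by the induction hypothesis neither subprogram can fail.

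Combining the three cases, the theorem follows. I expect the loop case $P!$ in the auxiliary lemma to be the one requiring the most care: one must observe that even though $\langle P,G\rangle\rightarrow^+\mtt{fail}$ may well hold, the loop rule \textsf{[Loop}$_2$\textsf{]} converts this into termination with $H=G$, so failure is absorbed rather than propagated; divergence or getting stuck yields $\bot$, not $\mtt{fail}$, and so is irrelevant to FAIL$(S)$. The only subtle point in the composition case is that Definition \ref{def:Failit} pattern-matches on the outermost shape of $S$, so the argument for $C;P$ uses only that $P$ is non-failing (via the lemma) and that $C$ is loop-free (via Theorem \ref{theo:FOL}(d)); no recursive appeal to Theorem \ref{theo:nofail} on a smaller iteration command is needed.
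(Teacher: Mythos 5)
Your proposal is correct and follows essentially the same route as the paper: the same three-way case split on the shape of the iteration command, with the loop-free case discharged by Theorem~\ref{theo:FOL}(d), the $C;P$ case by unfolding the semantics of sequential composition and observing the second disjunct is vacuous, and the non-failing case by the fact that such commands never produce $\mtt{fail}$. The only difference is that you spell out the structural induction showing non-failing commands cannot fail (including the observation that \textsf{[Loop$_2$]} absorbs failure of the loop body), which the paper simply asserts as obvious; that added detail is accurate and harmless.
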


\begin{proof}
Here, we prove the theorem case by case.
 \vspace{-\topsep}\begin{enumerate}
    \item It is obvious that if $S$ is a non-failing command, then for any host graph $G$, $\text{fail}\notin\Sem{S}G$. Hence, there is no graph satisfying FAIL$(S)$ such that $G\Sat\mrm{false}$ iff $G\Sat\FAIL(S)$ holds.
    \item If $S$ is a loop-free program, $G\Sat\text{Fail}(S)$ iff $G\Sat\FAIL(S)$ holds based on Theorem \ref{theo:FOL}.
    \item If $S$ is in the form $C;P$ for a loop-free program $C$ and non-failing command $P$, then\\
    \begin{tabular}{lcl}
    $G\Sat\FAIL(C;P)$ & iff & fail$\in\Sem{C;P}G$ \\
     & iff & fail$\in\Sem{C}G \vee \exists G'. G'\in\Sem{C}G\wedge$ fail$\in\Sem{P}G'$\\
     & iff & fail$\in\Sem{C}G$\\
     & iff & $G\Sat\FAIL(C)$\\
     & iff & $G\Sat\text{Fail}(C)$
\end{tabular}
\end{enumerate}
\end{proof}

Now let us consider the proof calculus \textsf{SEM}. There is the assertion $\SUCCESS(C)$ and $\FAIL(C)$ where $C$ is the condition of a branching statement, and FAIL$(S)$ for a loop body $S$. Since we are only able to construct Success$(C)$ for a loop-free program $C$ and FAIL$(S)$ for an iteration command $S$, we do not define the syntactic version for arbitrary graph programs. Hence, we require a loop-free program as the condition of every branching statement and an iteration command as every loop body. For the axiom [ruleapp]$_\text{wlp}$, we follow the construction in \cite{HP09} where a weakest liberal precondition can be constructed using the construction of Slp.

\Def{Control programs}{def:controlprogram}{
A \emph{control command} is a command where the condition of every branching command is loop-free and every loop body is an iteration command. Similarly, a graph program is a \emph{control program} if all its command are control commands.}

\begin{lemma}\label{lemma:wlpr}\normalfont
Given a conditional rule schema $r$ and a closed first-order formula $d$. Let Wlp$(r,d)=\neg$Slp$(\neg d,r^{-1})$. Then for all host graphs $G$,\\
\[G\Sat \text{Wlp}(r,d) \text{ if and only if } G\Sat\text{WLP}(r,d).\]
\end{lemma}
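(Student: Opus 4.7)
The plan is to reduce the claim to the characterisation of weakest liberal preconditions given in Lemma \ref{lemma:wlpP} and then dualise the strongest liberal postcondition result via the invertibility of direct derivations (Lemma \ref{lemma:inverse}) together with the equivalence between conditional and generalised application (Lemma \ref{lemma:rvee}).

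First I would recall that, by Lemma \ref{lemma:wlpP}, to show that $\text{Wlp}(r,d)$ is the weakest liberal precondition with respect to $r$ and $d$, it is enough to prove that for every host graph $G$,
\[ G \Sat \text{Wlp}(r,d) \;\text{iff}\; \text{for all host graphs $H$ with $H\in\Sem{r}G$, $H \Sat d$.} \]
Since $\text{Wlp}(r,d)=\neg\text{Slp}(\neg d,r^{-1})$ and both sides are classical, it suffices to prove the contrapositive equivalence
\[ G \Sat \text{Slp}(\neg d,r^{-1}) \;\text{iff}\; \text{there exists $H$ with $H\in\Sem{r}G$ and $H\Sat\neg d$}. \]

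Next I would unfold $\text{Slp}(\neg d,r^{-1})=\text{Post}(\neg d,(r^\vee)^{-1})$. By inspection of the proof of Theorem \ref{theo:slp}, Post$(c,w)$ is a strongest liberal postcondition with respect to $c$ and any generalised rule schema $w$ (the only use of the particular form $w=r^\vee$ in that theorem is to pass between $\Rightarrow_r$ and $\Rightarrow_{r^\vee}$ via Lemma \ref{lemma:rvee}). Applying this observation to $w=(r^\vee)^{-1}$ and using Lemma \ref{lemma:slp} gives
\[ G \Sat \text{Slp}(\neg d,r^{-1}) \;\text{iff}\; \exists H.\; H\Sat\neg d \text{ and } H \Rightarrow_{(r^\vee)^{-1}} G. \]

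Finally I would chain together the two invertibility/equivalence lemmas. Lemma \ref{lemma:inverse} yields $H \Rightarrow_{(r^\vee)^{-1}} G$ iff $G \Rightarrow_{r^\vee} H$, and Lemma \ref{lemma:rvee} yields $G \Rightarrow_{r^\vee} H$ iff $G \Rightarrow_r H$, i.e.\ $H\in\Sem{r}G$. Composing these equivalences closes the circle and proves the lemma.

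The only step that requires some care is justifying that Post$(c,w)$ is a strongest liberal postcondition for an \emph{arbitrary} generalised rule schema $w$, not just for $w=r^\vee$; I would argue this by revisiting the proof of Theorem \ref{theo:slp} and noting that the transformations Lift, Shift and Post, together with Propositions \ref{prop:lift}, \ref{prop:shift}, \ref{prop:post} and Lemmas \ref{lemma:adj}, \ref{lemma:adj2}, are established directly for generalised rule schemata. Everything else is a routine chain of biconditionals.
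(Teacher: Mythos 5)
Your proposal is correct and follows essentially the same route as the paper: both unfold $\text{Wlp}(r,d)=\neg\text{Post}(\neg d,(r^\vee)^{-1})$, apply the SLP characterisation (Lemma \ref{lemma:slp}) to the inverse generalised rule, flip the derivation with Lemma \ref{lemma:inverse}, pass between $\Rightarrow_{r^\vee}$ and $\Rightarrow_r$, and conclude via the WLP characterisation. Your added remark that Theorem \ref{theo:slp} must be seen to hold for the arbitrary generalised rule schema $(r^\vee)^{-1}$ (not only for schemata of the form $s^\vee$) is a legitimate point of care that the paper's own chain of biconditionals silently elides.
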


\begin{proof}$~$\\
\begin{tabular}{lclr}
     $G\Sat\text{Wlp}(r,d)$ & iff & $G\Sat\neg\text{Post}(\neg d,(r^\vee)^{-1})$ &\\
     & iff & $\neg(\exists H,g,g^*. H\Rightarrow_{(r^\vee)^{-1},g^*,g}G \wedge H\Sat \neg d)$ & (Lemma \ref{lemma:slp})\\
     & iff & $\neg(\exists H,g,g^*. G\Rightarrow_{(r^\vee),g,g^*}H \wedge H\Sat \neg d)$ & (Lemma \ref{lemma:inverse})\\
     & iff & $\neg(\exists H. G\Rightarrow_{r}H \wedge H\Sat \neg d)$ & (Def. \ref{def:generalisedrPO})\\
     & iff & $\forall H. G\Rightarrow_{r}H \text{ implies } H\Sat d)$ & (Def. implication)\\
     & iff & $G\Sat$WLP$(r,d)$ & (Lemma \ref{lemma:slpP})\qed
\end{tabular}
\end{proof}

Now we know the FO formula for WLP$(r,c)$, SLP$(c,r)$, also SUCCESS$(P)$ and FAIL$(P)$ for some form of $P$. Finally, we define a syntactic partial correctness proof for control programs.

\begin{figure}
    \centering
    \footnotesize
\def\arraystretch{3}\tabcolsep=2pt
~[ruleapp]$_{\text{slp}}~\displaystyle\frac{}{\{c\}~r~\{\text{Slp}(c,r)\}}$\\$~$\\
~[ruleapp]$_{\text{wlp}}~\displaystyle\frac{}{\{\neg \text{Slp}(\neg d,r^{-1})\}~r~\{d\}}$\\$~$\\
~[ruleset]$~\displaystyle\frac{\{c\}~r~\{d\}\text{ for each }r\in\mathcal{R}}{\{c\}~\mathcal{R}~\{d\}}$\\$~$\\
~[comp]$\displaystyle\frac{\{c\}~P~\{e\}~~~~\{e\}~P~\{d\}}{\{c\}~P;Q~\{d\}}$
\\$~$\\
~[cons]~$\displaystyle\frac{c\text{ implies }c'~~~\{c'\}~P~\{d'\}~~~d'\text{ implies }d}{\{c\}~P~\{d\}}$\\$~$\\
{~[if]$~\displaystyle\frac{\{c\wedge\text{Success}(C)\}~P~\{d\}~~~\{c\wedge\text{Fail}(C)\}~Q~\{d\}}{\{c\}~\mtt{if~}C\mtt{~then~}P\mtt{~else~}Q~\{d\}}$}\\$~$\\
{~[try]$~\displaystyle\frac{\{c\wedge\text{Success}(C)\}~C;P~\{d\}~~~\{c\wedge\text{Fail}(C)\}~Q~\{d\}}{\{c\}~\mtt{try~}C\mtt{~then~}P\mtt{~else~}Q~\{d\}}$}
\\$~$\\
~[alap]$~\displaystyle\frac{\{c\}~S~\{c\}~~~~~~~~~\text{Break}(c,S,d)}{\{c\}~S!~\{(c\wedge\text{Fail}(S))\vee d\}}$
    \caption{Calculus \textsf{SYN} of syntactic partial correctness proof rules}
    \label{fig:synprules}
\end{figure}

\Def{Syntactic partial correctness proof rules}{def:synproofrule}{
The syntactic partial correctness proof rules, denoted by \textsf{SYN}, is defined in \figurename~\ref{fig:synprules}, where $c, d,$ and $d'$ are any conditions, $r$ is any conditional rule schema, $\mathcal{R}$ is any set of rule schemata, $C$ is any loop-free program, $P$ and $Q$ are any control commands, and $S$ is any iteration command. Outside a loop, we treat the command $\mtt{break}$ as a $\mtt{skip}$.}

In the following section, we give an example of graph program verification using the calculus \textsf{SYN} we define above.

\section{Soundness and completeness of proof calculi}
\label{sec:completeness}
In this section, we show that our proof calculi are sound, in the sense that if some triple can be proven in a calculus, then the triple must be partially correct. In addition, we also show the relative completeness of the proof calculi. 

\subsection{Soundness}

To proof the soundness, we use structural induction on proof tree as defined in Definition \ref{def:indprooftree}.

\begin{definition}[Structural induction on proof trees]\normalfont\label{def:indprooftree}
Given a property \textit{Prop}. To prove that \textit{Prop} holds for all proof trees (that are created from some proof rules) by \textit{structural induction on proof tree} is done by:
 \vspace{-\topsep}\begin{enumerate}
    \item Show that \textit{Prop} holds for each axiom in the proof rules
    \item Assuming that \textit{Prop} holds for each premise $T$ of inference rules in the proof rules, show that \textit{Prop} holds for the conclusion of each inference rules in the proof rules.\qed
\end{enumerate}
\end{definition}

When we prove that a triple $\{c\}P\{d\}$ for assertions $c, d$ and a graph program $P$ is partially correct by showing that SLP$(c,P)$ implies $d$, it is obviously sound because of the definition of a strongest liberal postcondition itself. Then if $c$ and $d$ are first-order formulas and $P$ is a loop-free program, showing that Slp$(c,P)$ implies $d$ implies that $\{c\}P\{d\}$ is partially correct from Theorem \ref{theo:FOL}. Then we also need to prove the soundness of proof calculus as summarised in \figurename~\ref{fig:semprules} and \figurename~\ref{fig:synprules}. 

\Theo{Soundness of \textsf{SEM}}{theo:semsound}{
Given a graph program $P$ and assertions $c,d$. Then,
\[\vdash_{\mathsf{SEM}}\{c\}~P~\{d\} \text{~implies~} \vDash \{c\}~P~\{d\}.\]
}

\begin{proof}
To prove the soundness, we show that the implication holds for each axiom and inference rule in the proof rule w.r.t. the semantics of graph programs by structural induction on proof trees.

 \vspace{-\topsep}\begin{enumerate}
    \item Base case : 
    \begin{enumerate}
        \item[(a)] [ruleapp]$_\text{slp}$. Suppose that $\vdash_{\mathsf{SEM}}\{c\}~r~\{d\}$ for a (conditional) rule schema $r$ where for all graphs $H$, $H\Sat d$ iff $H\Sat\SLP(c,r)$. Suppose that $G\Sat c$. From Definition \ref{def:slp}, $G\Rightarrow_rH$ implies $H\Sat d$ so that $\vDash\{c\}~P~\{d\}$. 
    \item[(b)] [ruleapp]$_\text{slp}$. Suppose that $\vdash_{\mathsf{SEM}}\{c\}~r~\{d\}$ for a (conditional) rule schema $r$ where for all graphs $G$, $G\Sat c$ iff $H\Sat\text{WLP}(r,c)$. Suppose that $G\Sat c$. From Definition \ref{def:wlpP}, $G\Rightarrow_rH$ implies $H\Sat d$ so that $\vDash\{c\}~P~\{d\}$. 
    \end{enumerate}
    \item Inductive case.\\
    Assume that \textit{Prop} holds for each premise of inference rules in Definition \ref{def:semproofrule} for a set of rule schemata $\mathcal{R}$, assertions $c,d,e,c',d',inv$, host graphs $G,G',H,H'$, and graph programs $C,P,Q$.
    \begin{enumerate}
        \item[(a)] [ruleset]. Suppose that $\vdash_{\mathsf{SEM}}\{c\}~\mathcal{R}~\{d\}$ and $G\Sat c$. Since we can have a proof tree where $\{c\}~\mathcal{R}~\{d\}$ is the root, then $\vdash_{\mathsf{SEM}}\{c\}~r~\{d\}$ for all $r\in\mathcal{R}$. From point 1, this means that $\vDash\{c\}~r~\{d\}$ for all $r\in\mathcal{R}$. From the semantics of graph programs, $H\in\Sem{R}G$ iff $H\in\Sem{r}G$ for some $r\in\mathcal{R}$. Since for any $r\in\mathcal{R}$, $H\in\Sem{r}G$ implies $H\Sat d,$ $H\in\Sem{\mathcal{R}}G$ implies $H\Sat d$ as well so that $\vDash\{c\}~\mathcal{R}~\{d\}$.
        
        \item[(b)] [comp]. Suppose that $\vdash_{\mathsf{SEM}}\{c\}~P;Q~\{d\}$ and $G\Sat c$. $\vdash_{\mathsf{SEM}}\{c\}~P;Q~\{d\}$, implies $\vdash_{\mathsf{SEM}}\{c\}~P~\{e\}$ and $\vdash_{\mathsf{SEM}}\{e\}~Q~\{d\}$. From the semantic of graph programs, $H\in\Sem{P;Q}G$ iff there exists $G'$ such that $G'\in\Sem{P}G$ and $H\in\Sem{Q}G'$. In addition to the assumption, $\vdash_{\mathsf{SEM}}\{c\}~P~\{e\}$ implies $G'\Sat e$, and $\vdash_{\mathsf{SEM}}\{e\}~Q~\{d\}$ implies $H\Sat d$ so that $\vDash \{c\}~P;Q~\{d\}$.
        
        \item[(c)] [cons]. Suppose that $\vdash_{\mathsf{SEM}}\{c\}~P~\{d\}$ and $G\Sat c$. From the inference rule, we know that $\vdash \{c'\}~P~\{d'\}$, $c$ implies $c'$ (so that $G\Sat c')$, and $d'$ implies $d$. From $\vdash \{c'\}~P~\{d'\}$, we get that for all host graphs $H$, $H\in\Sem{P}G$ implies $H\Sat d'$ so that $H\Sat d$. Hence, $\vDash \{c\}~P~\{d\}$.
    
        \item[(d)] [if]. Suppose that $\vdash_{\mathsf{SEM}}\{c\}~\mtt{if\,}C\mtt{\,then\,}P\mtt{\,else\,}Q~\{d\}$ and $G\Sat c$. From $\vdash_{\mathsf{SEM}}\{c\}~\mtt{if\,}C\mtt{\,then\,}P\mtt{\,else\,}Q~\{d\}$, we get $\vdash_{\mathsf{SEM}}\{c\wedge\SUCCESS(C)\}~P~\{d\}$ and $\vdash_{\mathsf{SEM}}\{c\wedge\FAIL(C)\}~Q~\{d\}$. From the former we know that for all host graphs $H$, if $G\Sat\SUCCESS(C)$ and $H\in\Sem{P}G$ then $H\Sat d$, while from the latter we know that for all host graphs $H$, if $G\Sat\FAIL(C)$ and $H\in\Sem{Q}G$ then $H\Sat d$. Recall that from the semantic of graph programs, $H\in\Sem{\mtt{if\,}C\mtt{\,then\,}P\mtt{\,else\,}Q}G$ iff $G\Sat\SUCCESS(C)\wedge H\in\Sem{P}G$ or $G\Sat\FAIL(C)\wedge H\in\Sem{Q}G$. Since both $G\Sat\SUCCESS(C)\wedge H\in\Sem{P}G$ and $G\Sat\FAIL(C)\wedge H\in\Sem{Q}G$ implies $H\Sat d$, $H\in\Sem{\mtt{if\,}C\mtt{\,then\,}P\mtt{\,else\,}Q}G$ implies $H\Sat d$ such that $\vDash \{c\}~\mtt{if\,}C\mtt{\,then\,}P\mtt{\,else\,}Q~\{d\}$.
    
        \item[(e)] [try]. Suppose that $\vdash_{\mathsf{SEM}}\{c\}~\mtt{try\,}C\mtt{\,then\,}P\mtt{\,else\,}Q~\{d\}$ and $G\Sat c$. $\vdash_{\mathsf{SEM}}\{c\}~\mtt{try\,}C\mtt{\,then\,}P\mtt{\,else\,}Q~\{d\}$ implies $\vdash_{\mathsf{SEM}}\{c\wedge\SUCCESS(C)\}~C;P~\{d\}$ and $\vdash_{\mathsf{SEM}}\{c\wedge\FAIL(C)\}~Q~\{d\}$. From the former we know that for all host graphs $H$, if $G\Sat\SUCCESS(C)$ and $H\in\Sem{C;P}G$ then $H\Sat d$, while from the latter we know that for all host graphs $H$, if $G\Sat\FAIL(C)$ and $H\in\Sem{Q}G$ then $H\Sat d$. Recall that from the semantic of graph programs, $H\in\Sem{\mtt{if\,}C\mtt{\,then\,}P\mtt{\,else\,}Q}G$ iff $G\Sat\SUCCESS(C)\wedge H\in\Sem{C;P}G$ or $G\Sat\FAIL(C)\wedge H\in\Sem{Q}G$. Since both $G\Sat\SUCCESS(C)\wedge H\in\Sem{C;P}G$ and $G\Sat\FAIL(C)\wedge H\in\Sem{Q}G$ implies $H\Sat d$, $H\in\Sem{\mtt{try\,}C\mtt{\,then\,}P\mtt{\,else\,}Q}G$ implies $H\Sat d$ such that $\vDash \{c\}~\mtt{try\,}C\mtt{\,then\,}P\mtt{\,else\,}Q~\{d\}$.
        
         \item[(f)] [alap]. Suppose that $\vdash_{\mathsf{SEM}}\{c\}~P!~\{d\}$ and $G\Sat c$. From $\vdash_{\mathsf{SEM}}\{c\}~P!~\{d\}$, we know that $\vdash_{\mathsf{SEM}}\{c\}~P~\{c\}$ and Break$(c,P,d)$ holds. From $\vdash_{\mathsf{SEM}}\{c\}~P~\{c\}$, we get that for all host graph $H$, $H\in\Sem{P}G$ implies $H\Sat c$, while from Definition \ref{def:Break} and the true value of Break$(c,P,d)$ we know that for all hsot graphs $H$, $G\Sat c$ and $\tuple{P,G}\rightarrow^*\tuple{\mtt{break;H}}$ implies $H\Sat d$.
         From the semantic of graph programs, $H\in\Sem{P!}G$ iff there exist derivation $\tuple{P,G}\rightarrow^*\tuple{\mtt{break;H}}$ or $\tuple{P!,G}\rightarrow^*\tuple{P!,H}$ and $\tuple{P!,H}\rightarrow^+\mtt{fail}$. The first case yields $H\Sat d$ because of Break$(c,P,d)$. Note that $\tuple{P!,G}\rightarrow^*\tuple{P!,H}$ is done by having (probably) multiple execution of $P$ on host graphs, so that from $\vdash_{\mathsf{SEM}}\{c\}~P~\{c\}$ we know that $H\Sat c$. Then since $\tuple{P!,H}\rightarrow^+\mtt{fail}$, $H\Sat\FAIL(P)$ so that $H\Sat c\wedge\FAIL(C)$. Hence, $H\in\Sem{P!}G$ implies $H\Sat d\vee(c\wedge\FAIL(P))$ so that $\vDash\{c\}~P!~\{d\}$.\qed
    \end{enumerate}
\end{enumerate}
\end{proof}

\Theo{Soundness of \textsf{SYN}}{theo:synsound}{
Let $P$ be a restricted graph program i.e. graph programs where for every subprogram in the form $\mtt{if\,}C\mtt{\,then\,}P\mtt{\,else\,}Q$, $\mtt{try\,}C\mtt{\,then\,}P\mtt{\,else\,}Q$, or $C!$, $C$ is a loop-free program. Let also $c$ and $d$ be first-order formulas. Then,
\[\vdash_{\mathsf{SYN}}\{c\}~P~\{d\} \text{~implies~} \vDash \{c\}~P~\{d\}.\]
}

\begin{proof}
The soundness of [ruleapp]$_\text{slp}$ follows from Theorem \ref{theo:slp} and Theorem \ref{theo:semsound}, while the soundness of [ruleapp]$_\text{wlp}$ follows from Theorem \ref{theo:slp} and Lemma \ref{lemma:wlpr}. The soundness of [ruleset], [comp], [cons], [if], and [try] follows from Theorem \ref{theo:semsound} and Theorem \ref{theo:FOL} about defining SUCCESS and FAIL in first-order formulas. Finally, the soundness of the inference rule [alap] follows from Theorem \ref{theo:semsound} and Theorem \ref{theo:nofail}.\qed
\end{proof}

\subsection{Relative completeness}

A proof calculus is complete when anytime we can denote a triple is valid according to partial correctness, then we can prove the correctness with the proof calculus. However, a completeness really depends on assertions we used because the ability to prove that $d$ can be implied by $c$ for some assertions $c$ and $d$ depends on language of the assertions. Hence, here we show the relative completeness instead of completeness, where we can separate the incompleteness due to the axioms and inference rules from any incompleteness in deducing valid assertions \cite{Cook78}. 

Before showing that \textsf{SEM} is relative complete, we first show that for any postcondition $d$ and graph program $P$, we can show that $\vdash_{\mathsf{SEM}}\{\text{WLP}(P,d)\}~P~\{d\}$.

\Lemma{lemma:provwlp}{
Given a graph program $S$ and a postcondition $d$. Then,\\
\[\vdash_{\mathsf{SEM}}\{\text{WLP}(S,d)\}~S~\{d\}\]
}

\begin{proof}
Here we prove the lemma by induction on graph programs.\\
\begin{longtable}{lcp{11cm}}
\multicolumn{3}{l}{Base case. If $S$ is a (conditional) rule schema $r$,}\\
    ~~ & \multicolumn{2}{p{11.7cm}}{$\vdash_{\mathsf{SEM}}\{\text{WLP}(S,d)\}~S~\{d\}$ automatically follows from the axiom [ruleapp]$_{\text{wlp}}$.}\\
    \multicolumn{3}{p{11.7cm}}{Inductive case.}\\
    &\multicolumn{2}{p{11.7cm}}{Assume that for graph programs $C, P,$ and $Q$, $\vdash_{\mathsf{SEM}}\{\text{WLP}(C,d)\}~C~\{d\}$, $\vdash_{\mathsf{SEM}}\{\text{WLP}(P,d)\}~P~\{d\}$, and $\vdash_{\mathsf{SEM}}\{\text{WLP}(Q,d)\}~Q~\{d\}$.}\\
    & (a) & If $S=\mathcal{R}$.\\
    && If $\mathcal{R}=\{\}$, then there is no premise to prove so that we can deduce $\vdash_{\mathsf{SEM}}\{\text{WLP}(\mathcal{R},d)\}~\mathcal{R}~\{d\}$ automatically. If $\mathcal{R}=\{r_1,\ldots,r_n\}$ for $n>0$, $\vdash_{\mathsf{SEM}}\{\text{WLP}(r_1,d)\}~r_1~\{d\},\ldots,\vdash_{\mathsf{SEM}}\{\text{WLP}(r_n,d)\}~r_n~\{d\}$ from [ruleapp]$_{\text{slp}}$. Let $e$ be the assertion $\text{WLP}(r_1,d)\wedge\ldots\wedge\text{WLP}(r_n,d)$, so that by [cons], $\vdash_{\mathsf{SEM}}\{e\}~r_1~\{d\},\ldots,\vdash_{\mathsf{SEM}}\{e\}~r_n~\{d\}$. By [ruleset] we then get that $\vdash_{\mathsf{SEM}}\{e\}~\mathcal{R}~\{d\}$. Then by [cons],
    $\vdash_{\mathsf{SEM}}\{\text{WLP}(\mathcal{R},d)\}~\mathcal{R}~\{d\}$ because\\
    &&\begin{tabular}{lcl}
        \multicolumn{3}{l}{$G\Sat\text{WLP}(\mathcal{R},d)$}\\
        $~~~$& $\myeq{\Leftrightarrow}{L\ref{lemma:wlpP}}$ & $\forall H. H\in\Sem{\mathcal{R}}G\Rightarrow H\Sat d$\\
        & $\Leftrightarrow$ & $\forall H. (H\in\Sem{r_1}G\vee\ldots\vee H\in\Sem{r_n}G)\Rightarrow H\Sat d$\\
        & $\Leftrightarrow$ & $\forall H. (H\in\Sem{r_1}G\Rightarrow H\Sat d)\wedge\ldots\wedge (H\in\Sem{r_n}G\Rightarrow H\Sat d)$\\
        & $\myeq{\Leftrightarrow}{L\ref{lemma:wlpP}}$ & $G\Sat \text{WLP}(r_1,d)\wedge\ldots\wedge\text{WLP}(r_n,d)$
    \end{tabular}\\
    & (b) & If $S=P;Q$,\\
    && From the assumption, $\vdash_{\mathsf{SEM}}\{\text{WLP}(P,\text{WLP}(Q,d))\}\,P\,\{\text{WLP}(Q,d)\}$ and $\vdash_{\mathsf{SEM}}\{\text{WLP}(Q,d)\}\,Q\,\{d\}$. Then by the inference rule [comp], we get that $\vdash_{\mathsf{SEM}}\{\text{WLP}(P,\text{WLP}(Q,d))\}~P;Q~\{d\}$. Finally by [cons], we have $\vdash_{\mathsf{SEM}}\{\text{WLP}(P;Q,d)\}~P;Q~\{d\}$ because\\
    &&\begin{tabular}{lcl}
        \multicolumn{3}{l}{$G\Sat\text{WLP}(P;Q,d)$}\\
        $~~~$& $\myeq{\Leftrightarrow}{L\ref{lemma:wlpP}}$ & $\forall H. H\in\Sem{P;Q}G\Rightarrow H\Sat d$\\
        & $\Leftrightarrow$ & $\forall H,G'. (G'\in\Sem{P}G \wedge H\in\Sem{Q}G'\Rightarrow H\Sat d$\\
        & $\Leftrightarrow$ & $\forall G'. (G'\in\Sem{P}G \Rightarrow (\forall H. H\in\Sem{Q}G'\Rightarrow H\Sat d)$\\
        & $\myeq{\Leftrightarrow}{L\ref{lemma:wlpP}}$ & $\forall G'. (G'\in\Sem{P}G \Rightarrow G'\Sat\text{WLP}(Q,d)$\\
        & $\myeq{\Leftrightarrow}{L\ref{lemma:wlpP}}$ & $G\Sat \text{WLP}(P,\text{WLP}(Q,d))$
    \end{tabular}\\
    & (c) & If $S=\mtt{if\,}C\mtt{\,then\,}P\mtt{\,else\,}Q$,\\
    && Both $\vdash_{\mathsf{SEM}}\{\text{WLP}(P,d)\}\,P\,\{d\}$ and $\vdash_{\mathsf{SEM}}\{\text{WLP}(Q,d)\}\,Q\,\{d\}$ follow from the assumption. By [cons], we have:\\
    &&$\vdash_{\mathsf{SEM}}\{\text{WLP}(P,d)\wedge(\FAIL(C)\Rightarrow\text{WLP}(Q,d))\}\,P\,\{d\}$ and \\
    &&$\vdash_{\mathsf{SEM}}\{\text{WLP}(Q,d)\wedge(\SUCCESS(C)\Rightarrow\text{WLP}(P,d))\}\,Q\,\{d\}$.\\
    && Let $e$ denotes $(\SUCCESS(C)\Rightarrow\text{WLP}(P,d))\wedge(\FAIL(C)\Rightarrow\text{WLP}(Q,d))$ so that by [cons], we have:\\
    &&$\vdash_{\mathsf{SEM}}\{e\wedge\SUCCESS(C)\}\,P\,\{d\}$ and $\vdash_{\mathsf{SEM}}\{e\wedge\FAIL(C)\}\,Q\,\{d\}$.\\
    &&By [if] we then get that $\vdash_{\mathsf{SEM}}\{e\}\,S\,\{d\}$, and finally by [cons] we have $\vdash_{\mathsf{SEM}}\{\text{WLP}(S,d)\}\,S\,\{d\}$ because\\
    &&\small{\begin{tabular}{lcl}
        \multicolumn{3}{l}{$G\Sat\text{WLP}(\mtt{if\,}C\mtt{\,then\,}P\mtt{\,else\,}Q,d)$}\\
        & $\myeq{\Leftrightarrow}{L\ref{lemma:wlpP}}$ & $\forall H.~ H\in\Sem{\mtt{if\,}C\mtt{\,then\,}P\mtt{\,else\,}Q}G\Rightarrow H\Sat d$\\
        & $\Leftrightarrow$ & $\forall H.~ ((G\Sat\SUCCESS(C)\wedge H\in\Sem{P}G)\vee(G\Sat\FAIL(C)\wedge H\in\Sem{Q}G))\Rightarrow H\Sat d$\\
        & $\Leftrightarrow$ & $(\forall H.~ (G\Sat\SUCCESS(C)\wedge H\in\Sem{P}G)\Rightarrow H\Sat d)$\\
        &&$\wedge(\forall H.~(G\Sat\FAIL(C)\wedge H\in\Sem{Q}G)\Rightarrow H\Sat d)$
        \\ 
        & $\Leftrightarrow$ & $G\Sat\SUCCESS(C)\Rightarrow(\forall H.~  H\in\Sem{P}G\Rightarrow H\Sat d)$\\
        &&$\wedge G\Sat\FAIL(C)\Rightarrow(\forall H.~H\in\Sem{Q}G\Rightarrow H\Sat d)$\\
        & $\myeq{\Leftrightarrow}{L\ref{lemma:wlpP}}$ & $G\Sat(\SUCCESS(C)\Rightarrow\text{WLP}(P,d)\wedge(\FAIL(C)\Rightarrow\text{WLP}(Q,d)$
    \end{tabular}}\\
    & (d) & If $S=\mtt{try\,}C\mtt{\,then\,}P\mtt{\,else\,}Q$,\\
    && Let $e$ denotes \begin{small}$\SUCCESS(C)\Rightarrow\text{WLP}(C;P,d)\wedge\FAIL(C)\Rightarrow\text{WLP}(Q,d)$\end{small}. Similar to point (c), from the assumption we have $\vdash_{\mathsf{SEM}}\{\text{WLP}(Q,d)\}\,Q\,\{d\}$, which imply $\vdash_{\mathsf{SEM}}\{e\wedge\FAIL(C)\}\,Q\,\{d\}$. Also from the assumption, we have both $\vdash_{\mathsf{SEM}}\{\text{WLP}(C,\text{WLP}(P,d))\}\,P\,\{\text{WLP}(P,d)\}$ and also $\vdash_{\mathsf{SEM}}\{\text{WLP}(P,d)\}\,P\,\{d\}$. By [comp] and [cons] as case $S=P;Q$,
    $\vdash_{\mathsf{SEM}}\{\text{WLP}(C;P,d)\}\,C;P\,\{d\}$. Then by [cons] as in $\mtt{if-then-try}$ case, $\vdash_{\mathsf{SEM}}\{e\wedge\SUCCESS(C)\}\,C;P\,\{d\}$ such that by the inference rule [try] we have $\vdash_{\mathsf{SEM}}\{e\}\,S\,\{d\}$. Finnaly by [cons], $\vdash_{\mathsf{SEM}}\{\text{WLP}(S,d)\}\,S\,\{d\}$ because\\
    &&\small{\begin{tabular}{lcl}
        \multicolumn{3}{l}{$G\Sat\text{WLP}(\mtt{try\,}C\mtt{\,then\,}P\mtt{\,else\,}Q,d)$}\\
        & $\myeq{\Leftrightarrow}{L\ref{lemma:wlpP}}$ & $\forall H.~ H\in\Sem{\mtt{try\,}C\mtt{\,then\,}P\mtt{\,else\,}Q}G\Rightarrow H\Sat d$\\
        & $\Leftrightarrow$ & $\forall H.~ ((G\Sat\SUCCESS(C)\wedge H\in\Sem{C;P}G)\vee(G\Sat\FAIL(C)\wedge H\in\Sem{Q}G))$\\
        &&$~~~~~~~~\Rightarrow H\Sat d$\\
        & $\Leftrightarrow$ & $(\forall H.~ (G\Sat\SUCCESS(C)\wedge H\in\Sem{C;P}G)\Rightarrow H\Sat d)$\\
        &&$\wedge(\forall H.~(G\Sat\FAIL(C)\wedge H\in\Sem{Q}G)\Rightarrow H\Sat d)$
        \\ 
        & $\Leftrightarrow$ & $G\Sat\SUCCESS(C)\Rightarrow(\forall H.~  H\in\Sem{C;P}G\Rightarrow H\Sat d)$\\
        &&$\wedge G\Sat\FAIL(C)\Rightarrow(\forall H.~H\in\Sem{Q}G\Rightarrow H\Sat d)$\\
        & $\myeq{\Leftrightarrow}{L\ref{lemma:wlpP}}$ & $G\Sat(\SUCCESS(C)\Rightarrow\text{WLP}(C;P,d)\wedge(\FAIL(C)\Rightarrow\text{WLP}(Q,d)$
    \end{tabular}}\\
     & (d) & If $S=P!$,\\
     &&From the assumption, $\vdash_{\mathsf{SEM}}\{\text{WLP}(P,\text{WLP}(P!,d))\}\,P\,\{\text{WLP}(P!,d)\}$. By [cons] as in $P;Q$ case, we get $\vdash_{\mathsf{SEM}}\{\text{WLP}(P;P!,d)\}\,P\,\{\text{WLP}(P!,d)\}$ such that by [cons] we know that $\vdash_{\mathsf{SEM}}\{\text{WLP}(P!,d)\}\,P\,\{\text{WLP}(P!,d)\}$. Note that from Theorem \ref{theo:semsound}, this implies $\vDash \{\text{WLP}(P!,d)\}\,P\,\{\text{WLP}(P!,d)\}$ such that for all host graphs $G_1,\ldots,G_n,$ and $H$ where $G_2\in\Sem{P}G_1,\ldots,G_n\in\Sem{P}G_{n-1},$ and $\tuple{P,G_n}\rightarrow^*\tuple{\mtt{break},H}$, $G\Sat\text{WLP}(P!.d)$ implies $G'\Sat \text{WLP}(P!.d)$ and $H\Sat d$. Hence, Break($\text{WLP}(P!.d),P,d$) holds. Then by the inference rule [alap], we have $\vdash_{\mathsf{SEM}}\{\text{WLP}(P!,d)\}\,P\,\{(\text{WLP}(P!,d)\wedge\FAIL(P))\vee d\}$ such that by [cons], $\vdash_{\mathsf{SEM}}\{\text{WLP}(P!,d)\}\,P\,\{d\}$ because\\
     &&\small{\begin{tabular}{lcl}
        $H\Sat\text{WLP}(P!,d)\wedge\FAIL(P)$
        & $\myeq{\Leftrightarrow}{L\ref{lemma:wlpP}}$ & $\text{fail}\in\Sem{P}H\wedge\forall H'.~ H'\in\Sem{P!}H\Rightarrow H'\Sat d$\\
        & $\Rightarrow$ & $H\in\Sem{P!}H\wedge\forall H'.~ H'\in\Sem{P!}H\Rightarrow H'\Sat d$\\
        & $\Rightarrow$ & $H\Sat d$.
    \end{tabular}}\qed
\end{longtable}
\end{proof}

\Theo{Relative completeness of \textsf{SEM}}{theo:syncomplete}{
Given a graph program $P$ and assertions $c,d$. Then,
\[\vDash\{c\}~P~\{d\} \text{~implies~} \vdash_{\mathsf{SEM}} \{c\}~P~\{d\}.\]
}

\begin{proof}
From Lemma \ref{lemma:provwlp}, we know that for all $\vdash_{\mathsf{SEM}} \{\text{WLP}(P,d)\}~P~\{d\}$ and from Theorem \ref{theo:semsound}, we get that $\text{WLP}(P,d)$ is a weakest liberal precondition over $P$ and $d$. Hence, if $\vDash\{c\}~P~\{d\}$, $c$ must imply $\text{WLP}(P,d)$ so that by [cons] we get that $\vdash_{\mathsf{SEM}} \{c\}~P~\{d\}$.\qed
\end{proof}

\begin{conjecture}
The proof calculus \textsf{SYN} is not relative complete.
\end{conjecture}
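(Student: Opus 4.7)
The plan is to exhibit a valid partial correctness triple $\{c\}\,P!\,\{d\}$ with $c,d$ closed first-order formulas and $P$ a single rule schema such that no proof tree in \textsf{SYN} has this triple as its conclusion. The inference rule $[\text{alap}]$ is the only one in \textsf{SYN} whose conclusion mentions a loop, so any derivation of $\{c\}\,P!\,\{d\}$ must (modulo applications of $[\text{cons}]$) supply a closed first-order formula $\iota$ satisfying $c\Rightarrow\iota$, $\vdash_{\textsf{SYN}}\{\iota\}\,P\,\{\iota\}$, $\text{Break}(\iota,P,e)$ for some first-order $e$, and $(\iota\wedge\text{Fail}(P))\vee e\Rightarrow d$. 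I will choose $P$, $c$, and $d$ so that any such $\iota$ would have to characterise directed reachability from a distinguished node, a property classically not definable in first-order graph logic.

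Concretely I would take $P$ to be the rule schema that finds a red node $u$ and an unmarked node $v$ connected by a directed edge $u\to v$, and recolours $v$ red; thus $P!$ propagates the red mark forward along directed paths, and $\text{Fail}(P)$ holds precisely when the red nodes are closed under directed successor. Choose the precondition $c$ to state that there is a unique root, the root is red, and every other node is unmarked, and choose the postcondition $d$ to state that the red set is forward-closed together with a carefully crafted first-order property on node degrees that, under the precondition, is semantically equivalent to ``every node forward-reachable from the root is red''. Semantic validity of $\{c\}\,P!\,\{d\}$ then follows from the propagation behaviour of $P$. For any candidate invariant $\iota$, the implication $\iota\wedge\text{Fail}(P)\Rightarrow d$ combined with the preservation $\{\iota\}\,P\,\{\iota\}$ forces $\iota$ to entail ``every currently red node is forward-reachable from the root''. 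A standard Ehrenfeucht--Fra\"iss\'e argument, comparing two directed graphs of suitable girth that agree on all first-order sentences of any fixed quantifier rank $k$ but differ on whether a designated node is reachable from the root, would then produce a model in which $\iota$ holds yet the weakening of $d$ fails, contradicting the assumed derivation.

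The principal obstacle is carrying out the inexpressibility step inside our enriched signature, which extends pure first-order graph logic with marks, $\text{indeg}$, $\text{outdeg}$, list labels, $\text{length}$, and integer operators. I would neutralise the list and arithmetic fragment by confining the counterexample graphs to use only the constant label $\text{empty}$, so that every list- or integer-valued term collapses; and I would bound degrees uniformly so that $\text{indeg}$ and $\text{outdeg}$ contribute only finitely many local predicates. A Gaifman-style locality theorem, adapted to this multi-sorted setting, then ensures that for every quantifier rank $k$ there exist pairs of graphs agreeing on all rank-$k$ sentences yet differing on root-reachability of a designated vertex. The hard part is verifying rigorously that the added vocabulary does not covertly provide expressive power beyond Gaifman-local first-order logic; once that is established, the contradiction runs as above and establishes that \textsf{SYN} is not relatively complete.
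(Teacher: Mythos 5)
First, be aware that the paper does not prove this statement: it is explicitly left as a conjecture, supported only by the informal \texttt{duplicate!; delete!} example, where the obstruction is the intermediate assertion of a sequential composition (parity of the number of nodes) rather than a loop invariant. So you are attempting something the paper itself leaves open, and your route --- reachability instead of parity, and an obstruction located at the invariant of [alap] rather than at the midpoint of [comp] --- is genuinely different from the paper's evidence.

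However, the proposal has a gap that is fatal as written: you never exhibit the first-order postcondition $d$, and the description you give points at a $d$ that is in fact provable. The direction ``every node forward-reachable from the root is red'' is a semantic consequence of ``the root is red and the red set is closed under outgoing edges,'' and the latter is exactly $\mathrm{Fail}(P)$ conjoined with a trivial first-order invariant (``the root is red and every node is red or unmarked''), which is implied by $c$ and preserved by $P$. So a postcondition of the shape you describe goes through [alap] with a first-order invariant and witnesses nothing. The genuinely non-first-order content of the strongest postcondition of $P!$ is the converse, ``every red node is reachable from the root,'' and precisely because that is not first-order you cannot take it as $d$. What the argument needs is a first-order $d$ that holds in all final graphs (red set $=$ reachable set) yet is not implied by $\iota\wedge\mathrm{Fail}(P)$ for \emph{any} first-order $\iota$ that is implied by $c$ and preserved by $P$; your sketch asserts such a $d$ exists but gives no candidate, and the claim that every admissible $\iota$ ``is forced to entail'' reachability of all red nodes is not argued --- nothing in the three side conditions forces $\iota$ to pin down the reachable set unless $d$ already does, which returns you to the expressibility problem. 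On top of this sits the step you yourself flag as open (that marks, $\mathrm{indeg}$, $\mathrm{outdeg}$, list labels and integer arithmetic do not defeat Gaifman-locality or Ehrenfeucht--Fra\"iss\'e arguments); that step is equally needed to make the paper's own parity example rigorous and is one reason the paper stops at a conjecture. As it stands your proposal is a plausible programme, not a proof.
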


In Theorem \ref{theo:syncomplete}, we show the relative completeness of our semantic partial correctness calculus. This proof, however, assumes that the assertion language was expressive, i.e. able to express strongest liberal postcondition relative to arbitrary programs and preconditions. However, there are limitations in properties that can be expressed by first-order logic. We believe that FO logic can not express that a graph has an even number of nodes \cite{Libkin04}. Although we do not have proof of the incompleteness of \textsf{SYN}, we believe that the calculus is not relative complete due to the expressiveness of FO formulas.

There is strong evidence that this is impossible. For example, consider the triple $\{c\}~P~\{d\}$ with $c=\mrm{\A{V}x(m_V(x)=none\land\neg\E{E}y(s(y)=x\vee t(y)=x))}$ 
(all nodes are unmarked and isolated), $d=\mrm{\A{V}x(false)}$ (the graph is empty), and the following program:\\
\begin{samepage}
\begin{scriptsize}
$~~~~~~~~~~~~~~~~~~\mtt{Main = duplicate!; delete!}$\\[1ex]
$~~~~~~~~~~~~~~~~~~~~~~~~$\begin{tabular}{lll}
\begin{tikzpicture}[remember picture,
  inner/.style={circle,draw,minimum size=16pt},
  outer/.style={inner sep=2pt}, scale=0.7
  ]
  \node[outer] (AA) at (0,1) {$\tiny\mtt{duplicate(a:list)}$};
  \node[outer] (A) at (-1,0) {
   \begin{tikzpicture}[scale=0.7, transform shape]
		\node[inner, label=below:\tiny 1] (Aa) at (0,0) {$\tiny\mtt{a}$};
		\end{tikzpicture}};
  \node[outer] (B) at (0,0) {$\Rightarrow$};
  \node[outer] (C) at (1.5,0) {
   \begin{tikzpicture}[scale=0.7, transform shape]
		\node[inner, label=below:\tiny 1, fill=gray!50] (Aa) at (0,0) {$\tiny\mtt{a}$};	
		\node[inner, fill=gray!50] (Aa) at (1,0) {$\tiny\mtt{a}$};	
		\end{tikzpicture}};
	\end{tikzpicture}
&
$~~~~~~~~~~~~~~~~~~~~~~~~$&
\begin{tikzpicture}[remember picture,
  inner/.style={circle,draw,minimum size=16pt},
  outer/.style={inner sep=2pt}, scale=0.7
  ]
  \node[outer] (AA) at (-1,1) {$\tiny\mtt{delete(a:list)}$};
  \node[outer] (A) at (-1.5,0) {
   \begin{tikzpicture}[scale=0.7, transform shape]
		\node[inner, label=below:\tiny 1, fill=gray!50] (Aa) at (0,0) {$\tiny\mtt{a}$};	
		\node[inner, label=below:\tiny 2, fill=gray!50] (Aa) at (1,0) {$\tiny\mtt{a}$};	
		\end{tikzpicture}};
  \node[outer] (B) at (0,0) {$\Rightarrow$};
  \node[outer] (C) at (1,0) {$\emptyset$};
	\end{tikzpicture}

\end{tabular}\end{scriptsize}
\end{samepage}

It is obvious that $\vDash \{c\}~\mtt{duplicate!; delete!}~\{d\}$ holds: $\mtt{duplicate!}$ duplicates the number of nodes while marking the nodes grey, hence its result graph consists of an even number of isolated grey nodes. Then $\mtt{delete!}$ deletes pairs of grey nodes as long as possible, so the overall result is the empty graph. Note that ``consists of an even number of isolated grey nodes" is both the strongest postcondition with respect to $c$ and \ttt{duplicate!}, and the weakest precondition with respect to \ttt{delete!} and $d$.

Using \textsf{SYN} one can prove $\vdash \{c\}~\mtt{duplicate!}~\{e\}$ where $e$ expresses that all nodes are grey and isolated. However, we believe that our logic cannot express that a graph has an even number of nodes. This is because pure first-order logic (without built-in operations) cannot express this property \cite{Libkin04} and it is likely that this inexpressiveness carries over to our logic. As a consequence, one can only prove $\vdash \{e\}~\mtt{delete!}~\{f\}$ where $f$ expresses that the graph contains at most one node (because otherwise \ttt{delete} would be applicable). But we cannot use \textsf{SYN} to prove  $\vdash \{c\}~\mtt{duplicate!; delete!}~\{d\}$.

\section{Verification Example}
\label{sec:ex}
In this section, we show an example of graph program verification with first-order logic. Here we consider the program $\mtt{2colouring}$ that can be seen in \figurename~\ref{fig:2col-prog}. Given a host graph where all nodes are unmarked and unrooted and all edges are unmarked. If the input graph is two-colourable, then all nodes in the resulting graph should marked with blue or red such that no two adjacent nodes have the same colour.

\begin{figure}
\begin{mdframed}
\begin{tabular}{l}
\small$\mtt{Main = (init; Colour!)!; if~Illegal~then~unmark!}$\\
$\mtt{Colour = \{col\_\,blue, col\_\,red\}}$\\
$\mtt{Illegal = \{ill\_\,blue, ill\_\,red\}}$\\
\end{tabular}\\[1.5ex]
\begin{tabular}{lllll}
\begin{tikzpicture}[remember picture,
  inner/.style={circle,draw,minimum size=16pt},
  outer/.style={inner sep=2pt}, scale=0.6
  ]
  \node[outer] (AA) at (0,1) {$\tiny\mtt{init(a:list)}$};
  \node[outer] (A) at (-1,0) {
   \begin{tikzpicture}[scale=0.8, transform shape]
		\node[inner, label=below:\tiny 1] (Aa) at (0,0) {$\mtt{a}$};
		\end{tikzpicture}};
  \node[outer] (B) at (0,0) {$\Rightarrow$};
  \node[outer] (C) at (1,0) {
   \begin{tikzpicture}[scale=0.8, transform shape]
		\node[inner, label=below:\tiny 1, fill=red!70] (Aa) at (0,0) {$\mtt{a}$};	
		\end{tikzpicture}};
	\end{tikzpicture}
	
	&$~~~~~~~~~$&\begin{tikzpicture}[remember picture,
  inner/.style={circle,draw,minimum size=16pt},
  outer/.style={inner sep=2pt}, scale=0.6
  ]
  \node[outer] (AA) at (0,1) {$\tiny\mtt{col\_\,blue(a,b,c:list)}$};
  \node[outer] (A) at (-1,0) {
   \begin{tikzpicture}[scale=0.8, transform shape]
		\node[inner, label=below:\tiny 1, fill=red!70] (Aa) at (0,0) {$\mtt{a}$};
		\node[inner, label=below:\tiny 2] (Ab) at (1,0) {$\mtt{b}$};
		\draw (Aa) to node[above] {$\mtt{c}$} (Ab);
		\end{tikzpicture}};
  \node[outer] (B) at (0.5,0) {$\Rightarrow$};
  \node[outer] (C) at (2,0) {
   \begin{tikzpicture}[scale=0.8, transform shape]
		\node[inner, label=below:\tiny 1, fill=red!70] (Aa) at (0,0) {$\mtt{a}$};
		\node[inner, label=below:\tiny 2, fill=blue!50] (Ab) at (1,0) {$\mtt{b}$};
		\draw (Aa) to node[above] {$\mtt{c}$} (Ab);	
		\end{tikzpicture}};
	\end{tikzpicture}
&$~~~~~~~$&
\begin{tikzpicture}[remember picture,
  inner/.style={circle,draw,minimum size=16pt},
  outer/.style={inner sep=2pt}, scale=0.6
  ]
  \node[outer] (AA) at (0,1) {$\tiny\mtt{col\_\,red(a,b,c:list)}$};
  \node[outer] (A) at (-1,0) {
   \begin{tikzpicture}[scale=0.8, transform shape]
		\node[inner, label=below:\tiny 1, fill=blue!50] (Aa) at (0,0) {$\mtt{a}$};
		\node[inner, label=below:\tiny 2] (Ab) at (1,0) {$\mtt{b}$};
		\draw (Aa) to node[above] {$\mtt{c}$} (Ab);
		\end{tikzpicture}};
  \node[outer] (B) at (0.5,0) {$\Rightarrow$};
  \node[outer] (C) at (2,0) {
   \begin{tikzpicture}[scale=0.8, transform shape]
		\node[inner, label=below:\tiny 1, fill=blue!50] (Aa) at (0,0) {$\mtt{a}$};
		\node[inner, label=below:\tiny 2, fill=red!70] (Ab) at (1,0) {$\mtt{b}$};
		\draw (Aa) to node[above] {$\mtt{c}$} (Ab);	
		\end{tikzpicture}};
	\end{tikzpicture}\\

\begin{tikzpicture}[remember picture,
  inner/.style={circle,draw,minimum size=16pt},
  outer/.style={inner sep=2pt}, scale=0.6
  ]
  \node[outer] (AA) at (0,1) {$\tiny\mtt{unmark(a:list)}$};
  \node[outer] (A) at (-1,0) {
   \begin{tikzpicture}[scale=0.8, transform shape]
		\node[inner, label=below:\tiny 1, fill=magenta!70] (Aa) at (0,0) {$\mtt{a}$};
		\end{tikzpicture}};
  \node[outer] (B) at (0,0) {$\Rightarrow$};
  \node[outer] (C) at (1,0) {
   \begin{tikzpicture}[scale=0.8, transform shape]
		\node[inner, label=below:\tiny 1] (Aa) at (0,0) {$\mtt{a}$};
		\end{tikzpicture}};
	\end{tikzpicture}
	&&
\begin{tikzpicture}[remember picture,
  inner/.style={circle,draw,minimum size=16pt},
  outer/.style={inner sep=2pt}, scale=0.6
  ]
  \node[outer] (AA) at (0,1) {$\tiny\mtt{ill\_\,blue(a,b,c:list)}$};
  \node[outer] (A) at (-1,0) {
   \begin{tikzpicture}[scale=0.8, transform shape]
		\node[inner, label=below:\tiny 1, fill=blue!50] (Aa) at (0,0) {$\mtt{a}$};
		\node[inner, label=below:\tiny 2, fill=blue!50] (Ab) at (1,0) {$\mtt{b}$};
		\draw (Aa) to node[above] {$\mtt{c}$} (Ab);
		\end{tikzpicture}};
  \node[outer] (B) at (0.5,0) {$\Rightarrow$};
  \node[outer] (C) at (2,0) {
   \begin{tikzpicture}[scale=0.8, transform shape]
		\node[inner, label=below:\tiny 1, fill=blue!50] (Aa) at (0,0) {$\mtt{a}$};
		\node[inner, label=below:\tiny 2, fill=blue!50] (Ab) at (1,0) {$\mtt{b}$};
		\draw (Aa) to node[above] {$\mtt{c}$} (Ab);	
		\end{tikzpicture}};
	\end{tikzpicture}
&&
\begin{tikzpicture}[remember picture,
  inner/.style={circle,draw,minimum size=16pt},
  outer/.style={inner sep=2pt}, scale=0.6
  ]
  \node[outer] (AA) at (0,1) {$\tiny\mtt{ill\_\,red(a,b,c:list)}$};
  \node[outer] (A) at (-1,0) {
   \begin{tikzpicture}[scale=0.8, transform shape]
		\node[inner, label=below:\tiny 1, fill=red!70] (Aa) at (0,0) {$\mtt{a}$};
		\node[inner, label=below:\tiny 2, fill=red!70] (Ab) at (1,0) {$\mtt{b}$};
		\draw (Aa) to node[above] {$\mtt{c}$} (Ab);
		\end{tikzpicture}};
  \node[outer] (B) at (0.5,0) {$\Rightarrow$};
  \node[outer] (C) at (2,0) {
   \begin{tikzpicture}[scale=0.8, transform shape]
		\node[inner, label=below:\tiny 1, fill=red!70] (Aa) at (0,0) {$\mtt{a}$};
		\node[inner, label=below:\tiny 2, fill=red!70] (Ab) at (1,0) {$\mtt{b}$};
		\draw (Aa) to node[above] {$\mtt{c}$} (Ab);	
		\end{tikzpicture}};
	\end{tikzpicture}
	
\end{tabular}
\end{mdframed}
\caption{Graph program $\mtt{2colouring}$ for computing a 2-colouring graph}
\label{fig:2col-prog}
\end{figure}

Then, let us consider the following pre- and postcondition:\\
\noindent\textbf{Precondition} ``every node and edge is unmarked and every node is unrooted"\\
\textbf{Postcondition} ``the precondition holds or every node is marked with blue or red, and no two adjacent nodes marked with the same colour"

Let $c$ and $c\vee d$ be the FO formulas expressing pre- and postcondition respectively. We define $c$ and $d$ as follows:\\
\begin{footnotesize}\begin{tabular}{rcl}
    $c$ & $\equiv$ & $\mrm{\A{V}x(\mV(x)= none\wedge\neg root(x))\wedge\A{E}x(\mE(x)=none)}$\\
    $d$ & $\equiv$ & $\mrm{\A{V}x((\mV(x)=red\vee\mV(x)=blue))\wedge\neg\E{E}x(s(x)\neq t(x)\wedge\mV(s(x))=\mV(t(x)))}$
\end{tabular}\end{footnotesize}

A proof tree for the partial correctness for $\mtt{2-colouring}$ with respect to $c$ and $c\vee d$ is provided in \figurename~\ref{fig:2col-proof}. The conditions in the tree are defined in Table \ref{tab:2col-cond}. 

\begin{figure}
    \centering\begin{scriptsize}
    \begin{mdframed}

\begin{prooftree}
\AxiomC{Subtree I}
\AxiomC{Subtree II}
\LeftLabel{[comp]}
\BinaryInfC{\{$f$\}$~\mtt{2colouring}~$\{$c\vee d$\}}
\LeftLabel{[cons]}
\UnaryInfC{\{$c$\}$~\mtt{2colouring}~$\{$c\vee d$\}}
\end{prooftree}
$~$\\$~$\\

where subtree I is:
\begin{prooftree}
\AxiomC{}
\LeftLabel{[ruleapp]$_\text{slp}$}
\UnaryInfC{\{$f$\}$\mtt{init}$\{Slp$(f,\mtt{init})$\}}
\LeftLabel{[cons]}
\UnaryInfC{\{$f$\}$\mtt{init}$\{$f$\}}

\AxiomC{subtree I.a}
\LeftLabel{[comp]}
\BinaryInfC{\{$f$\}$~\mtt{init;Colour!}~$\{$f$\}}
\LeftLabel{[alap]}
\UnaryInfC{\{$f$\}$~\mtt{(init; Colour!)!~}$\{$f\wedge\text{Fail}(\mtt{init;Colour!})$\}}
\LeftLabel{[cons]}
\UnaryInfC{\{$f$\}$~\mtt{(init;Colour!)!}~$\{$e$\}}

\end{prooftree}
with subtree I.a:
\begin{prooftree}
\AxiomC{}
\LeftLabel{[ruleapp]$_\text{slp}$}
\UnaryInfC{\{$f$\}$~\mtt{c\_\,blue}~$\{Slp$(f,\mtt{c\_\,blue})$\}}
\LeftLabel{[cons]}
\UnaryInfC{\{$f$\}$~\mtt{c\_\,blue}~$\{$f$\}}

\AxiomC{}
\LeftLabel{[ruleapp]$_\text{slp}$}
\UnaryInfC{\{$f$\}$~\mtt{c\_\,red}~$\{Slp$(f,\mtt{c\_\,red})$\}}
\LeftLabel{[cons]}
\UnaryInfC{\{$f$\}$~\mtt{c\_\,red}~$\{$f$\}}

\LeftLabel{[cons]}
\BinaryInfC{\{$f$\}$~\mtt{Colour}~$\{$f$\}}
\LeftLabel{[alap]}
\UnaryInfC{\{$f$\}$~\mtt{Colour!}~$\{$f\wedge\text{Fail}(\mtt{Colour})$\}}
\LeftLabel{[cons]}
\UnaryInfC{\{$f$\}$~\mtt{Colour!}~$\{$f$\}}
\end{prooftree}

$~$\\$~$\\
and subtree II is:
\begin{prooftree}
\AxiomC{}
\LeftLabel{[ruleapp]$_\text{slp}$}
\UnaryInfC{\{$f$\}$~\mtt{unmark}~$\{Slp$(f,\mtt{unmark})$\}}
\LeftLabel{[cons]}
\UnaryInfC{\{$f$\}$~\mtt{unmark}~$\{$f$\}}
\LeftLabel{[alap]}
\UnaryInfC{\{$f$\}$~\mtt{unmark!}~$\{$f\wedge\text{Fail}(\mtt{unmark})$\}}
\LeftLabel{[cons]}
\UnaryInfC{\{$e\wedge\text{Success}(\mtt{Illegal})$\}$~\mtt{unmark!}~$\{$c\vee d$\}}

\AxiomC{}
\LeftLabel{[ruleapp]$_\text{slp}$}
\UnaryInfC{\{$d$\}$~\mtt{skip}~$\{$d$\})}
\LeftLabel{[cons]}
\UnaryInfC{\{$e\wedge\text{Fail}(\mtt{Illegal})$\}$~\mtt{skip}~$\{$c\vee d$\}}

\LeftLabel{[if]}
\BinaryInfC{\{$e$\}$~\mtt{if~Illegal~then~umark!}~$\{$c\vee d$\}}
\end{prooftree}
\end{mdframed}\end{scriptsize}
    \caption{Proof tree for partial correctness of $\mtt{2colouring}$}
    \label{fig:2col-proof}
\end{figure}

\begin{table}[]
    \caption{Assertions inside proof tree of $\mtt{2-colouring}$}
    \label{tab:2col-cond}
    \centering\begin{scriptsize}
    \begin{tabular}{|p{11.5cm}|}
        \hline
        \multicolumn{1}{|c|}{\textbf{symbol and its first-order formulas}}\\\hline
        
        $c
        \equiv
        \mrm{\A{V}x(\mV(x)=none\wedge\neg root(x))\wedge\A{E}x(\mE(x)=none)}$\\\hline
        
        $d\equiv
        \mrm{\A{V}x((\mV(x)=red\vee\mV(x)=blue))\wedge\neg\E{E}x(s(x)\neq t(x)\wedge\mV(s(x))=\mV(t(x)))}$\\\hline
        
        $e
        \equiv
        \mrm{\A{V}x((\mV(x)=red\vee\mV(x)=blue)\wedge\neg root(x))\wedge\A{E}x(\mE(x)=none)}$\\\hline
        
        $f
        \equiv
        \mrm{\A{V}x((\mV(x)=red\vee\mV(x)=blue\vee\mV(x)=none))\wedge\neg root(x))\wedge\A{E}x(\mE(x)=none)}$\\\hline
       
        Slp$(f,\mtt{init})$\\
        $\equiv
        \mrm{\E{V}y(\A{V}x(x=y\vee((\mV(x)=red\vee\mV(x)=blue\vee\mV(x)=none)\wedge\neg root(x)))}$\\
        $\mrm{~~~~~~~~~\wedge \mV(y)=red\wedge\neg root(y))\wedge\A{E}x(\mE(x)=none)}$\\\hline
        
        Slp$(f,\mtt{c\_blue})=$Slp$(f,\mtt{c\_red})$\\
        $\equiv
        \mrm{\E{V}u,v(\A{V}x(x=u\vee x=v\vee((\mV(x)=red\vee\mV(x)=blue\vee\mV(x)=none)\wedge\neg root(x)))}$\\
       $\mrm{~~~~~~~~~~~~\wedge \mV(u)=red\wedge\mV(v)=blue\wedge\neg root(u)\wedge\neg root(v)}$\\
        $\mrm{~~~~~~~~~~~~\wedge\E{E}y((s(y)=u\wedge t(y)=v)\vee(t(y)=u\wedge s(y)=v)))\wedge\A{E}x(\mE(x)=none)}$\\\hline
        
        Slp$(f,\mtt{unmark})$\\
        $\equiv
        \mrm{\E{V}y(\A{V}x(x=y\vee((\mV(x)=red\vee\mV(x)=blue\vee\mV(x)=none)\wedge\neg root(x)))}$\\
        $\mrm{~~~~~~~~~\wedge\mV(y)=none\wedge\neg root(y))\wedge\A{E}x(\mE(x)=none)}$\\\hline
        
        Fail$(\mtt{Colour})$\\
        $\equiv
        \mrm{\neg\E{E}x((((\mV(s(x))=red\vee\mV(s(x))=blue)\wedge \mV(t(x))=none)}$\\
        $\mrm{~~~~~~~~~~~~\vee((\mV(t(x))=red\vee\mV(t(x))=blue)\wedge \mV(s(x))=none))}$\\
        $\mrm{~~~~~~~~~~~~\wedge \neg root(s(x))\wedge\neg root(t(x))})$\\\hline
        
        Fail$(\mtt{init;Colour!})
        \equiv
        \mrm{\neg\E{V}x(\mV(x)=none\wedge\neg root(x))}$\\\hline

        Fail$(\mtt{unmark})
        \equiv
        \mrm{\neg\E{V}x(\mV(x)\neq none\wedge\neg root(x))}$\\\hline

        Fail$(\mtt{Illegal})$\\
        $\equiv
        \mrm{\neg\E{E}x(s(x)\neq t(x)}$\\
        $\mrm{~~~~~~~~~~~~\wedge((\mV(s(x))=red\wedge\mV(t(x))=red)\vee(\mV(s(x))=blue\wedge\mV(t(x))=blue)))}$\\\hline

        Success$(\mtt{Illegal})$\\
        $\equiv
        \mrm{\E{E}x(s(x)\neq t(x)}$\\
        $\mrm{~~~~~~~~~~\wedge((\mV(s(x))=red\wedge\mV(t(x))=red)\vee(\mV(s(x))=blue\wedge\mV(t(x))=blue)))}$\\\hline

    \end{tabular}\end{scriptsize}
\end{table}

Note that there is no command $\mtt{break}$ in the program, so Break$(c,P,\mrm{false})$ holds for any precondition $c$ and sub-command $P$ of the program $\mtt{2colouring}$. For this reason and for simplicity, we omit premise Break$(c,P,\mrm{false})$ in the inference rule [alap] of the proof tree.

As we can see in the proof tree of \figurename~\ref{fig:2col-proof}, we apply some inference rule [cons] which means we need to give proof of implications applied to the rules. Some implications are obvious, e.g. $c$ implies $c\vee d$, so that for those obvious implications, we do not give any argument about them. Otherwise, we show that the implications hold:

 \vspace{-\topsep}\begin{enumerate}
    \item \textit{Proof of} $c$ \textit{implies} $f$.\\
    \begin{small}
    \begin{tabular}{lcl}
        $G\Sat c$ & $\Leftrightarrow$ & $G\Sat\mrm{\A{V}x(\mV(x)=none\wedge\neg root(x))\wedge\A{E}x(\mE(x)=none)}$  \\
        & $\Rightarrow$ &  $G\Sat\mrm{\A{V}x((\mV(x)=none\vee\mV(x)=blue\vee\mV(x)=red)\wedge\neg root(x))}$\\
        && $\mrm{~~~~~\wedge\A{E}x(\mE(x)=none)}$
    \end{tabular}\end{small}\\
    
    \item \textit{Proof of} Slp$(f,\mtt{init})$ \textit{implies} $f$.\\
    \begin{small}
    \begin{tabular}{lcl}
        \multicolumn{3}{l}{$G\Sat$Slp$(f,\mtt{init})$}\\
        & $\Leftrightarrow$ & $G\Sat\mrm{\E{V}y(\A{V}x(x=y\vee(\mV(x)=red\vee\mV(x)=blue\vee\mV(x)=none))}$\\
`       &&$\mrm{~~~~~~~~~~~~\wedge \mV(y)=red\wedge\neg root(y))}$  \\
        && $\mrm{~~~~~\wedge\A{E}x(\mE(x)=none)}$\\
        & $\Rightarrow$ & $G\Sat\mrm{\A{V}x((\mV(x)=red\wedge\neg root(x))}$\\
        &&$\mrm{~~~~~~~~~~~~~\vee(\mV(x)=red\vee\mV(x)=blue\vee\mV(x)=none))}$ \\
        && $\mrm{~~~~~\wedge\A{E}x(\mE(x)=none)}$\\
        & $\Rightarrow$ & 
        $G\Sat\mrm{\A{V}x((\mV(x)=none\vee\mV(x)=blue\vee\mV(x)=red)\wedge\neg root(x))}$\\
        && $\mrm{~~~~~\wedge\A{E}x(\mE(x)=none)}$
    \end{tabular}\end{small}\\
    
    \item \textit{Proof of} Slp$(f,\mtt{c\_blue})$ \textit{implies} $f$.\\
    \begin{small}
    \begin{tabular}{lcl}
        \multicolumn{3}{l}{$G\Sat$Slp$(f,\mtt{c\_blue})$}\\
        & $\Leftrightarrow$ & $G\Sat\mrm{\E{V}u,v(u\neq v\wedge\A{V}x(x=u\vee x=v}$\\
        &&$\mrm{~~~~~~~~~~~~~~~~~~~~~~\vee((\mV(x)=none\vee\mV(x)=blue\vee\mV(x)=red)\wedge\neg root(x)))}$\\
        &&$\mrm{~~~~~~~~~~~~~~~~\wedge \mV(u)=red\wedge\mV(v)=blue\wedge\neg root(u)\wedge\neg root(v)}$\\
        &&$\mrm{~~~~~~~~~~~~~~~~\wedge\E{E}y((s(y)=u\wedge t(y)=v)\vee(t(y)=u\wedge s(y)=v)))}$  \\
        && $\mrm{~~~~~\wedge\A{E}x(\mE(x)=none)}$\\
        & $\Rightarrow$ & $G\Sat\mrm{\A{V}x((\mV(x)=red\wedge\neg root(x))\vee (\mV(x)=blue\wedge\neg root(x))}$\\
        &&$\mrm{~~~~~~~~~~~~~\vee(\mV(x)=none\vee\mV(x)=blue\vee\mV(x)=red)\wedge\neg root(x))}$\\
        && $\mrm{~~~~~\wedge\A{E}x(\mE(x)=none)}$\\
        & $\Rightarrow$ & 
        $G\Sat\mrm{\A{V}x((\mV(x)=none\vee\mV(x)=blue\vee\mV(x)=red)\wedge\neg root(x))}$\\
        && $\mrm{~~~~~\wedge\A{E}x(\mE(x)=none)}$
    \end{tabular}\end{small}\\
    
    \item \textit{Proof of} Slp$(f,\mtt{col\_red})$ \textit{implies} $f$.\\
    \begin{small}
    \begin{tabular}{lcl}
        \multicolumn{3}{l}{$G\Sat$Slp$(f,\mtt{c\_red})$}\\
        & $\Leftrightarrow$ & $G\Sat$Slp$(f,\mtt{c\_blue})$\\
        & $\Rightarrow$ & 
        $G\Sat\mrm{\A{V}x((\mV(x)=none\vee\mV(x)=blue\vee\mV(x)=red)\wedge\neg root(x))}$\\
        && $\mrm{~~~~~\wedge\A{E}x(\mE(x)=none)}$
    \end{tabular}\end{small}\\

    \item \textit{Proof of} Slp$(f,\mtt{unmark})$ \textit{implies} $f$.\\
    \begin{small}
    \begin{tabular}{lcl}
        \multicolumn{3}{l}{$G\Sat$Slp$(f,\mtt{unmark})$}\\
        & $\Leftrightarrow$ & $G\Sat\mrm{\E{V}y(\A{V}x(x=y\vee((\mV(x)=red\vee\mV(x)=blue\vee\mV(x)=none)\wedge\neg root(x)))}$\\
        &&$\mrm{~~~~~~~~~~~~~\wedge\mV(y)=none\wedge\neg root(y))}$  \\
        && $\mrm{~~~~~\wedge\A{E}x(\mE(x)=none)}$\\
        & $\Rightarrow$ & 
        $G\Sat\mrm{\A{V}x((\mV(x)=none\wedge\neg root(x))}$\\
        &&$\mrm{~~~~~~~~~~~~~\vee((\mV(x)=red\vee\mV(x)=blue\vee\mV(x)=none)\wedge\neg root(x)))}$\\
        && $\mrm{~~~~~\wedge\A{E}x(\mE(x)=none)}$\\
        & $\Rightarrow$ & 
        $G\Sat\mrm{\A{V}x((\mV(x)=none\vee\mV(x)=blue\vee\mV(x)=red)\wedge\neg root(x))}$ \\
        && $\mrm{~~~~~\wedge\A{E}x(\mE(x)=none)}$
    \end{tabular}\end{small}\\
    
    \item \textit{Proof of} $f\wedge$Fail$(\mtt{init;Colour!})$ \textit{implies} $e$.\\
    \begin{small}
    \begin{tabular}{lcl}
        \multicolumn{3}{l}{$G\Sat f\wedge$Fail$(\mtt{init;Colour!})$}\\
        & $\Leftrightarrow$ & $G\Sat\mrm{\A{V}x((\mV(x)=none\vee\mV(x)=blue\vee\mV(x)=red)\wedge\neg root(x))}$\\
        &&$\mrm{~~~~~\wedge\neg\E{V}x(\mV(x)=none\wedge\neg root(x))\wedge\A{E}x(\mE(x)=none)}$ \\
        & $\Rightarrow$ & $G\Sat\mrm{\A{V}x((\mV(x)=blue\vee\mV(x)=red)\wedge\neg root(x))\wedge\A{E}x(\mE(x)=none)}$
    \end{tabular}\end{small}\\
    
    \item \textit{Proof of} $f\wedge$Fail$(\mtt{unmark})$ \textit{implies} $c\vee d$.\\
    \begin{small}
    \begin{tabular}{lcl}
        \multicolumn{3}{l}{$G\Sat f\wedge$Fail$(\mtt{unmark})$}\\
        & $\Leftrightarrow$ & $G\Sat\mrm{\A{V}x((\mV(x)=none\vee\mV(x)=blue\vee\mV(x)=red)\wedge\neg root(x))}$\\
        &&$\mrm{~~~~~\wedge\neg\E{V}x(\mV(x)\neq none\wedge\neg root(x))\wedge\A{E}x(\mE(x)=none)}$ \\
        & $\Rightarrow$ & $G\Sat\mrm{\A{V}x(\mV(x)=none\wedge\neg root(x))\wedge\A{E}x(\mE(x)=none)}$\\
        & $\Rightarrow$ & $G\Sat (\mrm{\A{V}x(\mV(x)=none\wedge\neg root(x))\wedge\A{E}x(\mE(x)=none)})\vee d$
    \end{tabular}\end{small}\\
    
    \item \textit{Proof of} $e\wedge$Fail$(\mtt{Illegal})$ \textit{implies} $d$.\\
    \begin{small}
    \begin{tabular}{lcl}
        \multicolumn{3}{l}{$G\Sat e\wedge$Fail$(\mtt{Illegal})$}\\
        & $\Leftrightarrow$ & $G\Sat\mrm{\A{V}x((\mV(x)=blue\vee\mV(x)=red)\wedge\neg root(x))\wedge\A{E}x(\mE(x)=none)}$\\
        &&$\mrm{~~~~~~\wedge\neg\E{E}x(((\mV(s(x))=red\wedge\mV(t(x))=red)}$\\
        &&$\mrm{~~~~~~~~~~~~~~~~\vee(\mV(s(x))=blue\wedge\mV(t(x))=blue))\wedge s(x)\neq t(x))}$ \\
        & $\Rightarrow$ & $G\Sat\mrm{\A{V}x(\mV(x)=blue\vee\mV(x)=red)}$\\
        &&$\mrm{~~~~~~\wedge\neg\E{E}x(((\mV(s(x))=red\wedge\mV(t(x))=red)}$\\
        &&$\mrm{~~~~~~~~~~~~~~~~\vee(\mV(s(x))=blue\wedge\mV(t(x))=blue))\wedge s(x)\neq t(x))}$ \\
        & $\Rightarrow$ & $G\Sat \mrm{\A{V}x((\mV(x)=red\vee\mV(x)=blue))}$\\
        &&$~~~~~~~\mrm{\wedge\neg\E{E}x(\mV(s(x))=\mV(t(x))\wedge s(x)\neq t(x))}$
    \end{tabular}\end{small}\\
    
    \item \textit{Proof of} $e\wedge$Success$(\mtt{Illegal})$ \textit{implies} $f$.\\
    \begin{small}
    \begin{tabular}{lcl}
        \multicolumn{3}{l}{$G\Sat e\wedge$Success$(\mtt{Illegal})$}\\
        & $\Rightarrow$ &
        $G\Sat e$\\
        & $\Rightarrow$ &$G\Sat\mrm{\A{V}x((\mV(x)=blue\vee\mV(x)=red)\wedge\neg root(x))\wedge\A{E}x(\mE(x)=none)}$\\
        & $\Rightarrow$ & $G\Sat\mrm{\A{V}x((\mV(x)=none\vee\mV(x)=blue\vee\mV(x)=red)\wedge\neg root(x))}$\\
        &&$\mrm{~~~~~~\wedge\A{E}x(\mE(x)=none)}$
    \end{tabular}\end{small}\\
\end{enumerate}

\section{Related Work}
\label{sec:related_work}
Hoare-style verification of graph programs with attributed rules was introduced in \cite{PoskittP12,Poskitt13}, using E-conditions which generalise the nested graph conditions of Habel and Pennemann \cite{HP09,Pennemann09}. E-conditions do not cover rooted rules or the $\mtt{break}$ command, which are considered in our first-order formulas. More importantly, the approach of \cite{PoskittP12,Poskitt13} can only handle programs in which the conditions of branching commands and loop bodies are rule set calls. Our syntactic calculus \textsf{SYN} covers a larger class of graph programs, viz.\ programs where the condition of each branching command is a loop-free program, and each loop body is an iteration command. This allows us, in particular, to verify many programs with nested loops. Besides this increased power, we believe that assertions in the form of first-order formulas are easier to comprehend by programmers than nested graph conditions of some form. 


As argued at the end of the previous section, we cannot express SLP$(c,P)$ or WLP$(P,c)$ for arbitrary assertions $c$ and graph programs $P$ as first-order formulas. In \cite{HP09,Pennemann09}, there is a construction of Wlp$(c,P!)$ by using an infinite formula. Here, we do not use a similar trick but stick to standard finitary logic. The papers \cite{DijkstraS90,Jones-Roscoe-Wood10a} do not give constructions for syntactic strongest liberal postconditions or weakest liberal postconditions either. Instead, similar to the consequent of our inference rule [alap], the conjunction of a loop invariant and a negated loop condition is considered as an ``approximate" strongest liberal postcondition. 

In \cite{Brenas-Echahed-Strecker18b}, the authors design an imperative programming language for manipulating graphs and give a Hoare calculus based on weakest preconditions. Programs manipulate the graph structure only and do not contain arithmetic. Assertions are formulas of the so-called guarded fragment of first-order logic, which is decidable. This relatively weak logic makes the correctness of programs decidable. 

Our goal is different in that we want a powerful assertion language that can specify many practical algorithms on graphs. (In fact, we plan to extend our logic to monadic second-order logic in order to express non-local properties such as connectedness, colourability, etc.) In our setting, it is easily seen that correctness is undecidable in general, even for trivial programs. For example, consider Hoare triples of the form $\{\mrm{true}\} \mtt{skip} \{d\}$ where d is an arithmetic formula (without references to nodes or edges). Such a triple is partially (and totally) correct if and only if d is true on the integers. But our formulas include Peano arithmetic and hence are undecidable in general \cite{Monk76a}. Thus, even for triples of the restricted form above, correctness is undecidable.

\section{Conclusion and Future Work}
\label{sec:conclusion}
We have shown how to construct a strongest liberal postcondition for a given conditional rule schema and a precondition in the form of a first-order formula. Using this construction, we have shown that we can obtain a strongest liberal postcondition over a loop-free program, and construct a first-order formula for SUCCESS$(C)$ for a loop-free program $C$. Moreover, we can construct a first-order formula for FAIL$(P)$ for an iteration command $P$. Altogether, this gives us a proof calculus that can handle more programs than previous calculi in the literature, in particular we can now handle certain nested loops.

However, the expressiveness of first-order formulas over the domain of graphs is quite limited. For example, one cannot specify that a graph is connected by a first-order formula. Hence, in the near future, we will extend our formulas to monadic second-order formulas to overcome such limitations \cite{Cou12}. 

Another limitation in current approaches to graph program verification is the inability to specify isomorphisms between the initial and final graphs \cite{WP18}. Monadic second-order transductions can link initial and final states by expressing the final state through elements of the initial state \cite{Cou12}. We plan to adopt this technique for graph program verification in the future. 

\bibliographystyle{abbrv}
\bibliography{firstorder}

\end{document}